\newcommand{\RED}[1]{{ \color{red}{#1}}}
\newcommand{\bm}[1]{\mathbf{#1}}
\newcommand{\deff}{\, := \,}
\newcommand{\mex}{D}
\newcommand{\letr}{\mathtt{let}}
\newcommand{\lett}[2]{\mathtt{let~} x=#1 \mathtt{~in~} #2}
\newcommand{\plett}[3]{\mathtt{let~} #1=#2 \mathtt{~in~} #3}
\newcommand{\ee}{{\bf E}}
\renewcommand{\ss}{\bm S}
\newcommand{\seq}[1]{\{#1_n\}_{n\in\Nat}}
\newcommand{\cbv}{{\mathtt{cbv}}}
\newcommand{\cbn}{{\mathtt{cbn}}}
\newcommand{\cps}{{\mathtt{cps}}}
\newcommand{\red}{\rightarrow}
\newcommand{\full}{\rightrightarrows}
\newcommand{\Red}{\Rightarrow}
\renewcommand{\int}{\textsf {int}}
\newcommand{\PLambda}{\Lambda_\oplus}
\newcommand{\Val}{\mathcal V}
\newcommand{\st}{\mid}
\newcommand{\two}{\frac{1}{2}}
\newcommand{\four}{\frac{1}{4}}
\newcommand{\Nat}{\mathbb{N}}
\newcommand{\Qnum}{\mathbb{Q}}
\newcommand{\TVar}{\textsc{Var}}
\newcommand{\TVal}{\textsc{Val}}
\newcommand{\TZero}{\textsc{Zero}}
\newtheorem{theorem}{Theorem}
\newtheorem{lemma}[theorem]{Lemma}
\newtheorem{property}[theorem]{Property}
\newtheorem{fact}[theorem]{Fact}
\newtheorem{example}[theorem]{Example}
\newtheorem{remark}[theorem]{Remark}
\newtheorem{thm}[theorem]{Theorem}
\newtheorem{prop}[theorem]{Proposition}
\newtheorem{cor}[theorem]{Corollary}
\theoremstyle{definition}
\newtheorem{Def}[theorem]{Definition}
\newtheorem*{theorem*}{Theorem}
\newtheorem*{proposition*}{Proposition}
\newtheorem*{prop*}{Proposition}
\newtheorem*{lemma*}{Lemma}
\newtheorem*{remark*}{Remark}
\newtheorem*{notation*}{Notation}
\newtheorem*{fact*}{Fact}
\newcommand{\lam}{\lambda}
\newcommand{\ie}{\emph{i.e.}\xspace}
\newcommand{\eg}{\emph{e.g.}\xspace}
\newcommand{\ih}{\emph{i.h.}\xspace}
\newcommand{\blue}[1]{{\color{blue}{#1}}}
\newcommand{\green}[1]{{\color{green}{#1}}}
\newcommand{\DST}[1]{\mathtt{DST}(#1)}
\newcommand{\MDST}[1]{\mathtt{MDST}(#1)}
\newcommand{\m}{\mathtt m}
\newcommand{\norm}[1]{\|#1\|}
\newcommand{\nnorm}[1]{\|#1^{\Val}\|}
\newcommand{\iI}{i \in I}
\newcommand{\jJ}{j \in J}
\newcommand{\kK}{k  \in K}
\newcommand{\oset}[3][0ex]{%
	\mathrel{\mathop{#3}\limits^{
			\vbox to#1{\kern0\ex@
				\hbox{$\tiny#2$}\vss}}}}
\newcommand{\ma}{\mathcal A}
\newcommand{\mb}{\mathcal B}
\newcommand{\mA}{\mathcal A}
\newcommand{\mB}{\mathcal B}
\newcommand{\mC}{\mathcal C}
\newcommand{\at}{\mathtt a}
\newcommand{\bt}{\mathtt b}
\newcommand{\ct}{\mathtt c}
\newcommand{\At}{\mathtt A}
\newcommand{\Bt}{\mathtt B}
\newcommand{\arrow}{\rightarrow}
\newcommand{\der}{\vdash}
\newcommand{\dem}{~\triangleright~}
\newcommand{\size}[1]{|#1|}
\newcommand{\sz}[1]{\size{#1}}
\newcommand{\dsum}{\bigsqcupplus}
\newcommand{\dist}[1]{\left\langle #1 \right\rangle}
\newcommand{\mdist}[1]{\left\langle #1 \right\rangle}
\newcommand{\mset}[1]{\left[#1\right]}
\newcommand{\mul}[1]{[#1]}
\newcommand{\zero}{\mathbf 0}
\newenvironment{varitemize}
{
	\begin{list}{\labelitemi}
		{\setlength{\itemsep}{0pt}
			\setlength{\topsep}{0pt}
			\setlength{\parsep}{0pt}
			\setlength{\partopsep}{0pt}
			\setlength{\leftmargin}{15pt}
			\setlength{\rightmargin}{0pt}
			\setlength{\itemindent}{0pt}
			\setlength{\labelsep}{5pt}
			\setlength{\labelwidth}{10pt}
	}}
	{
	\end{list}
}
\newcounter{numberone}
\newcommand{\ovdash}[1]{\overset{\blue{#1}}{\vdash}}
\newcommand{\oder}{\ovdash}
\newcommand{\subs}[2]{ \{#2{/}#1\} }
\renewcommand{\for}{\pmb{.}}
\newcommand{\PAST}{\texttt {PAST}\xspace}
\newcommand{\AST}{\texttt {AST}\xspace}
\newcommand{\ProbTerm}[1]{\mathtt{PTerm}\big(#1\big)}
\newcommand{\ETime}[1]{\mathtt{ETime}\big(#1\big)}
\newcommand{\eval}[2]{{\mathtt{PTerm}_{#1} \big(#2\big)}}
\newcommand{\etime}[2]{\mathtt{ETime}_{#1}\big(#2\big)}
\newcommand{\X}{\mathbb X}
\newcommand{\pPi}[2]{\Pi^{#1}_{#2}}
\newcommand{\pSigma}[2]{\Sigma^{#1}_{#2}}
\newcommand{\type}{\mathbf T}
\renewcommand{\DST}[1]{\mathcal{D}(#1)}
\renewcommand{\MDST}[1]{\mathcal{M}(#1)}
\newcommand{\MPLambda}{\MDST{\PLambda}}
\renewcommand{\Red}{\full}
\newcommand{\PLambdaCBV}{\PLambda^\cbv}
\newcommand{\version}{1}
\newcommand{\commented}{0} 
\newcommand{\condinc}[2]{\ifthenelse{\equal{\commented}{0}}{#1}{\green {**[[#2]]**}}}
\newcommand{\SLV}[2]{\ifthenelse{\equal{\version}{0}}{#1}{ #2}}
\begin{document}

\author{Ugo Dal Lago}
\affiliation{\institution{Universit\`a di Bologna}\country{Italy}}
\additionalaffiliation{\institution{INRIA Sophia Antipolis}}
\email{ugo.dallago@unibo.it}

\author{Claudia Faggian}
\affiliation{\institution{Universit\'e de Paris, IRIF, CNRS, Paris, France}\country{France}}
\email{claudia.faggian@irif.fr}

\author{Simona Ronchi Della Rocca}
\affiliation{\institution{Universit\`a di Torino}\country{Italy}}
\email{ronchi@di.unito.it}

\title{Intersection Types and (Positive) Almost-Sure Termination}

\begin{abstract}
  Randomised higher-order computation can be seen as being captured by
  a $\lambda$-calculus endowed with a single algebraic operation,
  namely a construct for binary probabilistic choice. What matters
  about such computations is the \emph{probability} of obtaining any
  given result, rather than the \emph{possibility} or the
  \emph{necessity} of obtaining it, like in (non)deterministic
  computation. Termination, arguably the simplest kind of reachability
  problem,  can be spelled out in at least two ways, depending on
  whether it talks about the probability of convergence or about the
  expected evaluation time, the second one providing a stronger
  guarantee. In this paper, we show that intersection types are
  capable of precisely characterizing \emph{both} notions of
  termination inside \emph{a single} system of types: the probability
  of convergence of any $\lambda$-term can be underapproximated by its
  \emph{type}, while the underlying derivation's \emph{weight} gives a
  lower bound to the term's expected number of steps to normal
  form. Noticeably, both approximations are tight---not only soundness
  but also completeness holds.  The crucial ingredient is
  non-idempotency, without which it would be impossible to reason on
  the expected number of reduction steps which are necessary to
  completely evaluate any term. Besides, the kind of approximation we
  obtain is proved to be \emph{optimal} recursion theoretically: no
  recursively enumerable formal system can do better than that.
\end{abstract}

\begin{CCSXML}
<ccs2012>
<concept>
<concept_id>10003752.10003790.10011740</concept_id>
<concept_desc>Theory of computation~Type theory</concept_desc>
<concept_significance>500</concept_significance>
</concept>
<concept>
<concept_id>10003752.10003753.10003754.10003733</concept_id>
<concept_desc>Theory of computation~Lambda calculus</concept_desc>
<concept_significance>500</concept_significance>
</concept>
<concept>
<concept_id>10003752.10003753.10003757</concept_id>
<concept_desc>Theory of computation~Probabilistic computation</concept_desc>
<concept_significance>500</concept_significance>
</concept>
</ccs2012>
\end{CCSXML}

\ccsdesc[500]{Theory of computation~Type theory}
\ccsdesc[500]{Theory of computation~Lambda calculus}
\ccsdesc[500]{Theory of computation~Probabilistic computation}

\keywords{almost-sure termination, expected time, type systems, intersection types}

\maketitle

\section{Introduction}
The study and analysis of randomised computation is almost as old as
theoretical computer science
itself~\cite{de1956computability,Rabin63,Santos69}. In randomised computation,
algorithms may well violate determinism by performing some inherently
stochastic operations, like the one consisting in triggering
probabilistic choice. In the last fifty years, randomised computation
has been shown to enable efficient algorithms~\cite{Motwani}, but also
secure cryptographic primitives (e.g. public-key
cryptosystems~\cite{GoldwasserMicali}), which are provably impossible
to define in a purely deterministic computational model.

Research on programming languages featuring various forms of random
choice operators has itself a long
history~\cite{kozen1981,SahebDjahromi1978}, but has shown a strong
impetus in the last ten years, due to progress in so-called bayesian
programming languages~\cite{gmrbt2008,TMW2015}, in which not only
probabilistic \emph{choice} is available, but also \emph{conditioning}
has a counterpart inside programs, usually in the form of
\texttt{observe} or \texttt{score} statements. In an higher-order
scenario, the mere presence of a probabilistic choice operator,
however, poses a number of challenges to the underlying theory.
For example, relational reasoning by way of systems of logical
relations~\cite{BB2015}, or by way of coinduction~\cite{DLSA14} has
proved to be possible, although requiring some new ideas, both
definitionally, or in the underlying correctness proof. Moreover,
giving a satisfactory denotational semantics to higher-order languages
with binary probabilistic choice is notoriously
hard~\cite{JonesPlotkin1989,jungtix1998}, and has been solved in a
completely satisfactory way only relatively
recently~\cite{ETP14,goubault2015}.

\paragraph{Types and Verification.}
Verification of deterministic higher-order programs can be carried out
in many ways, including model checking~\cite{Ong2006}, abstract
interpretation~\cite{Cousot1997}, and type
systems~\cite{Pierce2002}. Among the properties one is interested in
verifying programs against, safety and reachability are arguably the
simplest ones. Type systems, traditionally conceived as lightweight
methodologies ensuring safety (hence the slogan ``well-typed programs
cannot go wrong''), can also be employed to check reachability and
termination~\cite{HPS1996,UrzyczynLecturesCurryHoward}. This
idea has been brought to its extreme consequences by the line of work
on intersection types~\cite{CoppoDezani1978,CDV81}, which not only \emph{guarantee} termination,
but also \emph{characterise} it, this way providing a compositional
presentation of \emph{all} and \emph{only} the terminating
programs. Indeed, intersection types can be seen as giving
\emph{semantics} to higher-order programs~\cite{BCDC1983}, and
also to support program verification in subrecursive
languages~\cite{Kobayashi2009}.

\paragraph{On Probabilistic Termination's Double Nature.}
But what it means for a \emph{probabilistic} program to
\emph{terminate} or---slightly more generally---to \emph{reach a
  state} in which certain conditions hold? A first answer consists in
considering a program terminating if the probability of divergence is
null, namely if the program is \emph{almost-surely terminating} (\AST
for short). This way, even when the possibility of diverging is still
there, it has null probability. This, however, does not mean that
\emph{the time} to termination (better, the \emph{expected} time to
termination) is finite: this is a stronger and computationally more
meaningful requirement, called \emph{positive}\footnote{The term was
  introduced in \cite{BournezG05}, but the requirement that the program be
  expected to terminate is natural and fundamental, and was already
  present in \cite{SahebDjahromi1978}.} almost-sure termination (shortened to \PAST
in the following).  It is in fact well-known that checking programs
for (positive) almost-sure termination turns out to be strictly
harder, recursion theoretically, than checking termination of
deterministic programs~\cite{KKM2019}: both almost-sure termination and
positive almost-sure termination are not recursively enumerable, and
have \emph{incomparable} recursion-theoretic statuses, the former
being $\pPi{0}{2}$-complete, the latter being $\pSigma{0}{2}$-complete. The
discrepancy with the realm of deterministic calculi
can be seen also in sub-universal languages: recently, Kobayashi, Dal
Lago and Grellois~\cite{KDLG2019}, have shown that model checking
reachability properties is undecidable in probabilistic higher-order recursion
schemes, while the same problem is
well known to be decidable in their deterministic and nondeterministic
siblings~\cite{Ong2006}. More generally, the nature of
probabilistic termination in presence of higher types
is still not completely understood, and is fundamentally different
from the one of its deterministic counterpart.

\paragraph{Some Natural Questions.}
Given the rich theory that  the programming language community has been able
to build for the deterministic $\lambda$-calculus, a number of
questions naturally arise. Is it possible to \emph{faithfully}
and \emph{precisely} reflect the expected time to termination by a system of types?
What are the limits to the expressive power of such a system, given
the aforementioned recursion theoretic limitations? Do intersection
types can be of help, given their successes in characterising various
notions of termination in a deterministic setting?  These questions
are natural ones, but have remained unanswered so far. This paper is
the first one giving answers to them.

\paragraph{Contributions.}
We show here that intersection types indeed capture \emph{both forms}
of probabilistic termination in untyped probabilistic
$\lambda$-calculi. More specifically, we define a system of
non-idempotent intersection types such that from any type derivation
for a given term $M$, one can extract (in an effective, and even
efficient, way) both a lower bound to the \emph{expected time} to termination
for $M$, and a lower bound to $M$'s \emph{probability} of termination.
Remarkably, both kinds of bounds are tight, i.e.  for every
$\varepsilon>0$ there is a type derivation for $M$ which gives an
$\varepsilon$-precise bound to both the probability of and the
expected time to termination.  The main novelty of the proposed
methodology is the presence of distinct ingredients \emph{within} the
same type system, namely monadic types~\cite{DLG2019}, intersection
types~\cite{CoppoDezani1978}, and
non-idempotency~\cite{Carvalho2018}. Their contemporary presence  forces us to switch from a purely qualitative notion of
intersection (i.e.  multisets) to a quantitative one (i.e. scaled
multisets). This is necessary to appropriately deal with the multiple
uses of program variables in presence of probabilistic choice. In
view of the non-recursive enumerability of either kinds of
probabilistic termination, taking type derivations as
\emph{approximate} witnesses to termination, rather than \emph{proper}
ones, indeed makes sense, and is the best one can do: we prove
that any (recursively enumerable) system of types for a probabilistic
$\lambda$-calculus is either unsound or incomplete as a way to
\emph{precisely} verify termination properties of pure
$\lambda$-terms. In other words, one cannot do better
than what we do. Remarkably, all results we give in this paper 
hold for both call-by-value and  call-by-name evaluation, but we prefer to
give all the details of the a system of the former kind, arguably a
more natural one in presence of effects.

\SLV{An extended version of this paper with proofs and more
	details is available~\cite{LVarxiv}.}
{This report is an extended version of \cite{DLFRpopl21}.}


\section{A Gentle Introduction to Intersection Types, Termination, and Randomization}\label{sec:gentleintro}

This section is meant to introduce the non-specialist to intersection
types\footnote{Our introduction to  non-idempotent intersection types  is inspired by that in \cite{KesnerV20}} seen as a characterisation of terminating deterministic
programs, and to the challenges one faces when trying to generalise
intersection types to calculi featuring binary probabilistic
choice.
\subsection{Intersection Types and Termination}\label{sec:intro_IT}
Suppose we work within a simple functional programming language,
expressed as a call-by-value (CbV) $\lambda$-calculus $\Lambda^\cbv$
in A-normal form.  Values and terms are generated through the
following grammars:
\begin{align*}
  V &::= x \mid \lambda x.M & \mbox{\textbf{Values}, }\Val^\cbv\\
  M &::= V \mid VV \mid \lett M M & \mbox{\textbf{Terms}, }\Lambda^\cbv
\end{align*}
Evaluation of closed terms is captured by two reduction rules, namely
$(\lambda x.M)V\rightarrow M\{V/x\}$ and $\lett V M\rightarrow
M\{V/x\}$, which can be applied in any evaluation contexts, i.e. in
any expression from the grammar $E::=[\cdot]\mid\lett E M$.  As customary when working with
functional languages, evaluation is weak (\ie, no reduction can take
place in the body of a $\lambda$-abstraction).

This language can be seen as a fragment of Plotkin's CbV
$\lambda$-calculus~\cite{PlotkinCbV} in which the latter can be faithfully
embedded\footnote{an application $MN$ becomes the term
  $\plett{x}{M}{\plett{y}{N}{xy}}$.}. As such, the calculus is easily
seen to be Turing-universal, and  termination is thus an
undecidable---although recursively enumerable---problem.  How could we
\emph{compositionally} guarantee termination of those $\lambda$-terms?
The classic answer to the question above consists in endowing the
calculus with a system of types. As an example, a system of \emph{simple}
types for the terms in $\Lambda^\cbv$ is in Figure~\ref{fig:simpletypes}, where
types are either an atom $\alpha$ or an arrow type
$\At\rightarrow\Bt$.  A simple reducibility-like argument indeed shows
that typability ensures termination.
\begin{figure*}\centering
    \begin{minipage}{.97\textwidth}
      \vspace{10pt}
      \begin{minipage}{.35\textwidth}
        $$
        \At::=\alpha\mid\At\rightarrow\At
        $$
      \end{minipage}
    \begin{minipage}{.6\textwidth}
   {\small  $$
    \infer{\Gamma,x:\At\vdash x:\At}{}
    \qquad
    \infer{\Gamma\vdash\lambda x.M:\At\rightarrow \Bt}{\Gamma,x:\At\vdash M:\Bt}
    $$
    \vspace{-5pt}
    $$
    \infer{\Gamma\vdash VW:\Bt}{\Gamma\vdash V:\At\rightarrow\Bt & \Gamma\vdash W:\At}
    \qquad
    \infer{\Gamma\vdash \lett{N}{M}:\Bt}{\Gamma\vdash N:\At & \Gamma,x:\At\vdash M:\Bt}    
    $$}
    \end{minipage}
  \end{minipage}
  \caption{Simple Types}\label{fig:simpletypes}
\end{figure*}
The converse does not hold, i.e. simple types are highly incomplete as
a way to type terminating terms. As an example, self application,
namely the value $\lambda x.xx$, is not simply-typable even if terminating,
since the variable $x$ cannot be assigned both the type $\At$ and the type
$\At\rightarrow\Bt$.

One way to go towards a type system complete for termination consists
in resorting to some form of polymorphism. For example, \emph{parametric}
polymorphism in the style of System $\mathbb{F}$~\cite{Girard1971} dramatically
increases the expressive power of simple types by way of a form of
(second-order) quantification: the type $\forall\alpha.\At$
stands for all types which can be obtained as formal instances of
$\At$. Parametric polymorphism, however, is not enough to get to a
complete system, which can instead be built around \emph{ad-hoc} polymorphism:
rather than extending simple types by way of quantifiers,
one can enrich types with intersections in the form of
\emph{finite sets} of types $\mA=\{\At_1,\ldots,\At_n\}$, and take
arrow types as expressions in the form $\mA\rightarrow\mB$. The type
$\mA$ can be assigned to terms which have type $\At_j$ \emph{for
  every} $j\in\{1,\ldots,n\}$. The resulting type
system is in Figure~\ref{fig:intersectiontypes}, and is well-known to
be both sound \emph{and complete} for termination. 
\begin{figure*}\centering
    \begin{minipage}{.97\textwidth}
      \vspace{7pt}
      \begin{minipage}{.35\textwidth}
        \begin{align*}
          \At&::=\mA\rightarrow\mA\\
          \mA&::=\{\At_1,\ldots,\At_n\}
        \end{align*}
        \end{minipage}
    \begin{minipage}{.6\textwidth}
   {\small  $$
    \infer{\Gamma,x:\mA\vdash x:\mA}{}
    \qquad
    \infer{\Gamma\vdash\lambda x.M:\mA\rightarrow \mB}{\Gamma,x:\mA\vdash M:\mB}
    \qquad
    \infer{\bigcup_i\Gamma_i\vdash V:\{\At_i\}_{i\in I}}{\{\Gamma_i\vdash V:\At_i\}_{i\in I}}
    $$
    \vspace{-5pt}
    $$
    \infer{\Gamma\vdash VW:\mB}{\Gamma\vdash V:\mA\rightarrow\mB & \Gamma\vdash W:\mA}
    \qquad
    \infer{\Gamma\vdash \lett{N}{M}:\mB}{\Gamma\vdash N:\mA & \Gamma,x:\mA\vdash M:\mB}    
    $$}
    \end{minipage}
  \end{minipage}
  \caption{Idempotent Intersection Types for $\Lambda^\cbv$}\label{fig:intersectiontypes}
\end{figure*}

There is even more.  One can make type derivations capable of
reflecting \emph{quantitative} kinds of information such as the number of
required evaluation steps, rather than merely termination (which
is \emph{qualitative} in nature). This requires taking intersection
types not as sets, but rather as \emph{multi}sets, i.e.
$\mA=[\At_1,\ldots,\At_n]$. This form of intersection type is dubbed
\emph{non-idempotent}, due to the non-idempotency of multiset
unions and intersections.  Type environments need now be treated multiplicatively
rather than additively, this way giving a linear flavour to the type
system. In non-idempotent intersection types,  a natural
number $w$ can be assigned to any type derivation 
in such a way that $\ovdash{w}M:[]$  (where $[]$ is the empty multiset seen as an intersection type)
\emph{if and only if} $M$ can be reduced to normal form in \emph{exactly} $w$ steps.
The resulting system is in Figure~\ref{fig:nonidempintersectiontypes}, and
is essentially the one from~\cite{AGL19}.
\begin{figure*}\centering
    \begin{minipage}{.97\textwidth}
      \begin{minipage}{.25\textwidth}
        \begin{align*}
          \At&::=\mA\rightarrow\mA\\
          \mA&::=[\At_1,\ldots,\At_n]
        \end{align*}
        \end{minipage}
    \begin{minipage}{.7\textwidth}
  {\small   $$
    \infer{x:\mA\ovdash{0} x:\mA}{}
    \qquad
    \infer{\Gamma\ovdash{w+1}\lambda x.M:\mA\rightarrow \mB}{\Gamma,x:\mA\ovdash{w} M:\mB}
    \qquad
    \infer{\uplus_i\Gamma_i\ovdash{\sum_{i}w_i} V: [\At_i]_{i\in I}}{\{\Gamma_i\ovdash{w_i} V:\At_i\}_{i\in I}}
    $$
    $$
    \infer{\Gamma\uplus\Delta\ovdash{w+v} VW:\mB}{\Gamma\ovdash{w} V:\mA\rightarrow\mB & \Delta\ovdash{v} W:\mA}
    \qquad
    \infer{\Gamma\uplus\Delta\ovdash{w+v+1} \lett{N}{M}:\mB}{\Gamma\ovdash{w} N:\mA & \Delta,x:\mA\ovdash{v} M:\mB}    
    $$}
    \end{minipage}
  \end{minipage}
  \caption{Non-Idempotent Intersection Types for $\Lambda^\cbv$}\label{fig:nonidempintersectiontypes}
\end{figure*}

\subsection{Typing Termination in a Probabilistic Setting}
How about probabilistic $\lambda$-calculi? Can the story in
Section~\ref{sec:intro_IT} be somehow generalised to such calculi?
Endowing the class of terms with an operator for
fair\footnote{Accommodating an operator for general binary
  probabilistic choice (e.g. in the form $\oplus_q$, where $q$ is a
  rational between $0$ and $1$) would be harmless, but would result in
  heavier notation; we thus prefer to stick to the fair case.} binary
probabilistic choice is relatively easy: the grammar of terms needs to
be extended by way of the production $M::= M\oplus M$, and the term
$M\oplus N$ evolves to either $M$ or $N$ with probability
$\frac{1}{2}$, turning reduction on terms from a deterministic
transition system to a Markov Chain with countably many states. Let us
illustrate all this by way of an example, which will be our running
example throughout the paper.
\begin{example}[Running Example]\label{ex:running}\label{ex:main}
  Let us consider the term $\mex\mex$ where $\mex=\lam x. (xx\oplus
  I)$, and $I$ is the identity $\lambda y.y$. The program $\mex\mex$
  reduces to $\mex\mex\oplus I $, which in turn reduces to either
  $\mex\mex$ or to $I$ with equal probability $1/2$. It is easy to see
  that after $2n$ steps, $\mex\mex$ has terminated with probability
  $\sum_1^n \frac{1}{2^n}$: while running $\mex\mex$, only
  one among the $2^n$ possible outcomes of the $n$ 
  coin-flips results in the term staying at $\mex\mex$, all the others
  leading to $I$.  Noticeably, the expression above tends to $1$ when
  $n$ tends to infinity.  By weighting the steps with their
  probability, we have that the expected number of steps for
  $\mex\mex$ to terminate is $4$. In other words $\mex\mex$ is not
  only almost-surely terminating, but positively so.
\end{example}
As this example shows, despite the minimal changes to the underlying
operational semantics, \emph{reasoning} about randomised computations
can be more intricate than in the usual deterministic setting. More specifically:
\begin{varitemize}
\item
  \emph{Output}. While a deterministic program maps inputs to outputs, a
  probabilistic program maps inputs to \emph{distributions} of
  outputs. For example, $\mex\mex$ evaluates to 
  the Dirac distribution where all the probability is concentrated in the term $I$. 
 Notice that this level of certitude is reached only
  at the limit, not  in any finite amount of steps.
\item
  \emph{Termination}. A deterministic program either terminates on a given
  input or not.  As we mentioned in the Introduction, a probabilistic
  program may give rise to diverging runs, still being almost-surely terminating.
  This is precisely what happens when evaluating $\mex\mex$:
  there is one run, namely the one always staying at $\mex\mex$, which
  diverges, but this run has of course null probability.
\item
  \emph{Runtime}. If a deterministic program terminates, it reaches
  its final state in finitely many steps, and we interpret this number as
  the \emph{time to} termination. In the probabilistic case, what interests
  us is rather the \emph{expected} number of steps, that is the average number
  of steps of the program's runs. Such expected value
  may or may not be finite, even in the case of \AST programs.
  When evaluating $\mex\mex$, this number is finite, but it arises (once again)
as   the sum of an infinite numerical series.
\end{varitemize}

\newcommand{\mexvar}{C} 
\newcommand{\cex}{C}
\newcommand{\SUCC}{\mathit{SUCC}}
\newcommand{\const}{s}

\renewcommand{\succ}[1]{{\mathbf{succ}}(#1)}
\newcommand{\succn}[2]{{\mathbf{succ}}^{#1}(#2)}
\newcommand{\nzero}{\overline{0}}

\newcommand{\eex}{E}
\newcommand{\EXPL}{\mathit{EXP}}
\newcommand{\expl}[1]{{\mathbf{exp}}(#1)}

Small variations on Example~\ref{ex:main} are sufficient to obtain
terms whose behavior is more complex than that of $\mex\mex$.  The
following example illustrate that a term $M$ can reach
\emph{countably} many distinct normal forms and intermediate values, and
that almost-sure termination does \emph{not} imply \emph{positive} almost-sure termination.
\begin{example}\label{ex:succ}
  For every natural number $n$, let $\overline{n}$ be an encoding of
  it as a $\lambda$-term, and let $\SUCC$ and $\EXPL$ be terms which
  encode the successor and the exponential function, respectively.
  \begin{varitemize}
  \item
    Consider the term $\cex\cex$ where $\cex=\lam x.(\succ{xx}\oplus
    \nzero)$, and where  $\succ M$ is  syntactic sugar for $(\plett z M
          {\SUCC\,z})$. Note that $\succ{\overline{n}}$
          reduces to $\overline{n+1}$ in constant time $\const$ ($\const\in \Nat$), of course
          depending on the chosen encoding.  The program $\cex\cex$
          reduces---at the limit---to each natural number
          $\overline{n}$ with probability $\frac{1}{2^{n+1}}$. It is
          clear that $\cex\cex$ is \AST, and it is easy to check that
          it is also \PAST; indeed it is expected to terminate in
          $4+ \const$ steps.  However its reduction graph, contrarily to the
          one of $\mex\mex$, involves \emph{infinitely many normal
            forms}.
  \item
    The term $\expl{CC}$, where $\expl M$ is syntactic sugar for $(\plett z M
    {\EXPL\, z})$, is a term which is still almost-surely
      terminating, but \emph{not positively}. Indeed, its expected runtime
    is infinite.
\end{varitemize}
\end{example}
\newcommand{\mexf}{\mex_1}
\newcommand{\mexa}{\mex_2}

All this shows that typing probabilistically terminating
programs requires us to go  significantly beyond classic
intersection type theories, but also beyond the few attempts on
	type theories for probabilistic $\lambda$-calculi in the literature.

Let us now take a look at how the term $\mex\mex$ could be given an
intersection type, in a way reflecting its being (positively)
almost-surely terminating. Let us write $\mex\mex$ as $\mexf
\mexa$. The term $\mexf$ uses its argument $\mexa$ in two different
ways, the first as a function and the second as an argument to the
same function.  We already know that intersection types are there
precisely for this purpose.  But there are some fundamental
differences here compared to the deterministic case: first of all, the
two copies of $\mexa$ that the function $\mexf$ consumes are
\emph{used} only with probability $\frac{1}{2}$. Moreover, $\mexf$
returns \emph{two} different results, namely $I$ and $\mexa\mexa$,
each with equal probability. These two observations inform how
non-idempotent intersection types can be generalised to a
$\lambda$-calculus with probabilistic choice. Indeed, the multisets
$\mA$ and $\mB$ in an arrow type $\mA\rightarrow\mB$ have to be
\emph{enriched} with some quantitative information:
\begin{varitemize}
\item
  in order to capture the \emph{termination probability}, the
    intersection type $\mB$ needs to be turned into a
    \emph{distribution} of intersection types, reflecting the fact
    that the output of a computation is not \emph{one} single value,
    but rather \emph{a distribution} of them.
\item
  capturing \emph{time expectations} requires typing to become even
  more sophisticated, introducing two novelties:
  \begin{varitemize}
  \item
    The multiset of types $\mA$ needs to carry some information about
    the probability of each copy of the argument to be
    actually used. In other words, elements of $\mA$ needs to be
    \emph{scaled}. Note the discrepancy between the ways $\mA$ and
    $\mB$ are treated: in the former a form of scaled multiset
    suffices, while in the latter a distribution of intersection types
    is needed.
  \item
    Moreover, the type system needs to be capable of dubbing
    divergent terms as having \emph{arbitrarily large}
    evaluation time expectations. Consider, as an example,
    the program $M=I\oplus \Delta\Delta$, where
    $\Delta=\lam x.xx$. In one evaluation step, such a term reduces to
    the value $I$ with probability $\two$ or to the diverging term
    $\Delta\Delta$, with equal probability $\two$.  The expected
    runtime of $M$ is therefore infinite: $1+\sum_{i=1}^\infty\two$.
    Since typing $M$ requires giving a type to $\Delta\Delta$, the
    latter has to be attributed arbitrary large weights, although
    the only type it can receive is for obvious reasons the empty
    distribution.
  \end{varitemize}
  We come back to all this in Section~\ref{sec:examples_CbV}, after formally
  introducing the type system.
\end{varitemize}
The aforementioned ones are not the only novelties of the type system
we introduce in this paper.  Given the already mentioned results by Kaminski et
al. on the hardness of probabilistic
termination~\cite{KKM2019}, in which both notions of
termination are proved \emph{not} to be recursively enumerable, there
is simply no hope to obtain results like the classic ones on deterministic
terms, in which correctness of \emph{one} derivation 
 can serve as a termination certificate (this, to be
fair, if checking type derivations for correctness remains decidable).
The way out consists in looking at a characterisation by way of
\emph{approximations}: a type derivation would not be a witness of
(positive) almost-sure termination by itself, but a witness of some
\emph{lower-bound} on the probability of termination or on the
expected number of steps to termination. The type system needs to be
tailored for this purpose.


\section{A Probabilistic Call-by-Value $\lambda$-Calculus} \label{sec:cbv}
In this section, we formally introduce the minimalistic probabilistic
functional programming language we have sketched in
Section~\ref{sec:intro_IT}, and that we indicate in the following as $\PLambdaCBV$.
We start with some technical definitions, which we will use throughout
the paper.
\subsection{Mathematical Preliminaries}\label{sec:preliminariesM}
\paragraph{Multisets.}
We denote a \emph{finite multiset} (over a set $\X$) as $\mset{\at _j }_{\jJ}$, where the index set $J$ is finite and possibly empty. The
 empty multiset is denoted as $\mset{}$, while elements of a non-empty
 multiset are often enumerated, like in
 $\mset{a,b,c}$. Multiset union is noted $\uplus$.

 \paragraph{Distributions.}
Let $\Omega$ be a \emph{countable} set.  A function
$\mu:\Omega\to[0,1]$ is a probability \emph{subdistribution} if its
\emph{norm} $\norm \mu := \sum_{\omega\in \Omega} \mu(\omega)$ is less
or equal to $1$.  It is a \emph{distribution} if $\norm \mu= 1$.
Subdistributions are the standard way to deal with possibly diverging
probabilistic computations.  We write $\DST{\Omega}$ for the set of
subdistributions on $\Omega$, equipped with the standard pointwise
partial order relation : $\mu \leq \rho$ if $\mu (\omega) \leq \rho
(\omega)$ for each $\omega\in \Omega$.  The \emph{support} of $\mu$ is
the set $\{\omega\mid \mu(\omega)>0\}$.

\paragraph{Multidistributions.}
Suppose $\X$ is a countable set and let $\m$ be a finite multiset of
pairs of the form $pM$, with $p\in(0,1]$, and $M\in \X$. Then
  $\m=\mset{p_iM_i}_{\iI}$ is said to be a \emph{multidistribution
      on $\X$} if $ \norm \m := \sum_{\iI} p_i \leq 1$.  For
    multidistributions, we use the notation $
    \m=\mdist{p_iM_i}_{\iI}$. The empty multidistribution is indicated as
    $\zero$ (note that $\norm \zero =0 $).  We denote by $\MDST \X$
    the set of all multidistributions on $\X$.  We indicate the
    multidistribution $\mdist{1M}$ simply as $\mdist{M}$.  The
    (disjoint) sum of multidistributions is denoted as $\dsum$, and is a \emph{partial} operation.  The
    product $q\cdot \m$ of a scalar $q<1$ and a multidistribution $\m$
    is defined pointwise: $q\cdot \mdist{ p_{1}M_{1},\ldots,
      p_{n}M_{n}}=\mdist{ (qp_{1})M_{1},\ldots, (qp_{n})M_{n}} $.
    Intuitively, a multidistribution $\m$ is an intensional
    representation of a probability distribution: multidistributions
    do not satisfy the equation $\m=p\cdot\m\dsum(1-p)\cdot\m$. This
    being said, every multidistribution can be made to collapse to
    a distribution, by taking the sum of all of its elements referring
    to the same $M\in\X$.

\subsection{The Language $\PLambdaCBV$}
This section is devoted to introducing the language.
\emph{Values} and \emph{terms} are defined  by the grammar
\begin{align*}
V &::= x \mid \lambda x.M &\mbox{\textbf{Values}, }\Val_\oplus^\cbv\\
M &::= V \mid VV \mid  M \oplus M \mid \lett M M &\mbox{\textbf{Terms}, }\PLambdaCBV
\end{align*}
where $x$ ranges over a countable set of \emph{variables}.  $\PLambdaCBV$
and $\Val_\oplus^\cbv$ denote respectively the set of terms and of values.  Free
and bound variables are defined as usual, while $M \subs x N$ denotes the
term obtained from the capture-avoiding substitution of $N$ for all the free
occurrences of $x$ in $M$.  As usual, a \emph{program} is a closed
term. Throughout the paper we frequently use the following terms as examples: 
\[I:= \lam x.x;\qquad\qquad
\Delta:= \lam x. xx;\qquad\qquad
\mex:=\lam x. (xx\oplus I).
\]
The program $\Delta\Delta$ is the paradigmatic diverging term, while
$\mex\mex$ is our running example.
\subsection{The Operational Semantics}
The operational semantics of $\PLambdaCBV$ is formalized through the
notion of multidistribution as introduced in
Section~\ref{sec:preliminariesM}, following~\cite{ADLY2020}. To understand
why this is a convenient way to describe the probabilistic dynamics
of programs, let us consider how terms in $\PLambdaCBV$ could be evaluated.

The intended dynamics of the term $M\oplus N$ is that it reduces to either $M$
or $N$, \emph{with equal probability} $\two$. That is, the state of
the program after one reduction step is $M$ with probability $\two$
and $N$ with probability $\two$. Consider, as an example, the term
$(I\oplus(II))\oplus II$. Its evaluation is graphically represented in
Figure~\ref{fig:IOIIOI}.
\begin{figure}[ht]
  \centering
    \begin{minipage}{.97\textwidth}
      \centering
      \begin{subfigure}{.47\textwidth}
        \centering\includegraphics[scale=0.81]{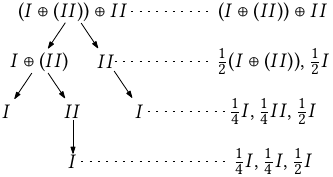}
        \caption{Evaluating $(I\oplus II)\oplus II$}\label{fig:IOIIOI}
      \end{subfigure}
      \begin{subfigure}{.47\textwidth}
        \centering\includegraphics[scale=0.81]{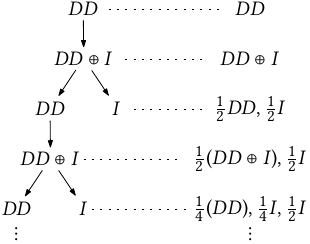}
        \caption{Evaluating $\mex\mex$}\label{fig:DD}
      \end{subfigure}
  \end{minipage}
  \caption{Evaluating some Terms in $\PLambdaCBV$}
\end{figure}
The first computation step consists in performing a probabilistic
choice, proceeding as $I\oplus(II)$ or as $II$ according to its
outcome. While the latter branch ends up in $I$ (which is a value)
in one deterministic step, the former branch proceeds with
another probabilistic choice, which results in either $I$ or $II$.
Finally, after another reduction step, $II$ is reduced to the identity.
To the right of the reduction tree in Figure~\ref{fig:IOIIOI}, one can
see, for each time step, a summary of the ``status'' of all
probabilistic branches, each paired with its probability.  After three steps,
all branches reduce to $I$, and indeed the probability of observing
$I$ when reducing the term is altogether
$\frac{1}{4}+\frac{1}{4}+\frac{1}{2}=1$. A more interesting example
is in Figure~\ref{fig:DD}, and consists in the evaluation of our
running example $\mex\mex$.

All this can be conveniently formalised by means of
multidistributions; each element corresponds to a branch, \ie to a
possible reduction path of the underlying program --- a
multidistribution is essentially a distribution on such
paths\footnote{In the spirit Markov Decision Processes, see \eg
  \cite{Puterman94}.}.  If switching to distributions, we would loose
the precise correspondence with probabilistic branches, since many
branches are collapsed into one. This is the ultimate reason why we
adopt multidistributions, and will be discussed further in
Section~\ref{sec:multi}.

Let $\MDST{\PLambdaCBV}$ denote the set of multidistributions on
(closed) terms.  We define a reduction relation $\Red\subseteq
\MDST{\PLambdaCBV}\times \MDST{\PLambdaCBV}$, given in Figure
\ref{fig:steps} and Figure \ref{fig:lifting}, respectively.  More
precisely, we proceed as follows:
\begin{varitemize}
\item
  We first  define a \emph{reduction relation $\red$}
  from terms to multidistributions, e.g.,  $M\oplus N \red
  \mdist{\two M, \two N}$. The one-step reduction $\red\subseteq
  \PLambdaCBV \times \MDST{\PLambdaCBV}$ is defined in Figure~\ref{fig:steps}.  A
  term $M$ is \emph{normal}, or in \emph{normal form}, if there is no $\m$
  such that $M\red\m$. 
   Please notice that closed terms are in normal form  precisely when they are \emph{values}. 
  Finally, $\red$ is deterministic, i.e., for every
  term $M$ there is \emph{at most one} $\m$ such that $M\red\m$.
\item
 Then we lift reduction of terms to \emph{reduction of
   multidistributions} in the natural way, obtaining $\Red $,
 e.g., $\mdist{\two II, \two (M\oplus N)} \Red $
 $ \mdist{\two I, \four M, \four N}$.  The relation
 $\red\subseteq \PLambdaCBV\times \MDST{\PLambdaCBV}$ lifts to a relation
 $\Red  \subseteq \MDST{\PLambdaCBV}\times \MDST{\PLambdaCBV}$ as defined in
 Figure~\ref{fig:lifting}.  The way $\Red $ is defined implies that
 \emph{all} reducible terms in the underlying multidistributions
 are actually reduced according to $\red$.
\end{varitemize}
\begin{figure}[t]\centering
 		\begin{minipage}[c]{0.97\textwidth}			
 			\[
 			\infer[\beta]{(\lam x.M)V\red \mdist{M\subs x V}}{}  \qquad\qquad
 			\infer[\letr V]{\lett V M  \red   \mdist{M \subs x V}}{}\\[4pt]
                        \]
                        \[
 			\infer[\oplus]{M\oplus N \red \mdist{\two M, \two N}}{} \qquad
 			\infer[\letr C]{(\lett N M)  \red \mdist{p_i (\lett {N_i} M)}}{N\red \mdist{p_iN_i}_{\iI}}
 			\]	
 	 		\end{minipage}			
 	\caption{The One-step Reduction Relation $\red$} \label{fig:steps}
\end{figure}
\begin{figure}[t]\centering
 		\begin{minipage}[c]{0.97\textwidth}			
 			\[
 			\infer[]{\mdist{V}\Red  \mdist{V}}{}\qquad \qquad
 			\infer[]{\mdist{M}\Red  \m}{M\red\m}  \qquad  \qquad 
 			\infer[]{ \mdist{p_{i}M_{i}\mid i\in I} \Red   \dsum_{\iI} {p_i\cdot \m_i}} 
 			{(\mdist{M_i} \Red   \m_i)_{\iI} }
 			\]
 		\end{minipage}	
        \caption{The Lifting of $\red$ }\label{fig:lifting}
\end{figure}

\paragraph{Reduction Sequences.}

A $\Red$-sequence (or \emph{reduction sequence}) from $\m$ is a
sequence $\m=\m_0,\m_1,\m_2,\dots$ such that $\m_{i}
\Red\m_{i+1}$ for every $i$.  Notice that since multidistribution reduction is
deterministic, each $\m_0$ has a unique maximal reduction sequence,
which is infinite and which we write $\seq \m$.  We write
$\m_0\Red^*\m$ to indicate the existence of a finite reduction sequence from $\m_0$, and $\m_0\Red^k
\m$ to specify the number $k$ of $\Red$-steps. Given a term $M$ and $\m_0=\dist M$, the sequence $\m_0 \Red
\m_1\Red\cdots$ naturally models the evaluation of $M$; each $\m_k$ expresses the
``expected'' state of the system after $k$ steps.
\begin{example}\label{ex:main2}
  The term $\mex\mex$ from Example \ref{ex:running}  evaluates
  as follows:
  \begin{align*}
    \mdist {\mex\mex}&
    \Red\mdist{\mex\mex\oplus I} 
    \Red\mdist{\two \mex\mex,\two I}\\
    &\Red 
    \mdist{\two \mex\mex\oplus I,\two I} 
    \Red\mdist{\four \mex\mex, \four I, \two I}\\
    &\Red\mdist{\four \mex\mex \oplus I, \four I, \two I}
    \Red\mdist{\frac{1}{8} \mex\mex, \frac{1}{8} I, \four I, \two I}\Red\cdots
  \end{align*}
  The first three reduction steps match precisely what we have
  informally seen in Figure~\ref{fig:DD}.
\end{example}

\newcommand{\ssym}{\mathbf{succ}}
\begin{example}\label{ex:evalCC}
  The term $\cex\cex$ of Example~\ref{ex:succ} (where $ \cex=(\lam x. \succ{xx}\oplus \nzero) $),  evaluates as follows:
  \begin{align*}
    \mdist{\cex\cex}&\Red \mdist{\succ{\cex\cex} \oplus \nzero } 
    \Red \mdist{\two \succ{\cex\cex}, \two \nzero} \\
    &	\Red
    \mdist{\two \ssym\Big({\succ{\cex\cex} \oplus \nzero} \Big), \two \nzero} 
    \Red \mdist{\four \ssym\Big({ \succ{\cex\cex}}\Big), \four  \succ{\nzero}, \two \nzero}\Red\cdots
  \end{align*}
  where $\succn n {\nzero}	~ {\Red}^{n\const}   ~\overline{n}$.
  Observe that the evaluation of $\cex\cex$ is similar to that of
  $\mex\mex$. However, while in Example~\ref{ex:main2} the term $I$ is
  a value, $\succn n {\nzero}$ is not. It still has to perform $n \const$
  steps (where $\const$ is a constant, see Example~\ref{ex:succ}) in
  order to reduce to the value $\overline {n}$.
\end{example}

\paragraph{Values and Multidistributions.}   
Given a multidistribution $\m\in \MPLambda$, we indicate by $\m^{\Val}$ its restriction
to values.  Hence if $\m=\mdist{p_iM_i}_{\iI}$, then $\nnorm \m :=
\sum_{M_i\in\Val} p_i $. The real number $\nnorm \m
$ is thus the probability that $\m$ is a value, and we will refer
to it this way. 
Looking at Example \ref{ex:main2}, observe that after, e.g., four reduction
steps, $\mex\mex$ becomes the value $I$ with probability $\two+\four$.
More generally, after $2n$ steps, $\mex\mex$ is a value with probability
$\sum_{k=1}^n \frac{1}{2^k}$.

\subsection{Probabilistic Termination in $\PLambdaCBV$}\label{sec:ASTPAST}
Let $M$ be a closed term, and $ \mdist
{M}=\m_0\Red \m_1\Red\m_2\Red\cdots $  the reduction sequence  which models its evaluation. We write
$\eval k M$ for $ \nnorm {\m_k}$, which expresses the probability that
$M$ terminates in at most $k$ steps.

\paragraph{Termination.}
Given a closed term $M$, the \emph{probability of termination of $M$}
is easily defined by $\ProbTerm{M}:=\sup_n \{\eval n {M}\}$. As an example
$\ProbTerm{\mex\mex}$ is
$\sup_n\sum_{k=1}^{n}\frac{1}{2^k}=\sum_{k=1}^\infty\frac{1}{2^k}=1$.

\paragraph{Expected Runtime.}
We now define the \emph{expected runtime} of $M$, following 
the literature~\cite{KKM2019, Fioriti, ADLY2020}.  As pointed out in
\cite{Fioriti} the expected runtime can be expressed\footnote{This because the runtime is a random variable
  taking values into $\Nat$; we therefore can easily compute its
  expectation by using the telescope formula, see \eg
  \cite{Bremaud}, page 27. The equivalence with the formulations we give below
  is spelled out in \cite{ADLY2020}.} in a very
convenient form
as, informally, 
\begin{align*}
  \ETime{M}&= \sum_{k \geq 0} \Pr[``M \mbox{ runs  more than $k$ steps }"]\\
       &=\sum_{k \geq 0} (1- \Pr[``M \mbox{ terminates within $k$ steps }"] ).
\end{align*}
Within our setting, the above is easily formalised as follows:
\[
\ETime{M} = \sum_{k \geq 0} \left(1- \eval k M\right).
\]
This formulation admits a very intuitive interpretation: given
the reduction sequence $\mdist{M}=\m_0\Red \m_1\Red\m_2\Red\cdots$, each
tick in time (\ie\ each $\Red $ step) is weighted with its probability to take place
---more precisely, the probability that a redex
is fired.  Since only (and all) terms which are \emph{not} in normal
form reduce, the system in state $\m_i$ reduces with probability 
$1- \nnorm {\m_i}$.

\paragraph{Finite Approximants.} 
Given a term $M$, the number $\eval n M$ is a finite approximant (the
\emph{$n$-th approximant}) of $\ProbTerm{M}$.  It is useful to define finite
approximants for $\ETime{M}$ too:
\[
\etime n M := \sum_{k=0}^{n-1} \left(1-  \eval k M\right).
\]
Clearly $\ETime {M} = \sup_n\left\{\etime n M\right\}$.

\begin{example}[Expected runtime, and its approximants]\label{ex:etime}
Consider again the evaluation of the term $\mex\mex$; let us decorate
each step $\m_k\Red \m_{k+1}$ with the expected probability that a
redex is actually fired in $\m_k$, that is $1- \nnorm {\m_k }$:
\begin{align*}
\mdist {\mex\mex}
&\overset {\RED 1}	\Red  \mdist{\mex\mex\oplus I} 
\overset {\RED 1}	\Red \mdist{\two \mex\mex,\two I} 
\overset {\RED \two}	\Red \mdist{\two \mex\mex\oplus I,\two I}
\overset {\RED \two}	\Red \mdist {\four \mex\mex, \four I,  \two I}\\
&\overset {\RED \four}	\Red \mdist {\four \mex\mex \oplus I, \four I,  \two I}
\overset {\RED \four}	\Red  \mdist{\frac{1}{8} \mex\mex, \frac{1}{8} I,  \four I,  \two I} \Red\cdots
\end{align*}
 It is immediate  to verify that $\ETime M=4$.
 As for the approximants, we have that, e.g.
 $\etime 2 M = 2$,  $\etime 4 M= 3$, $\etime 6 M= 3+\two$.
\end{example}

\paragraph{(Positive) Almost-Sure Termination in $\PLambdaCBV$.}
We now have all the ingredients to define the two concepts
this paper aims to characterise, namely the two canonical notions of termination.
The definition turns out to be very easy.
\begin{Def}
  Let $M$ be a closed term. We then say that
  $M$ is \emph{almost-surely terminating}   (\AST) if $\ProbTerm{M}=1$. Furthermore,
  we say that
  $M$ is \emph{expected to terminate}, or \emph{positively almost-surely
  terminating} (\PAST) if $\ETime M$ is finite.
\end{Def}
As is well-known, \PAST is strictly stronger than \AST. 
Indeed, it is easily seen that \PAST implies \AST:
\begin{fact}
  For every closed term $M$, $\ETime M<\infty$ implies that $\ProbTerm{M}=1$.
  Indeed, $\sum_{i \geq 1} (1-\eval i M) < \infty$  implies $\lim_{i\to \infty} \left(1- \eval i M\right)=0$, 
  hence $\lim_{i\to \infty} \eval i M=1$.
\end{fact}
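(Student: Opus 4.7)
The statement is essentially a packaged form of the standard fact that a convergent series of non-negative reals has terms tending to zero, applied to $a_i := 1 - \eval{i}{M}$. So the proof amounts to (i) checking that the hypotheses of that standard fact hold and (ii) relating $\sup_n \eval{n}{M}$ to $\lim_n \eval{n}{M}$.

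The plan is as follows. First I would establish that the sequence $\{\eval{n}{M}\}_{n \in \Nat}$ is monotonically non-decreasing and bounded above by $1$. Boundedness is immediate since $\eval{n}{M} = \nnorm{\m_n}$ is a mass of a subdistribution. Monotonicity follows from the lifting rules in Figure~\ref{fig:lifting}: the rule $\mdist{V} \Red \mdist{V}$ ensures that values are stable under reduction, so if $\m_n = \dsum_i p_i \cdot \m'_i$ with each component evolving independently, the value mass carried by elements already in normal form (which for closed terms coincide with values) is preserved intact in $\m_{n+1}$, while new value mass may be created by reductions of non-value components. Hence $\eval{n}{M} \leq \eval{n+1}{M}$.

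Next, since $\{\eval{n}{M}\}_n$ is monotone and bounded, $\sup_n \eval{n}{M} = \lim_{n \to \infty} \eval{n}{M}$, so by definition $\ProbTerm{M} = \lim_{n \to \infty} \eval{n}{M}$. It therefore suffices to show that this limit equals $1$ under the assumption $\ETime{M} < \infty$. Setting $a_i := 1 - \eval{i}{M} \geq 0$, the assumption states precisely that $\sum_{i \geq 0} a_i < \infty$, and a standard result on numerical series (the general term of a convergent non-negative series tends to $0$) yields $a_i \to 0$, i.e.\ $\eval{i}{M} \to 1$.

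Putting the two pieces together gives $\ProbTerm{M} = \lim_{n \to \infty} \eval{n}{M} = 1$, which is exactly the claim. There is no genuine obstacle here; the only point one should not gloss over is the monotonicity of $\eval{n}{M}$, which is not an abstract probabilistic fact but a property of the specific lifted reduction $\Red$ defined in Figure~\ref{fig:lifting}, and which relies on the rule that leaves values fixed together with the disjoint-sum structure of $\Red$.
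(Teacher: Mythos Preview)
Your proposal is correct and follows essentially the same approach as the paper: the paper's entire argument is the one-line ``Indeed, \ldots'' embedded in the statement itself, and you have simply unpacked it, making explicit the monotonicity of $\eval{n}{M}$ (which the paper later records as Property~\ref{fact:approximants}) and the fact that for a monotone bounded sequence the supremum equals the limit. There is nothing further to add.
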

However, a program may be \AST, and still have infinite expected runtime. The
paradigmatic example of this is a fair random walk~\cite{Billingsley}. With a slight
abuse of notation, we often use \AST and \PAST both as acronyms and as
sets of terms.

\section{Non-Idempotent Monadic Intersection Types}\label{sec:intersectiontypes}
In the previous section, we have introduced a call-by-value
paradigmatic programming language for probabilistic computation,
endowed it with an operational semantics, and defined two notions of
probabilistic termination for it. In this section, we present a type
system which is able to capture both the \emph{probability of
  termination} and the \emph{expected runtime} of a program. The
system will in turn allow us to characterise \AST and \PAST.

One of the main ingredients of the type system we are going to
introduce is the non-idempotency of intersections. As sketched in
Section ~\ref{sec:intro_IT}, such a type system is usually based on
two mutually recursive syntactic categories of types, namely simple
(or arrow) types and intersection types, which are \emph{finite
  multisets of arrows}. The intuition is that an arrow type
corresponds to a single use of a term, and that if an argument is
typed with a multiset containing $k$ arrows types, it will be
evaluated $k$ times.

In our probabilistic setting, the type system is based on
\emph{three}, rather than \emph{two}, layers, namely arrow types,
intersection types, and multidistribution types, also known
as monadic types. More precisely:
\begin{varitemize}
\item
  An \emph{arrow type} corresponds to a single use of a value, as usual.
  In a purely applicative language like ours, indeed, the only way
  to destruct a value is to pass another value to it.
\item
  An \emph{intersection type}, instead, is no longer a multiset of
  arrows like in usual non-idempotent intersection type disciplines,
  but a multiset of \emph{pairs} $q\for \At$, where $\At$ is an arrow type,
  and $q\in (0,1]\cap \Qnum$. The intuition is that each single use of
    a term will happen with some probability $q$, and that $q$ is
    recorded in the intersection type together with the corresponding
    arrow.  So, e.g., in the evaluation of $\mex\mex$ (see Example
    \ref{ex:main2}) the argument $\mex$ is first used with certitude
    (probability $1$); its next use happens with probability $\two$,
    the following use with probability $\four$, and so on.  Each use
    is typed with an (appropriately scaled) arrow type.
\item
  Finally, a term $M$ cannot in general be typed ``with certitude''
  namely by a single intersection type $\mA$, but rather with a
  \emph{multidistribution} of intersection types $\mdist{p_1\mA_1,
    \dots, p_k\mA_k}$. Indeed, the evaluation of $M$ can result
  in possibly many values depending on the probabilistic choices
  the term encounters along the way. In turn, those values can
  be copied, and each possible use of them must be taken into account.
\end{varitemize}

We now formally introduce the type system. In Section
\ref{sec:examples_CbV} we expand the intuitions above by analysing
some type derivations of our main example $\mex\mex$.  To understand
the typing, the reader should not hesitate to jump back and forth
between the examples and the formal system.
\subsection{The Type System, Formally}\label{sec:typesCBV}
\paragraph{Types}
Types are defined by means of the following grammar:
\begin{align*}
  \At,\Bt & ::=   {\ma \arrow \at}  &&\mbox{\textbf{Arrow Types}}\\
  \mA,\mB & ::=  \mul{q_1\for \At_{1},...,q_n \for\At_{n}}, n \geq 0 &&\mbox{\textbf{Intersection Types}}\\
  \at,\bt & ::=  \mdist{p_{1}\mA_{1},...,p_{n}\mA_{n}}, n \geq 0 &&\mbox{\textbf{Type Distributions}}
\end{align*}
In other words, an intersection type $\mA$ is a \emph{multiset of
  pairs} $q\for \At$ where $\At$ is an arrow type, and $q\in (0,1]\cap
  \Qnum$ is said to be a \emph{scale factor}. Note that $q>0$. Letters
  $u,q$ range over scale factors.  Given
  $\mA=\mset{q_i\for\At_i}_{\iI}$, we write $u\for \mA$ for $ \mset{(u
    q_i)\for\At_i}_{\iI}$.

It is useful to notice that an intersection type is \emph{not} a
multidistribution, because the sum of the $q_i$ such that
$\mA=\mset{q_i\for\At_i}_{\iI}$ is not bounded by $1$ in general; this
is reflected in distinct bracket notations. Intersection types
and type distributions are indeed fundamentally different.  Each
element in an intersection type corresponds to one use of the term, e.g.,
$\Delta\Delta$ can have a type of the form
$\mset{1.\At,1.\mset{\At}\arrow \at}$. Instead, type distributions are
probabilistic sums of possibly different intersection types. We often
need to multiply intersection types or type distributions by scalars,
getting other objects of the same kind. Moreover, intersection types
and type distributions being multisets, they support the (respective) operation of
disjoint union (see Section~\ref{sec:preliminariesM}).

\paragraph{Contexts.}
A \emph{typing context} $\Gamma$ is a (total) map from variables to
intersection types such that only finitely many variables are not
mapped to the empty multiset $[]$. The \emph{domain} of $\Gamma$ is
the set $\mathit{dom}(\Gamma):=\{x\st \Gamma(x)\not=[]\}$.  The typing context
$\Gamma$ is empty if $\mathit{dom}(\Gamma)=\emptyset$. Multiset union $\uplus$
is extended to typing contexts pointwise, i.e.  $(\Gamma \uplus
\Delta)(x) := \Gamma(x) \uplus \Delta(x)$, for each variable $x$.  A
typing context $ \Gamma $ is denoted as $\mset{x_1 :\mA_1, . . . , x_n
  :\mA_n}$ if $\mathit{dom}(\Gamma) \subseteq \{x_1, . . . , x_n\}$ and $
\Gamma(x_i) = \mA_i $ for all $1\leq i \leq n$. Given two typing
contexts $ \Gamma$ and $ \Delta$ such that $\mathit{dom}(\Gamma) \cap
\mathit{dom}(\Delta) = \emptyset $, the typing context $\Gamma,\Delta$ is
defined as $(\Gamma,\Delta)(x) := \Gamma(x)$ if $x \in
\mathit{dom}(\Gamma)$, $(\Gamma,\Delta)(x) := \Delta(x)$ if $x \in
\mathit{dom}(\Delta)$, and $ (\Gamma,\Delta)(x) := []$ otherwise.
Observe that $\Gamma, x :[]$ is equal to $\Gamma$. 
If 
$\Gamma = x_1:\mA_1, \dots, x_n:\mA_n$, we write  $q\for \Gamma$ for 
$[x_1:q\for\mA_1, \dots, x_n:q\for \mA_n]$.
%

\paragraph{Typing rules.}  
The type assignment system in Figure \ref{fig:WWcbv} proves judgments of
the shape $\Gamma \oder w M: \type$, where $\Gamma$ is a type context,
$M$ is a term, $w\in \Qnum $ is a weight, and $\type$ is a type in one
of the three forms, i.e.  $\type ::= \At\mid\at\mid \mA$.  If $\Pi$ is
a formal derivation of $\Gamma\oder w M:\type$, then $w$ is said to be
the \emph{weight} of $\Pi$. Please notice in Figure \ref{fig:WWcbv} the
use of the notation $q\for \Gamma$  defined above.

\begin{figure}\centering
  {\small 
    \begin{minipage}{0.97\textwidth}
        \vspace{8pt}
	\[
        \infer[\TVar]{ x: \mA  \ovdash 0 x:\mA}{}  
	\quad\quad
        \infer[\TZero]{  \oder{0 } M:\zero}{}
        \]
        \[
        \infer[@]{\Gamma  \uplus  \Delta \oder {w+ v} VW: {\bt}}
              {\Gamma \oder {w} V: [\mA\arrow \bt]   & \Delta\oder {v} W: \mA}	
              \qquad
        \infer[\oplus]{\two\for\Gamma \uplus \two\for\Delta \oder{\two w +\two v + 1} M\oplus N: \two \at \dsum \two \bt}
              {\Gamma \oder {w} M: \at & \Delta \oder{v} N: \bt} 
        \]
        \[
        \infer[\lam]{\Gamma \ovdash {w+1} \lam x. M: \mA \arrow \bt}
	      {\Gamma, x:\mA \ovdash {w} M:\bt}  
	\quad\quad        
        \infer[\letr]{\Gamma  \uplus_k  p_k\for\Delta_k \oder {v+\sum_k p_k w_k+1} \lett  N M: {\dsum_k~ p_k \bt_k}}
              {\Gamma \oder {v} N: \dist { p_k \mA_k }_{\kK} & (\Delta_k,  x:  \mA_k\oder {w_k} M: \bt_k)_{k\in K}}	
        \]
        \[
	\infer[\TVal]{\Gamma\oder {w} V:\dist{\mA} }{\Gamma\oder {w} V:\mA}
	\quad\quad
	\infer[!]{\uplus_i (q_i\for\Gamma_i)\oder {\sum_i  q_i w_i} V:\mset{q_i\for\At_i}_{\iI}}
	      {(\Gamma_i\oder {w_i} V: \At_i)_{\iI}\quad &  \quad(q_i)_{\iI} \mbox{ scale factors}
	      }
	      \]
              \vspace{0pt}
    \end{minipage}}
    \caption{Non-Idempotent Intersection Type Rules for $\PLambda^\cbv$}\label{fig:WWcbv}
\end{figure}

\subsection{Some Comments on the Typing Rules}
This section  provides some explanation on the shapes and
roles of the  typing rules. 

The leaves of type derivations seen as trees can be of two kinds,
namely the $\TVar$-rule and the $\TZero$-rule.  In both cases the
underlying weight is set to $0$. While the former is standard in
intersection type disciplines, the latter attributes the empty
distribution $\zero$ to \emph{any term} $M$. So for example,
$\Delta\Delta$ is typed as $\zero$.
The purpose of $\TZero$ is to allow for approximations, by allowing the
typing process to stop at any point. 

The next four rules are concerned with the four term constructors 
$\PLambdaCBV$ includes, namely applications, probabilistic sums, abstractions,
and $\letr$s. The following discusses each of them:
\begin{varitemize}
\item
  The $\lambda$ rule types a lambda abstraction, and assigns an arrow
  type to it. This poses no problem, because the type assigned to the
  variable $x$ in the underlying typing context is an intersection
  type, and this matches the shape of the left-hand-side of an arrow
  type. The weight is increased by one: whenever this abstraction will
  be used as a function, a $\beta$-redex would fire, and this takes
  one reduction step, which needs to be counted.
\item
  The $\oplus$ rule types $M\oplus N$ by ``superimposing'' the
  derivations for $M$ and for $N$. The data carried by each such
  derivation (context, weight and type) are scaled by a factor of
  $\two$. The counter is increased by $1$, to record a $\oplus$-step
  in the evaluation.
\item
  The $\letr$ rule serves to type the \texttt{let} construct, and is
  probably the most complex one. In particular, the argument $M$ needs
  to be typed multiple times, one for each scaled multiset in the
  multidistribution $\mdist{p_k\mA_k}$, which is the type for the
  first argument $N$. Each subderivation will be used with probability
  $p_k$, therefore in the conclusion of the $\letr$-rule, the data of
  each of the leftmost premisses (typing context $\Delta_k$, weight
  $w_k$, and type $\bt_k$) are scaled by a factor $p_k$. Moreover, the
  weight is further increased by $1$, to record a $\letr V$-step in
  the evaluation; such a step consumes the $\letr$ when the first
  argument is a value.
\item
  Finally, the $@$ rule typing applications is quite standard in
  shape. Just a couple of observations could be helpful. First of all,
  the function $V$ is required to be typed with a multiset, rather
  than a distribution, and this is not restrictive since $V$ is a
  value, and not a term. Secondly, the weight is taken as the sum of
  the weights of the two derivations, without any increase. Notice
  that the corresponding $\beta$-step is recorded by the
  $\lambda$-rule.
\end{varitemize}

The last two rules, namely $\TVal$ and $!$, are the only ones not
associated to any term construction, and are meant to allow a term
typable with arrow types to be attributed an intersection or
distribution type. Of course, this makes sense only when the term is
actually a value.

\begin{remark}
  While designing the type system, we made a simplifying choice in the
  typing rule $\letr$. As we said, the counter is increased by $1$ to
  record the $\letr V$-step.  Note however that if $N$ is a
  non-terminating term, it never becomes a value, and therefore the
  $\letr V$-step never happens. Are we counting\emph{ too much} here?
  Obviously not, because if $N$ never become a value, then any
  reduction sequence from $\lett M N$ can be extended with an extra
  reduction step, without affecting the analysis in any way.
\end{remark}

\paragraph{Some Interesting Boundary Cases.}
The type system we have just introduced is remarkably simple in structure,
despite its expressive power, which we will analyse  in
Section~\ref{sec:characterization}. Let us now take a look
at a few degenerate cases of the typing rules:
\begin{enumerate}
\item
  In the $\lam$ rule, $\mA$ is allowed to be the \emph{empty intersection}
  type, this way allowing to type vacuous abstractions, i.e., we can
  always abstract a variable $x$ which does not explicitly occur in
  the context $\Gamma$, since if $x \not\in dom (\Gamma)$, then
  $\Gamma, x :[]$ is equal to $\Gamma$.
\item
  In the $!$-rule, $I$ can be empty, and the following rule
  is thus a derived rule:
  \[
  \infer{\oder{0} \lam x.M:[]}{}
  \]
\item
  In the \texttt{let} rule, the term $N$ can well have \emph{null type} $\zero$, and in this case
  the whole term $\lett N M$ is given itself type $\zero$, without any need to
  type $M$. In other words, the following is another derived rule
  \[
  \infer{\Gamma \oder {w+1} \lett N M:\zero}{\Gamma \oder{w} N:\zero }				
  \]		
\end{enumerate}

\subsection{Some Basic Properties of the Type System}
In this section, we derive some easy but useful properties of the type system,
which will turn out to be essential in the following. Like in linear type systems,
typing contexts tell us everything we need to know about free variables:
\begin{lemma}[Contexts and Free Variables]
 Let $\Gamma \der M: \at$.  Then $\mathit{dom}(\Gamma)\subseteq fv(M)$, and  $M$  closed implies $\Gamma=\emptyset$.

\end{lemma}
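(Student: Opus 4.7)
The plan is to prove both claims simultaneously by induction on the structure of the type derivation $\Pi$ concluding $\Gamma \oder{w} M : \type$, where $\type$ ranges over all three syntactic categories (arrow, intersection, distribution). The second claim is an immediate corollary of the first: if $M$ is closed then $fv(M) = \emptyset$, so $dom(\Gamma) \subseteq \emptyset$ forces $\Gamma$ to be empty.

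For the inductive argument, I would proceed rule by rule through the type system of Figure~\ref{fig:WWcbv}. The two leaf cases are immediate: for $\TVar$ we have $dom(x:\mA) = \{x\} = fv(x)$, and for $\TZero$ the context is empty so the inclusion is trivial. The rules $@$, $\oplus$, $\TVal$ are all handled by the same easy observation: the conclusion's context is a (possibly scaled) multiset union of the premises' contexts, and $dom(\Gamma \uplus \Delta) = dom(\Gamma) \cup dom(\Delta)$, so the inclusion follows by applying the induction hypothesis to each premise and using $fv(VW) = fv(V) \cup fv(W)$ and $fv(M \oplus N) = fv(M) \cup fv(N)$. Before appealing to this pattern, I would record as a small preliminary fact that for any scale factor $q > 0$, $dom(q \for \Gamma) = dom(\Gamma)$, since scaling only rescales the factors attached to arrow types and cannot produce or delete them.

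The two rules with binders, namely $\lam$ and $\letr$, are the only ones requiring a small extra argument. For $\lam$, the induction hypothesis applied to the premise gives $dom(\Gamma, x:\mA) \subseteq fv(M)$, and since $dom(\Gamma) \subseteq dom(\Gamma, x:\mA) \setminus \{x\}$, we obtain $dom(\Gamma) \subseteq fv(M) \setminus \{x\} = fv(\lam x.M)$. For $\letr$, the induction hypothesis on the leftmost premises yields $dom(\Delta_k) \subseteq fv(M) \setminus \{x\}$ for each $k \in K$, and on the rightmost premise $dom(\Gamma) \subseteq fv(N)$; combining these with the preliminary fact about scaling, the conclusion's context has domain $dom(\Gamma) \cup \bigcup_k dom(\Delta_k) \subseteq fv(N) \cup (fv(M) \setminus \{x\}) = fv(\lett N M)$.

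Finally, the $!$-rule is handled exactly like the multiplicative rules: the conclusion's context is $\uplus_i (q_i \for \Gamma_i)$, whose domain is $\bigcup_i dom(\Gamma_i)$, and each $dom(\Gamma_i) \subseteq fv(V)$ by the induction hypothesis; the degenerate case $I = \emptyset$ gives an empty context, for which the inclusion is vacuous. I do not anticipate any obstacle: the proof is entirely routine once the bookkeeping for scaling and for multiset union of contexts is recorded, and the only case meriting even a line of comment is the bound-variable treatment in the $\lam$ and $\letr$ rules.
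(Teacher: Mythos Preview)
Your proof is correct and complete. The paper itself states this lemma without proof, treating it as a routine basic property of the type system; your induction on derivations is exactly the standard argument one would expect, and your strengthening of the statement to all three syntactic categories of types is necessary for the induction to go through.

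One tiny imprecision: in the $\TVar$ case you write $dom(x:\mA) = \{x\}$, but if $\mA = []$ then by the paper's definition of domain this context has empty domain. This does not affect the argument, since in either case $dom(x:\mA) \subseteq \{x\} = fv(x)$, which is all you need.
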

The way intersection types are assigned to values is  completely determined
by the underlying arrow types: 
\begin{property}[Partitioning Intersections]
  For every value $V$, the following are equivalent:
  \begin{enumerate}
  \item
    $ \oder {w} V: \mA$ and $\mA= \uplus_{\iI} \mA_i$;
  \item
    $\oder{w_i} V: \mA_i$ for every $\iI$ and $w=\sum w_i$.
    \end{enumerate}
\end{property}
We often use the aforementioned property together with the following lemma:
\begin{lemma}[Scaling]\label{lem:scale} Given any scalar $0<q \leq 1$ and any value $V$, it holds that
  $ \oder{qv} V:q \for \mA $ ~iff~ $ \oder{ v} V:
    \mA $.
\end{lemma}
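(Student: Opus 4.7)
The plan is to prove both directions by case analysis on the shape of the value $V$, leveraging the key observation that among the rules of Figure~\ref{fig:WWcbv} only two can conclude a judgment of the form $\oder{w} V:\mA$ with an intersection type on the right: namely $\TVar$, which applies only when $V$ is a variable, and the $!$-rule. The heart of the argument is then that the scalars $(q_i)_{i\in I}$ chosen in any $!$-rule application can be uniformly multiplied or divided by $q$, keeping the premises unchanged, so the scaling on the type and on the weight are matched exactly.

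First I would dispose of the case $V=x$: the derivation must end with $\TVar$, forcing the context to be $(x:\mA)$ and $v=0$. The scaled judgment $(x:q\for\mA)\oder{0} x:q\for\mA$ is again an instance of $\TVar$, and $qv = q\cdot 0 = 0$, so both sides match trivially (in either direction). For $V=\lam y.N$, the derivation must end with the $!$-rule, so $\mA = \mset{q_i\for\At_i}_{i\in I}$ is built from premises $(\Gamma_i \oder{w_i} \lam y. N : \At_i)_{i\in I}$ with context $\uplus_i(q_i\for\Gamma_i)$ and weight $v=\sum_i q_i w_i$. For the forward implication I would reapply the $!$-rule to the \emph{same} premises but with scalars $(qq_i)_{i\in I}$; these remain valid scale factors since $q,q_i\in(0,1]\cap\Qnum$ forces $qq_i\in(0,1]\cap\Qnum$. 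The conclusion then reads $\oder{\sum_i (qq_i) w_i} \lam y.N : \mset{(qq_i)\for\At_i}_{i\in I}$, which is exactly $\oder{qv} V : q\for\mA$ by distributivity of scaling over $\uplus$.

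For the converse, starting from a derivation of $\oder{qv} V:q\for\mA$, the last rule is again $!$, and its scalars must (up to the indexing bijection identifying elements of the multiset $q\for\mA = \mset{(qq_i)\for\At_i}_i$) coincide with the scaled scalars $qq_i$. Since $q>0$ we can divide through: applying the $!$-rule to the same arrow-type premises with scalars $(q_i)_{i\in I}$ yields a derivation of $\oder{\sum_i q_i w_i} V : \mA$, and $\sum_i q_i w_i = (1/q)\sum_i (qq_i) w_i = (1/q)\cdot qv = v$.

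The main obstacle, and the only point where one has to argue carefully, is the ``unscaling'' step: one must ensure both that the scalars of the final $!$-rule in a derivation of $\oder{qv}V:q\for\mA$ really factor as $qq_i$, and that the unscaled $q_i$ are themselves valid scale factors in $(0,1]\cap\Qnum$. The first follows because the conclusion's scalars are fixed by the rule application, so they must match the scalars appearing in $q\for\mA$ up to the obvious bijection; the second follows because the original $\mA$ is by hypothesis a well-formed intersection type, hence its scalars $q_i$ already lie in $(0,1]\cap\Qnum$. With these observations the scaling and unscaling operations are exact inverses, giving the biconditional.
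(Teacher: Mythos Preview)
The paper does not spell out a proof of this lemma; your approach via rescaling the scalars in the $!$-rule is the natural one and is correct in the key case $V=\lambda y.N$.

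There is a small mismatch in your variable case, though. The lemma is stated for the \emph{empty} typing context, whereas $\TVar$ produces the judgment $x:\mA\oder{0}x:\mA$, whose context $x{:}\mA$ is nonempty unless $\mA=[]$. Under an empty context the only derivable judgment of the form $\oder{w}x:\mA$ is $\oder{0}x:[]$ (via the $!$-rule with $I=\emptyset$, or equivalently $\TVar$ with $\mA=[]$), and then the biconditional is immediate since $q\for[]=[]$ and $qv=0=v$. So your treatment of this case does not match the statement as written. Alternatively, what you have actually sketched---if you track the contexts consistently---is the slightly more general fact $\Gamma\oder{v}V:\mA$ iff $q\for\Gamma\oder{qv}V:q\for\mA$, which specialises to the paper's lemma when $\Gamma$ is empty; if that is your intent you should say so and carry the context scaling explicitly through the $\lambda$ case as well (where at present you mention $\uplus_i(q_i\for\Gamma_i)$ in the premise but drop it in the conclusion). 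Either fix is straightforward; the core rescaling argument stands.
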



\section{Precisely Reflecting the Runtime: Some Examples}\label{sec:examples_CbV}

Our type system is designed to keep track of the probability of
reaching a value and the expected time to termination.
And as it should by now be clear, the information relevant to derive the latter is kept
track by the weight.  Since the expected runtime is computed as an
infinitary sum, working with exact measures is essential.
Think for example at $\sum_{k=1}^\infty \frac{1}{k^2}$
and $\sum_{k=1}^\infty \frac{1}{k}$: the first converges, while the second
diverges.

We thus need to count steps neither ``too much'' nor ``too little''.  Two
crucial features make this possible: the arrows in an intersection
type are \emph{scaled} by a factor $q$, and we allow type derivations
also for terms which receive the \emph{null type} $\zero$, such as
$\Delta\Delta$. The first feature allows us not to count ``too much'',
the second not to count ``too little''. In this section, we illustrate
these aspects by way of some concrete examples.
\subsection{Not Too Much}\label{sec:mainex_CbV}
In intersection type systems for the $\lam$-calculus such as those by
Lengrand and co-authors \cite{BL2013,AGK2018,AGL19}, the weight of any
type derivation accounts for how many times redexes can be
fired. Roughly speaking, to each $\lambda$-abstraction in the type
derivation corresponds a $\beta$-redex being fired, therefore to
measure the runtime of a $\lam$-term, the weight is increased by one
at each instance of the $\lambda$ rule. The only difficulty consists
in distinguishing between those abstractions which are used as
functions, and those abstractions which will turn out to be
the final value.

In a probabilistic setting, we want to compute the \emph{expected}
runtime. Increasing the weight by one \emph{at each} instance of the
$\lambda$ rule as in the deterministic case is simply too much.
Consider our running example $\mex\mex$, where $\mex=\lam x .xx\oplus
I$, and $\ETime {\mex\mex}=4$.  It is easy to see that for each $k\in
\Nat$, there is a derivation $\Pi$ which contains $k$ instances of
$\lam$ rule. If each is counted $1$, we would have a derivation $\Pi
\oder k \mex\mex: \at$ for every $k$, and so $\sup\{ k \mid \oder k
\mex\mex: \at \}=\infty$.  Instead, we need to scale each instance of
$\lam$ (say, with conclusion $\At$) by \emph{the probability $p$ of
  the $\lambda$ abstraction to be involved in a redex}. Such an
information is stored as a scalar somewhere else in the derivation.

To clarify, let us examine our running example.  We want to
capture $\etime {n} {\mex\mex} $ and the fact that $\eval {2n}
{\mex\mex}$ is $\left(\frac{1}{2}+\frac{1}{4}+ \dots +\frac{1}{2^n}\right)$.
We define  the types $\At_n$ and $\ma_n$ as follows. 
 \[
\ma_0= []
\qquad
\At_n=\ma_{n-1}\arrow  \dsum_{k=1}^n \mdist{\frac{1}{2^k}[]}
\qquad
\ma_n =  \two\for \ma_{n-1} 	\uplus 	\two\for  [\At_n]
\]
For the reader's convenience, we explicitly give some cases:
\begin{align*}
\At_1&=[] \arrow \dist{ \two []},\quad &
\At_2&=\ma_1\arrow  \dist{\frac{1}{4}[] , \frac{1}{2}[]},\quad&
\At_3&=\ma_2\arrow   \dist{\frac{1}{8}[], \frac{1}{4}[], \frac{1}{2}[]}\\
\ma_1&=\mset{\two \for\At_1},&\quad\quad  \ma_2& =\mset{\frac{1}{4} \for\At_1,  \frac{1}{2}\for\At_2},\quad&
\ma_3&=\mset{\frac{1}{8}\for\At_1, \frac{1}{4}\for\At_2, \frac{1}{2}\for\At_3}
\end{align*}
The value $\mex$ can be given all the arrow types $\At_i$, for every
$i$, all these derivations having weight equal to $2$, i.e. for every
$i\geq 1$ there is a derivation $\Sigma_i$ such that
$\Sigma_i\dem\ovdash{2}\lam x. xx \oplus I: \At_i$. For example, the
type derivations $\Sigma_1$ to $\Sigma_3$ can be built as follows:
\[\Sigma_1\dem\infer{\ovdash{2} \lam x. xx \oplus I:
    []\arrow \dist{ \two[]}} {\infer{\ovdash{1} xx\oplus I:
      \dist{\two []}}{\ovdash 0 I: \dist{[]}} }
  \qquad\quad 
\Sigma_2\dem
\infer{\ovdash{2}  \lam x. xx \oplus I:  [\two\for\At_1]\arrow  \dist{\frac{1}{4}[], \frac{1}{2}[]}   } 
{\infer{x:[\two\for\At_1] \ovdash {1} xx \oplus I:  \dist{\frac{1}{4}[], \frac{1}{2}[]}}
	{ \infer{x:[\At_1]  \ovdash  0 xx: \dist{\two []}}
		{ x:[\At_1]\ovdash  0 x: [[]\arrow \dist{\two []}]  & x: []\ovdash  0 x:[] }
		& \ovdash  0 I:\dist{[]} }
}
\]
\vspace{5pt}
\[
\Sigma_3\dem
\infer{\ovdash{2} \lam x.  xx\oplus I: [\two \for\At_2, \frac{1}{4}\for \At_1]\arrow \dist{ \frac{1}{8}[], \frac{1}{4}[], \two []  }  }{
			\infer{x:[\two \for\At_2, \frac{1}{4}\for \At_1] \ovdash{1} xx\oplus I:  
				\dist{ \frac{1}{8}[], \frac{1}{4}[], \two [] } }{
				\infer{x:[ \At_2, \two \for\At_1] \ovdash  0 xx:    \dist{\frac{1}{4}[], \frac{1}{2}[]}  }{
					x:[ \At_2] \ovdash 0 x:  [ [\two\for \At_1]\arrow     \dist{\frac{1}{4}[], \frac{1}{2}[] }]      
					& {x:[\two\for \At_1]\ovdash 0 x:[\two \for\At] }
				}
			&
				\infer{\ovdash  0 I: \dist{[]}}{}
			}
		}
\]
One can  attribute  to $\mex$ also any   
intersection type $\ma_j$, by collecting and scaling the $j$ derivations  ($\Sigma_j,\dots, \Sigma_1$) by way of the rule $!$ (with scale factors $\two, \dots,\frac{1}{2^j}$), 
thus obtaining the
type derivation $\Theta_j$,  this time with weight $2(\sum_{k:1}^j \frac{1}{2^k}) = 2-\frac{1}{2^{j-1}}$.
 Finally, $\Sigma_{j+1}$ and $\Theta_j$ can
be aggregated in the derivation $\Phi_{j+1}$, typing $\mex\mex$. Note that the weight is now $2(\sum_{k:0}^j \frac{1}{2^k}) $.
\[\Phi_{j+1}\dem 
\infer{
	\ovdash { 2(\sum_{k:0}^j \frac{1}{2^k})    } \mex\mex: \dsum_{k:1}^{j+1} \dist{\frac{1}{2^k}[]}}
{\Sigma_{j+1}\dem\ovdash{2}\mex:  \ma_j\arrow \dsum_{k:1}^{j+1} \dist{\frac{1}{2^k}[]}
	&
	\Theta_j\dem\ovdash{2(\sum_{k:1}^j \frac{1}{2^k})  } \mex: \ma_j}
\]

\begin{sloppypar}
For example,  we  have the following derivation, which indeed corresponds to the  $(2\cdot 3)$-approximant of $\ETime {\mex\mex}$:
 recall from Example~\ref{ex:etime} that $\etime 6 {\mex\mex}=3+\two$ (and that $\eval 6 {\mex\mex} = \frac{7}{8}$). 
\end{sloppypar}
	\[ \Phi_3\dem
	\infer{ \ovdash {3+\two} \mex\mex:  \dist{\frac{1}{8}[], \frac{1}{4}[], \two [] }}
	{\Sigma_3\dem\ovdash{2} \lam x.  xx\oplus I: [\two \for\At_2, \frac{1}{4}\for \At_1]\arrow \dist{ \frac{1}{8}[], \frac{1}{4}[], \two []  }  
	&
	 \infer[!]{\ovdash{1+\two} \mex:  [\frac{1}{2}\for \At_2, \frac{1}{4}\for\At_1]}
	 {  \Sigma_2 \ovdash{2} \mex:\At_2  & \Sigma_1 \dem  \ovdash {2} \mex:\At_1}    }
	\]

\subsection{Not Too Little}\label{sec:too_little}
Our type system allows to count the reduction steps of diverging
terms. That is, a term such as $\Delta \Delta$ has a derivation of
weight $n$, for each $n\in \Nat$.  This is essential to precisely
capture the expected runtime. Think of the term $M:=I\oplus
\Delta\Delta $. Its evaluation proceeds as follows:
 \[
\mdist{I\oplus \Delta\Delta}
\overset 1 \Red \mdist{\two I, \two \Delta \Delta} 
\overset \two  \Red    \mdist{\two I, \two \Delta \Delta} 
\overset \two    \Red \mdist{\two I, \two \Delta \Delta}   \dots
\]
Clearly, $\ETime{M}=\infty$. However, any derivation only taking into
account the evaluation time \emph{to a value} (namely the $\oplus$
reduction step only), would necessarily have finite weight. In the
following, we prove that any diverging program $M$ can be typed as
$\oder n M: \zero$,  for \emph{every} natural number $n$.

Here, we show this fact, concretely, for the paradigmatic diverging term $\Delta\Delta$.
First of all, consider the arrow types $\At_{i+1}=[\At_1,\ldots,\At_{i}]\to \zero$
(so, in particular, $\At_1=[]\to \zero$, $\At_2=[\At_1]\to \zero$). For each $i$, one can
build a derivation $\Sigma_i$ having weight $1$ and
typing $\Delta$ with $\At_i$. Here are a couple of examples:
\[
  \Sigma_1\dem\infer{\oder{1}\lam x.xx : []\arrow \zero}{\oder 0 xx:\zero}
  \qquad
  \Sigma_2\dem\infer{\oder 1 \lam x.xx : [[]\arrow \zero]\arrow \zero}{\infer{x:[[]\arrow \zero ]\oder 0 xx:\zero}
    {x: [[]\arrow \zero]\oder 0 x:[[]\arrow \zero]   & \oder 0 x:[]  } }
  \]
From the $\Sigma_i$'s, it is thus easy to build derivations
typing $\Delta\Delta$ with $\zero$ and having any weight $n$.
As an example, if $n=3$, we have the following one:

{\footnotesize
\[
\infer[@]{\oder{3}(\lam x.xx) \lam x.xx : \zero}
{\Sigma_3\dem\oder{1}\lam x. xx: \mset{ ~[[]\arrow \zero]\arrow \zero,~  []\arrow \zero ~ }\arrow\zero 
	\quad & \quad \Sigma_2\dem\oder{1}\lam x.xx : [[]\arrow \zero]\arrow \zero 
	\quad & \quad \Sigma_1\dem\oder{1} \lam x.xx  : []\arrow \zero
}
\]}


\section{Characterising Probabilistic Termination}\label{sec:characterization}
This section presents the main result of this paper, namely the
characterisation of both forms of probabilistic termination by
typing. This will be done by relating type derivations for a program
$M$ and the probability of termination and the expected runtime of
$M$. To achieve the latter, we need to focus on \emph{tight}
derivations, since not all type derivations of $M$ underapproximate
the expected runtime of $M$.

We show that in the tight case, $\ETime M$ (respectively, $\ProbTerm{M}$)
bounds from above the weight $w$ (respectively, the norm
$\norm{\at}$) of any type derivation $\Pi\dem\oder{w}{M:\at}$. This is
the \emph{soundness property}, and is in Section~\ref{sec:sound}. We also
prove the converse, \ie the \emph{completeness property}, in Section~\ref{sec:complete}.
\subsection{Tight Typings}
The need for tight typings can be grasped easily by considering
the following example.
\begin{example}\label{ex:tight}
  The term $I$ is in normal form, and therefore $\ETime I=0$. It
  can be given the type $\zero$ (by way of the $\TZero$ typing rule),
  or the type $\mdist{1[]}$ (by way of  $\, !$ and $\TVal$).
  In both cases, the underlying weight is $0$.
  However, $I$ also admits derivations whose weight is
  strictly positive, such as
  \[
  \infer[\lam]{\oder 1 \lam x.x: []\arrow []}{\infer[\TVar]{x:[] \oder 0 x:[]  }{}}
  \]
  More generally, without any restrictions on the shape of types,
  one can easily assign grossly overapproximated weights to
  terms, e.g., the term $\lambda x.(\Delta\Delta)$, which is a value
  but which can receive arbitrarily large weights when given the
  type $[]\arrow\zero$, (immediate consequence of the example in  Section \ref{sec:too_little} above).
\end{example}
The purpose of arrow types is to give types to terms which are
\emph{not} supposed to be reduced alone, but only when applied
to an argument. If, instead, a term is not supposed to be used
as a function, its type must be the empty multiset. This is the
key idea for understanding the following definition:

\begin{Def}[Tight Types and Derivations]
  A type $\at$ is said to be \emph{tight} if it is a multidistribution
  on the empty intersection type $[]$. Accordingly, a derivation
  $\Pi\dem ~ \oder w M: \at $ is said to be \emph{tight} if $\at$ is tight.
\end{Def}
A tight type has therefore shape $\at=\mdist{q_k[]}_{\kK}$, where $K$ is
possibly empty. In particular the null type $\zero$ is a tight type. Observe that if
$\at$ is tight, then $\norm {\at}=\sum_k q_k$ (which, again, is null when $K$
is empty). The following can be proved by quickly inspecting the typing
rules:
\begin{lemma}[Tight Typings for Values] \label{lem:normalV}
  If $V$ is a closed value, then there
  are precisely two tight derivations for $V$, both of weight $0$:
\[
\infer[\TVal]{\oder 0 V:\mdist{1[]}}{\infer[!]{\oder 0 V:[]}{}} \quad\quad  \infer[\TZero]{\oder 0 V:\zero}{}
\]
\end{lemma}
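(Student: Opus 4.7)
The plan is to first verify that both displayed derivations are syntactically valid, tight, and of weight $0$. For the right-hand derivation, $\TZero$ directly assigns weight $0$ and the type $\zero$, which is tight since $\zero$ is the empty multidistribution on $[]$. For the left-hand one, the $!$-rule applied with empty index set $I=\emptyset$ gives $\oder{0} V:[]$ (the premise family and the resulting context/weight all being empty), and $\TVal$ then lifts this to $\oder{0} V:\mdist{1[]}$, which is tight.

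For uniqueness, I would take an arbitrary tight derivation $\Pi \dem \oder{w} V:\at$ and proceed by case analysis on the last rule of $\Pi$, eliminating most possibilities. Since $V$ is a value (not a choice, application, or \texttt{let}), the rules $\oplus$, $@$, and $\letr$ cannot conclude $\Pi$; since $V$ is closed, it is an abstraction rather than a variable, so $\TVar$ does not apply either; and since $\at$ must be a type distribution (to be tight), the rules $\lam$ (producing an arrow) and $!$ (producing an intersection) are also excluded at the root. Only $\TZero$ and $\TVal$ remain.

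If $\TZero$ is the last rule, then immediately $\at=\zero$ and $w=0$, matching the right-hand derivation. If $\TVal$ is the last rule, the conclusion has the form $\mdist{\mA}$, a singleton distribution, so tightness forces $\mA=[]$ and hence $\at=\mdist{1[]}$. The remaining subclaim is that $\oder{w}V:[]$ admits only one derivation. By the same case analysis, the only rule that can type a closed value with an intersection type is $!$, and the $!$-rule produces $[q_i\for\At_i]_{\iI}$, which equals $[]$ exactly when $I=\emptyset$; in that instance the context is empty, the weight is $0$, and there are no premises, so the derivation is uniquely determined and matches the left-hand one.

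The main (minor) obstacle is carefully enumerating the typing rules and confirming that no non-$!$ rule can yield an intersection-type judgement on a closed value, and that only the nullary instance of $!$ produces the empty multiset $[]$; once this inventory is made, the uniqueness follows transparently from the shape of the rules.
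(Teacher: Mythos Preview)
Your proof is correct and follows exactly the approach the paper indicates (``by quickly inspecting the typing rules''): you simply spell out the rule-by-rule case analysis that the paper leaves implicit. Both the existence of the two derivations and the uniqueness argument via exclusion of all other rules at the root (and then at the $\TVal$ premise) are handled properly.
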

Looking back at Example \ref{ex:tight}, one immediately realises that
tightness allows us to get rid of overapproximations, at least for values. Does
this lift  to all terms? The next two subsections will give a positive
answer to this question. The following  property, which is immediate from the definitions,
will be useful in the rest of this section.
\begin{property}\label{fact:approximants}
  For any closed term $M$ and any  $k\in \Nat$, it holds that
$\eval k M \leq \eval {k+1} M$ and $\etime k M \leq \etime {k+1} M$. Moreover, if $M\red \mdist {q_iM_i}_{\iI}$ then
\[
\eval {k+1} M = \sum_{\iI} q_i\left(\eval {k} {M_i}\right),\qquad
\etime {k+1} M = 1+ \sum_{\iI} q_i\left(\etime{k} {M_i}\right).
\]
\end{property}

\subsection{Soundness}\label{sec:sound}
In this section, we prove the correctness of our type system. Namely,
we prove that if $\oder w M: \at$ is (tightly) derivable, then $M$ has
probability of termination at least $\norm \at$, and expected runtime
at least $w$.
  
The proof of correctness is based on the following, namely a form of
weighted subject reduction, that for good reasons has a probabilistic
flavor here.  The size of a type derivation $\Pi$ (denoted $\size
\Pi$) is the standard one, and is defined as the number of rules in
$\Pi$ (excluding the !-rule and the $\TVal$-rule, which cannot be
iterated).

\begin{lemma}[Weighted Subject Reduction]\label{lem:SubRed}\label{lem:WSR}
   Suppose that $\Pi\dem \oder{w} P:\bt$, with $w >0$, and that $P \red
  \mdist{q_iP_i}_{\iI}$. Then for every $\iI$ there exists a
  derivation $\Pi_i$ such that $\Pi_{i}\dem\oder {w_i} P_i: \bt_{i}$,
  and $\size{\Pi} > \size{\Pi_i}$. Moreover,
  $\bt= \dsum_{i\in I} q_i \bt_{i }$ and 
  $w=1+ \sum_{i\in I} q_i w_i$.		
\end{lemma}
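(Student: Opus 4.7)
The plan is to prove the statement by induction on the type derivation $\Pi$, with case analysis on its last rule and on the reduction step $P \red \mdist{q_i P_i}_{\iI}$. Since $w > 0$ and $P$ is a closed reducible term (and therefore not a value), the last rule of $\Pi$ cannot be $\TVar$, $\TZero$, $!$, or $\TVal$; so $P$ is of the form $(\lam x.N)W$, $M_1 \oplus M_2$, or $\lett N M$, and $\Pi$ ends with $@$, $\oplus$, or $\letr$ respectively. The main preliminary I will need is a \emph{substitution lemma}: if $\Gamma, x:\mA \oder{w'} N: \bt$ and $\Delta \oder{v'} V: \mA$ for a closed value $V$, then $\Gamma \uplus \Delta \oder{w'+v'} N\{V/x\}: \bt$. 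This is proved by a straightforward induction on the first derivation, decomposing the derivation of $V : \mA = [q_j\for\At_j]_{j}$ into arrow-typed pieces $V : \At_j$ via the $!$-rule and the Scaling Lemma, one for each occurrence of $x$ in $N$.

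Next I would dispatch the three easier cases. In the $\oplus$ case, the two premises of the $\oplus$-rule immediately provide $\oder{w_1} M_1: \at_1$ and $\oder{w_2} M_2: \at_2$; taking $q_1 = q_2 = \two$, the identities $\bt = \two\at_1 \dsum \two\at_2$ and $w = \two w_1 + \two w_2 + 1$ are read off the conclusion. In the $\beta$ case, the $@$-rule, combined with the shape of the $!$- and $\lam$-rules acting on $\lam x.N : [\mA \arrow \bt]$, provides a premise $\Gamma', x:\mA \oder{w'-1} N: \bt$ and a premise $\Delta \oder{v'} W: \mA$; the substitution lemma yields $N\{W/x\} : \bt$ of weight $(w'-1) + v'$, and the discarded $+1$ is exactly what is demanded by $w = 1 + 1 \cdot w_1$. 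In the $\letr V$ case, a closed value typed with a distribution can only arise via $\TZero$ or $\TVal$: the former forces the whole typing into $\zero$ and the situation is trivially discharged by retyping the reduct with $\TZero$ at weight $0$, while in the latter case the premise $\mdist{p_k\mA_k}_{\kK}$ is a Dirac and the substitution lemma delivers the reduct's derivation with matching weight.

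The main obstacle is the $\letr C$ case: $P = \lett N M$ with $N$ non-value and $N \red \mdist{p'_i N_i}_{\iI}$, so $P \red \mdist{p'_i \lett{N_i}{M}}_{\iI}$. Let $\Pi_N \oder{v} N : \mdist{p_k \mA_k}_{\kK}$ be the corresponding sub-derivation. If $v = 0$, inspection of the rules forces $\Pi_N$ to end with $\TZero$ (since $@$, $\oplus$, and $\letr$ all contribute positive weight on a closed non-value); then $K$ is empty, $\bt = \zero$, $w = 1$, and retyping each reduct with $\TZero$ discharges the bookkeeping. If $v > 0$, the induction hypothesis applied to $\Pi_N$ yields derivations $\Pi_{N,i} \oder{v_i} N_i: \at_i$ with $\dsum_\iI p'_i \at_i = \mdist{p_k \mA_k}_{\kK}$ and $v = 1 + \sum_i p'_i v_i$. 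The crucial step is that this multiset identity gives a bijection between $K$ and the disjoint union of the index sets of the $\at_i$, along which I redistribute the original $M$-premises of the $\letr$-rule; reassembling via the $\letr$-rule produces, for each $\iI$, a derivation $\Pi_i$ of $\lett{N_i}{M}$. A direct calculation, using that $\sum_i p'_i = 1$ (each one-step reduction produces a full-mass multidistribution, by an easy induction on $\red$), verifies $\dsum_i p'_i \bt_i = \dsum_k p_k \bt_k = \bt$ and $1 + \sum_i p'_i w_i = \sum_k p_k w_k + v + 1 = w$. The strict size decrease $\size{\Pi} > \size{\Pi_i}$ is immediate throughout, as each $\Pi_i$ is built from sub-derivations of $\Pi$ (and, in the $\letr C$ case, from strictly smaller derivations supplied by the induction hypothesis).
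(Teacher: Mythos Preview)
Your proof is correct and follows essentially the same route as the paper: case analysis on the shape of the step, a Substitution Lemma for the $\beta$ and $\letr V$ cases, and a recursive unfolding in the $\letr C$ case that partitions the index set $K$ along the branches $N_i$. The only structural difference is that the paper inducts on the \emph{reduction} derivation (the single recursive rule being $\letr C$), whereas you induct on the \emph{type} derivation $\Pi$ and invoke the hypothesis on the strict sub-derivation $\Pi_N$; both are sound, and your explicit treatment of the sub-case $v=0$ (forcing $\Pi_N$ to be $\TZero$, hence $K=\emptyset$, $\bt=\zero$, $w=1$) is in fact more careful than the paper, which silently assumes $v>0$ when applying its induction hypothesis.

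Two small imprecisions are worth tightening. First, your Substitution Lemma must also carry a size bound, $\size{\Pi'} \leq \size{\Pi_M}+\size{\Phi_V}$, as in the paper; without it your final sentence ``each $\Pi_i$ is built from sub-derivations of $\Pi$'' is not literally true in the $\beta$ and $\letr V$ cases (substitution manufactures a \emph{new} derivation), and the strict inequality $\size{\Pi}>\size{\Pi_i}$ needs that bound together with the discarded $@$/$\lam$ or $\letr$ node. Second, the claim that ``$@$ contributes positive weight'' is not quite how it works: the $@$ rule itself adds nothing, but on a \emph{closed} application the function position is a $\lambda$-abstraction, whose singleton intersection type $[\mA\arrow\bt]$ must come via $!$ over $\lam$, and it is the $\lam$ rule that supplies the $+1$. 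The conclusion you draw (that $v=0$ forces $\TZero$) is correct.
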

The proof is in the Appendix.
Notice how the type stays \emph{the same}, at least on the average,
while the weight strictly \emph{decreases}. This in turn implies that whenever
a term is (tightly) typable, its weight is a lower bound to its expected
time to termination, while the norm of its type is a lower bound to the
probability of termination. This is proved by way of approximations, as
follows:
\begin{theorem}[Finitary Soundness ]\label{th:correct}
  Let $M$ be a \emph{closed} term.  For each tight typing $\oder w M:
  \bt$, there exists $k\in \Nat$ such that
  $ \norm \bt \leq \eval k M$  and 
    $w \leq \etime k M $.
\end{theorem}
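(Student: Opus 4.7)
The plan is to proceed by induction on the size $\size{\Pi}$ of the derivation $\Pi \dem \oder w M : \bt$, using Weighted Subject Reduction (Lemma \ref{lem:WSR}) as the engine of the inductive step and Property \ref{fact:approximants} to lift the bounds on reducts to bounds on $M$ itself. Throughout, I would rely on the easy fact that tightness of $\bt$ is inherited by each $\bt_i$ in any decomposition $\bt = \dsum_\iI q_i \bt_i$, since a multidistribution on $[]$ can only be split into multidistributions on $[]$.

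First I would handle the base cases. If $M$ is a (closed) value, then Lemma \ref{lem:normalV} forces $w = 0$ and $\bt \in \{\zero, \mdist{1[]}\}$, whence $\norm\bt \leq 1 = \eval 0 M$ and $w = 0 = \etime 0 M$, so $k = 0$ works. If $M$ is non-value but $w = 0$, I would argue by inspection of the last rule of $\Pi$ that $\bt$ must be $\zero$: the $\oplus$, $\letr$, and $\lam$ rules strictly increment the weight, the $\TVar$-rule is inapplicable since $M$ is closed, and the $@$-rule would force typing some $\lambda$-abstraction with the nonempty singleton intersection $[\mA\arrow \bt]$, which through the $!$-rule necessarily threads at least one instance of the $\lam$-rule contributing weight $\geq 1$; the only remaining possibility is $\TZero$, yielding $\bt = \zero$ and $\norm\bt = 0$, so again $k = 0$ suffices.

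For the inductive step, $M$ is non-value and $w > 0$, hence $M$ is reducible (closedness plus non-value), say $M \red \mdist{q_i M_i}_{\iI}$ for a finite nonempty $I$. Weighted Subject Reduction delivers tight derivations $\Pi_i \dem \oder{w_i} M_i : \bt_i$ of strictly smaller size, with $\bt = \dsum_\iI q_i \bt_i$ and $w = 1 + \sum_\iI q_i w_i$. The induction hypothesis applied to each $\Pi_i$ yields some $k_i \in \Nat$ with $\norm{\bt_i} \leq \eval{k_i}{M_i}$ and $w_i \leq \etime{k_i}{M_i}$. Setting $k := 1 + \max_\iI k_i$, the monotonicity and the recurrences of Property \ref{fact:approximants} give
\[
\norm\bt = \sum_\iI q_i \norm{\bt_i} \leq \sum_\iI q_i \eval{k-1}{M_i} = \eval k M,
\]
\[
w = 1 + \sum_\iI q_i w_i \leq 1 + \sum_\iI q_i \etime{k-1}{M_i} = \etime k M.
\]

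The main obstacle I expect is the rigorous handling of the non-value, $w = 0$ case: eliminating tight derivations of weight $0$ for non-value closed terms with $\norm\bt > 0$ requires tracing, through the $!$- and $\TVal$-rules, how every nonempty intersection type at an applicative position ultimately forces a $\lam$-rule (contributing weight $\geq 1$) on the function side. Once this small technical point is settled, everything else is the direct, deterministic composition of Weighted Subject Reduction with the clean recurrences of Property \ref{fact:approximants}.
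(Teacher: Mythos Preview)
Your proposal is correct and follows essentially the same approach as the paper: induction on $\size{\Pi}$, with the value case handled by Lemma~\ref{lem:normalV}, the non-value $w=0$ case dispatched by rule inspection yielding $\bt=\zero$, and the non-value $w>0$ case driven by Weighted Subject Reduction plus Property~\ref{fact:approximants} with $k=1+\max_i k_i$. Your treatment of the $w=0$ case is somewhat more explicit than the paper's (which simply states ``by inspecting the rules''), but the substance is identical.
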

\begin{proof}
  By induction on the size $\size{\Pi}$ of the type derivation $\Pi$
  such that $\Pi\dem\oder w M$, distinguishing some cases. Recall that
  for closed terms, the normal forms are exactly the values.
  \begin{varitemize}
  \item
    If $M$ is a value, the claim holds by Lemma \ref{lem:normalV}, where we observe that  $w=0$. 
    Notice that  $\eval 0 M =\nnorm {\mdist{M}}=1$ and   
    $\etime 0 M=0$.
  \item
   Otherwise, if $M$ is not a value,  we further distinguish some cases:
    \begin{varitemize}
    \item If $w=0$, then by inspecting the rules, we see that the only
      derivable tight judgment is $\oder 0 M:\zero$ which trivially
      satisfies the claim, with $k=0$.
    \item If $w>0$, then since $M$ is not normal, it has a reduction
      step $M \red \mdist{q_iM_i}_{\iI}$.  By Weighted Subject
      Reduction (Lemma \ref{lem:WSR}), we derive that for each $\iI$
      there exists a derivation $\Pi_i \dem \oder{w_i} M_i:\bt_i$,
      with $\size {\Pi_i} <\size {\Pi}$. Since $\bt$ is tight,
      necessarily each $\bt_i$ also is tight, again by
      Lemma~\ref{lem:WSR} (observe also that, if $\bt=\zero$, then
      $\bt_i=\zero$).  By \ih, for each $M_i (\iI)$ there exists
      $k_i\in \Nat$ which satisfies the conditions on $\norm{\bt_i}$
      and $w_i$.  Let $h=max \{k_i\}_{\iI}$.  Since ${h}\geq k_i$, for
      each $\iI$ we have $\norm {\bt_i} \leq \eval {h} {M_i}$ and $w_i
      \leq \etime{h} {M_i}$.  Moreover, Weighted Subject Reduction
      implies also that $ \norm \bt =\sum_i q_i \norm {\bt_i}$ and $w=
      1 + \sum_i q_i w_i$.  The claim follows easily by
      Property~\ref{fact:approximants}, with $k=h+1$. Indeed $\eval
      {h+1} M = \big(\sum_{\iI} q_i \eval {h} {M_i}\big) \geq \big(
      \sum_{\iI} q_i \norm{\bt_i}\big)= \norm{\bt}$ and $\etime {h+1}
      M = \big(1+ \sum_{\iI} q_i \etime {h} {M_i}\big) \geq \big(1+
      \sum_{\iI} q_i w_i \big)= w$.
    \end{varitemize}
  \end{varitemize}
  Since there are no other cases, we are done.
\end{proof}
Observe that Theorem~\ref{th:correct} holds for every tight type $\bt$, including the null type. Thus it has 
the following immediate consequence:
\begin{cor}[Finitary Soundness of Null Typing]\label{cor:zero_correct}
  Let $M$ be a \emph{closed} term such that $\oder w M:
  \zero$. Then there exists $k\in \Nat$ such that $w \leq \etime k M$.
\end{cor}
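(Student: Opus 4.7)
The plan is to derive this as an essentially immediate consequence of Theorem~\ref{th:correct}, relying only on the observation that the null type $\zero$ falls within the class of tight types to which that theorem applies.

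First I would verify that $\zero$ qualifies as a tight type under the definition just given. The definition states that a type $\at$ is tight when it is a multidistribution on the empty intersection type $[]$. Since $\zero$ denotes the empty multidistribution (which is vacuously a multidistribution concentrated on $[]$), it satisfies this condition; accordingly, any derivation $\Pi \dem \oder{w} M : \zero$ is a tight derivation.

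Next I would invoke Theorem~\ref{th:correct} directly, instantiated with $\bt := \zero$. The theorem guarantees the existence of some $k \in \Nat$ such that simultaneously $\norm{\zero} \leq \eval{k}{M}$ and $w \leq \etime{k}{M}$. The first inequality is trivially satisfied, since $\norm{\zero} = 0$ while $\eval{k}{M} \geq 0$ for every $k$. The second inequality is exactly the conclusion we seek, so we are done.

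There is no real obstacle here: the entire content of the corollary is that the soundness statement of Theorem~\ref{th:correct} specializes correctly to the null type, and the only thing to check is that $\zero$ meets the tightness criterion. Since the grammar for tight types $\mdist{q_k[]}_{\kK}$ explicitly allows $K = \emptyset$ (as already remarked right after the definition of tight types), the specialization is immediate and requires no further argument beyond quoting Theorem~\ref{th:correct}.
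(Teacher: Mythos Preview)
Your proposal is correct and matches the paper's own argument exactly: the paper simply notes that Theorem~\ref{th:correct} applies to every tight type, including $\zero$, and obtains the corollary as an immediate consequence. Your explicit verification that $\zero$ is tight (via the empty index set $K$) is precisely the one observation needed, and the rest is just instantiation.
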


\subsection{Completeness}\label{sec:complete}
The last section showed that type derivations provide lower bounds on
the probability of convergence, and on the expected time to
termination. It is now time to prove that \emph{tight} derivations
approximate \emph{with arbitrary precision} the aforementioned
quantities.  The proof of completeness is based on the following
probabilistic adaptation of Subject Expansion.
\begin{lemma}[Weighted Subject Expansion]\label{lem:SubEx}
  Let $P$ be a closed term.  Assume that $P\red \mdist{q_i P_i}_{\iI}$ and
  that for each $\iI$, $\Pi_i \dem \oder {w_i} P_i:\at_i $. Then, there exists a
  single derivation $\Pi\dem \oder {w} P:\at $ such that $\at=\dsum_i
  (q_i\at_i)$ and $w\geq 1+\sum q_i w_i$.
\end{lemma}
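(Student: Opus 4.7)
The plan is to prove Weighted Subject Expansion by case analysis on the one-step reduction rule applied to $P$, with an induction on the size of $P$ to discharge the contextual $\letr C$ case. The key auxiliary ingredient will be an Anti-Substitution Lemma: if $\Pi \dem \Gamma \oder{w} M\subs{x}{V}: \at$, then there exist an intersection type $\mA$, contexts $\Gamma', \Gamma''$ with $\Gamma = \Gamma' \uplus \Gamma''$, and weights $w', w''$ with $w = w' + w''$, such that $\Gamma', x:\mA \oder{w'} M: \at$ and $\Gamma'' \oder{w''} V: \mA$. I would prove this by structural induction on $\Pi$, using the $!$-rule to aggregate the various types assigned to the distinct occurrences of $x$, with scaling preserved by Lemma \ref{lem:scale}.

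For the non-contextual reduction rules, the case analysis is routine. In the $\beta$-case $P = (\lam x.M)V \red \mdist{M\subs{x}{V}}$ (a single reduct with $q_1 = 1$), I apply anti-substitution to the unique given derivation, then the $\lam$-rule (adding $1$ to the weight and producing arrow type $\mA \arrow \at_1$), then $!$ with a single premise at scale $1$ to get $\mset{\mA \arrow \at_1}$, then $@$ to obtain $\oder{w_1+1} (\lam x.M)V: \at_1$. In the $\letr V$ case, I apply anti-substitution and then $\TVal$ followed by a $\letr$-rule with a single premise at scale $1$, obtaining weight $w_1 + 1$. In the $\oplus$ case, I directly apply the $\oplus$-rule to the two given derivations, producing type $\two \at_1 \dsum \two \at_2$ and weight $\two w_1 + \two w_2 + 1 = 1 + \sum_i q_i w_i$. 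All three cases realize the weight bound with equality.

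The main obstacle is the $\letr C$ case, where $P = \lett{N}{M}$ with $N \red \mdist{q_i N_i}_{\iI}$, so $P \red \mdist{q_i \lett{N_i}{M}}_i$. Each given derivation $\Pi_i$ ends either with $\TZero$ (in which case $\at_i = \zero$, $w_i = 0$, and I supply the trivial derivation $\oder{0} N_i: \zero$ with empty index set $K_i = \emptyset$) or with the $\letr$-rule, whose premises yield $\oder{v_i} N_i: \dist{p^i_k \mA^i_k}_{k\in K_i}$ and $\Delta^i_k, x:\mA^i_k \oder{w^i_k} M: \bt^i_k$ for each $k$. Applying the induction hypothesis to the strictly smaller reduction $N \red \mdist{q_i N_i}_i$, I combine the $N_i$-derivations into a single $\oder{v} N: \dsum_i q_i \at^{(N)}_i$ with $v \geq 1 + \sum_i q_i v_i$. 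Then I apply the $\letr$-rule to this $N$-derivation together with the collected $M$-premises indexed by pairs $(i,k) \in \bigsqcup_{i \in I_L} K_i$ at scales $q_i p^i_k$, yielding a derivation for $P$ of type $\dsum_i q_i \at_i$ and weight $w = v + \sum_{(i,k)} q_i p^i_k w^i_k + 1$. A direct calculation then gives $w \geq 2 + \sum_{i \in I_L} q_i(v_i + \sum_k p^i_k w^i_k)$, while $1 + \sum_i q_i w_i = 1 + \sum_{i \in I_L} q_i + \sum_{i \in I_L} q_i(v_i + \sum_k p^i_k w^i_k)$; the desired bound $w \geq 1 + \sum_i q_i w_i$ thus reduces to $\sum_{i \in I_L} q_i \leq 1$, which holds since $\sum_i q_i = 1$. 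The slack in this inequality---present precisely when some $\Pi_i$'s are $\TZero$ derivations---is exactly what forces the statement to use $\geq$ rather than equality.
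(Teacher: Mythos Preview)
Your proposal is correct and follows essentially the same route as the paper: case analysis on the reduction rule, anti-substitution for the $\beta$ and $\letr V$ base cases, direct $\oplus$-rule for probabilistic choice, and an inductive appeal on the smaller redex $N$ in the $\letr C$ case. Your treatment of the mixed $\letr C$ sub-case is slightly more uniform than the paper's---you directly supply $\oder{0}N_i:\zero$ with $K_i=\emptyset$ for the $\TZero$-typed branches and carry out a single weight computation, whereas the paper first \emph{replaces} each $\TZero$-derivation $\Pi_i$ by the weight-$1$ derivation $\oder{1}\lett{N_i}{M}:\zero$ and then invokes the all-$\letr$ sub-case---but the arithmetic is the same and both identify $\sum_{i\in I_L}q_i\leq 1$ as the source of the slack in the inequality $w\geq 1+\sum_i q_iw_i$.
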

The proof is in the Appendix.
We can now thus prove the dual to Theorem~\ref{th:correct} above:
\begin{theorem}[Finitary Completeness]\label{th:complete}
  Let $M$ be a closed term.  For each $k\in \Nat$ there exists a
  \emph{tight} derivation $\Pi\dem ~ \oder w M:\at$, such that $\norm
  \at = \eval k M$ and $w \geq \etime k M $.
\end{theorem}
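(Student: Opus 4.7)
The plan is to proceed by induction on $k$, using Weighted Subject Expansion (Lemma~\ref{lem:SubEx}) as the engine that lifts tight derivations for one-step reducts back to a tight derivation for $M$, and using Property~\ref{fact:approximants} to match the arithmetic of approximants with the arithmetic of the $\letr$/$\oplus$ typing rules.

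For the base case $k=0$, recall that $\eval 0 M = \nnorm{\mdist{M}}$, which equals $1$ if $M$ is a value and $0$ otherwise, and $\etime 0 M = 0$. If $M$ is a (closed) value, Lemma~\ref{lem:normalV} gives the tight derivation $\oder{0} M : \mdist{1[]}$, whose type has norm $1$ and whose weight $0$ meets both requirements. If $M$ is not a value, the rule $\TZero$ yields $\oder{0} M : \zero$, whose norm is $0$ and whose weight is $0$, matching both quantities exactly.

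For the inductive step, assume the claim for $k$ and fix a closed $M$. If $M$ is a value, then $M$ is normal, so $\eval{k+1}M = 1$ and $\etime{k+1}M = 0$, and the base-case derivation $\oder{0} M : \mdist{1[]}$ again suffices. Otherwise $M$ is reducible, so there is a unique $M \red \mdist{q_i M_i}_{\iI}$. By the induction hypothesis, for each $\iI$ there is a tight derivation $\Pi_i \dem \oder{w_i} M_i : \at_i$ with $\norm{\at_i} = \eval{k}{M_i}$ and $w_i \geq \etime{k}{M_i}$. Apply Weighted Subject Expansion (Lemma~\ref{lem:SubEx}) to collect these into a single derivation $\Pi \dem \oder{w} M : \at$ with $\at = \dsum_i q_i \at_i$ and $w \geq 1 + \sum_i q_i w_i$. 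Since each $\at_i$ is a multidistribution on the empty intersection $[]$, so is $\at$; hence $\Pi$ is tight. Using Property~\ref{fact:approximants}, we conclude
\[
\norm{\at} \;=\; \sum_i q_i \norm{\at_i} \;=\; \sum_i q_i \eval{k}{M_i} \;=\; \eval{k+1}{M},
\]
and
\[
w \;\geq\; 1 + \sum_i q_i w_i \;\geq\; 1 + \sum_i q_i \etime{k}{M_i} \;=\; \etime{k+1}{M}.
\]

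The two delicate points, both already handled by earlier machinery, are: (i)~that Weighted Subject Expansion produces an \emph{aggregated} type exactly equal to $\dsum_i q_i \at_i$ (so equality of norms---not just inequality---is preserved), and (ii)~that tightness is closed under the scaled disjoint sum of type distributions, which is immediate from the definition of tight type. The only nontrivial ingredient is Lemma~\ref{lem:SubEx} itself, which is where the typing rules for $\oplus$ and $\letr$ must be shown to be flexible enough to assemble independent derivations of the $M_i$ with arbitrary coefficients into a single derivation of $M$; once that lemma is in hand, the induction above is routine.
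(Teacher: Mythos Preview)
Your proof is correct and follows essentially the same approach as the paper: induction on $k$, with the value case handled by Lemma~\ref{lem:normalV}, the non-value base case by the $\TZero$ rule, and the inductive step by applying Weighted Subject Expansion to the tight derivations obtained for the one-step reducts, then invoking Property~\ref{fact:approximants}. The only cosmetic difference is that the paper branches first on whether $M$ is a value and then on $k$, whereas you branch first on $k$; your explicit remark that tightness is closed under scaled disjoint sum is in fact slightly more careful than the paper's presentation.
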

\begin{proof}
	By induction on $ k$, distinguishing some cases.
        \begin{varitemize}
	\item  If $M$ is a value, then for each $k$,  $\eval k M=1$ and $\etime k M =0$.   The  derivation 
	$\oder 0 M:\mdist{1[]}$ ( Lemma~\ref{lem:normalV}) satisfies the claim. 
	\item Otherwise, if $M$ is not a value: 
	\begin{varitemize}
		\item If $k=0$, we have  $\eval 0 M=0 = \etime 0 M$; the 
		 $\TZero$-rule satisfies the claim.

		\item
		If $k>0$, assume $M\red \mdist{p_i M_i}_{\iI}$.  By \ih, for each $i\in I$, there
		exists a tight derivation $\Pi_i\dem ~ \oder {w_i} {M_i}: \at_i$, such
		that $\norm {\at_i}=\eval {k-1}{M_i}$ and $w_i \geq  \etime {k-1}{M_i}
		$. By Weighted Subject Expansion, there exists a tight derivation
		$\Phi\dem \oder {w} M: \at $ such that $\norm {\at}=  \sum_{\iI} p_i \norm {\at_i} $, and $w\geq  \sum_{\iI}p_iw_i$. We conclude  by Property~\ref{fact:approximants}, because 
	 	  $\eval k M =  \sum_{\iI}	p_i	\eval {k-1} {M_i}  = \sum_{\iI}	p_i	\norm{\at_i}=\norm \at$
	 and 
	  $\etime k M = 1+ \sum_{\iI}	p_i	\etime {k-1} {M_i}  \leq 1+ \sum_{\iI}	p_i	w_i \leq  w$.
	\end{varitemize}
\end{varitemize}
\end{proof}

\subsection{The Various Flavours of a Correspondence}
\newcommand{\W}[1]{\mathbf{W}(#1)}
\newcommand{\N}[1]{\mathbf{N}(#1)}
This section is devoted to characterisation results
relating typing and termination. The latter can be given in three
different ways, and we devote a subsection to each of them.

\subsubsection{A Uniform Characterization}
A characterization of both $\ProbTerm M$ and $\ETime M$
by \emph{the same class} of derivations, namely tight
derivations, can be given as follows:
\begin{thm}[Tight Typing and Termination]\label{th:term_tight}\label{thm:characterization}
  Let $ M$ be a closed term.  Then
  \begin{align*}
  \ProbTerm M &= \sup  \{ \norm {\at}\st \exists\Pi.\Pi\dem  \vdash M:\at \mbox{ is a tight derivation} \}\\
  \ETime M  &=  \sup \{w  \st\exists\Pi.\exists\at.\Pi\dem  \oder w M: \at \mbox{ is  tight derivation} \}
  \end{align*}
\end{thm}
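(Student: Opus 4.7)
The plan is to derive the theorem as an immediate consequence of the two finitary results already established, namely Finitary Soundness (Theorem~\ref{th:correct}) and Finitary Completeness (Theorem~\ref{th:complete}). Since $\ProbTerm M = \sup_k \eval k M$ and $\ETime M = \sup_k \etime k M$ by definition, the argument reduces to showing two inequalities in each equation, corresponding to the two directions.

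For the $(\geq)$ direction of both equations, I would use soundness. Fix any tight derivation $\Pi \dem \oder w M : \at$. By Theorem~\ref{th:correct}, there is some $k \in \Nat$ with $\norm \at \leq \eval k M$ and $w \leq \etime k M$. Hence $\norm \at \leq \sup_k \eval k M = \ProbTerm M$ and $w \leq \sup_k \etime k M = \ETime M$. Taking suprema over all tight derivations on the left yields
\[
\sup\{\norm \at \mid \Pi \dem \vdash M : \at \text{ tight}\} \leq \ProbTerm M, \qquad \sup\{w \mid \Pi \dem \oder w M : \at \text{ tight}\} \leq \ETime M.
\]

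For the $(\leq)$ direction, I would use completeness. Fix any $k \in \Nat$. By Theorem~\ref{th:complete}, there exists a tight derivation $\Pi \dem \oder w M : \at$ with $\norm \at = \eval k M$ and $w \geq \etime k M$. Hence $\eval k M \leq \sup\{\norm \at \mid \Pi \text{ tight}\}$ and $\etime k M \leq \sup\{w \mid \Pi \text{ tight}\}$. Taking the supremum over $k$ on the left produces $\ProbTerm M \leq \sup\{\norm \at\}$ and $\ETime M \leq \sup\{w\}$.

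Combining the two directions gives the two claimed equalities. There is no genuine obstacle to this argument: all the heavy lifting sits inside the proofs of weighted subject reduction and weighted subject expansion, from which the finitary soundness and completeness theorems were already derived. The only subtlety worth flagging explicitly is that the right-hand supremum in the $\ETime$ equation might a priori seem to depend on the choice of tight $\at$, but since we are taking $\sup$ over \emph{all} pairs $(w,\at)$ with $\at$ tight (including $\at = \zero$, which is permitted and in fact needed for diverging terms by Corollary~\ref{cor:zero_correct} and the discussion in Section~\ref{sec:too_little}), no case is missed.
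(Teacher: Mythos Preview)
Your proposal is correct and follows essentially the same approach as the paper: both directions are obtained directly from Finitary Soundness (Theorem~\ref{th:correct}) and Finitary Completeness (Theorem~\ref{th:complete}), combined with the definitions $\ProbTerm M = \sup_k \eval k M$ and $\ETime M = \sup_k \etime k M$. The paper's proof is organized identically, merely introducing the notation $\N M$ and $\W M$ for the two suprema before carrying out the same two-inequality argument.
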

\begin{proof} 
  Let us first of all define $\N{M}$ and $\W{M}$ as follows:
  \begin{align*}
    \N{M}&\deff \sup \{ \norm {\at}\st\exists\Pi.\Pi\dem  \vdash M:\at \mbox{ is a tight derivation} \}\\
    \W{M}&\deff \sup \{w  \st\exists\Pi.\exists\at.\Pi\dem  \oder w M: \at \mbox{ is  tight derivation} \}
  \end{align*}
  We now proceed by proving the following two statements:
  \begin{varitemize}
  \item
    On the one hand, we prove that $ \N M = \ProbTerm M $, recalling
    that we defined $\ProbTerm M$ as $\sup\{\eval n M\st n\in\Nat \}
    $.
    \begin{varitemize}
    \item
      Let $ \Phi\dem \oder {} M: \at$ be a tight derivation. By
      Correctness (Theorem \ref{th:correct}), $\norm \at \leq
      \ProbTerm M$. Hence $\N M\leq \ProbTerm M$.
    \item
      By Finitary Completeness (Theorem \ref{th:complete}), for each $k\in
      \Nat$ there exists a tight derivation $\Pi \dem \oder {} M: \at
      $ such that $\eval k M \leq  \norm \at$.  Hence $\N M= \ProbTerm M$.
    \end{varitemize}
  \item
    On the other hand, we prove that
    $\W M = \ETime M $, recalling that $\ETime M$ is defined as $\sup\{ \etime n M\st n\in\Nat \}$.
    \begin{varitemize}
    \item
      Let $w $ be the weight of a tight derivation $ \Phi\dem \oder w
      M: \at$. By Correctness (Theorem \ref{th:correct}), $ w\leq
      \ETime M $. Hence $\W M\leq \ETime M$.
    \item
      Again by Finitary Completeness (Theorem \ref{th:complete}), for each $k\in
      \Nat$ there exists a tight derivation $\Pi \dem \oder w M: \at $
      such that $\etime k M \leq w$.  Hence $\W M= \ETime M$.
    \end{varitemize}
  \end{varitemize}
\end{proof}

By definition, $M$ is \AST iff $\ProbTerm{M}=1$, while $M$ is \PAST iff $ \ETime M$ is finite.
As a consequence:
\begin{cor}[Tight Typing, \AST, and \PAST] \label{cor:char_tight}
  Let $ M$ be a closed term. Then:
    $M$ is \AST iff $ \sup \{ \norm {\at}\st\exists\Pi.\Pi\dem \vdash M:\at
    \mbox{ is a tight derivation} \} =1 $. Moreover, 
    $M$ is \PAST iff $ \sup \{w \st\exists\Pi.\exists\at.\Pi\dem \oder w M: \at \mbox{ is
    tight derivation} \} < \infty $.
\end{cor}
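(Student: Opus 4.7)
The plan is to derive this corollary immediately from Theorem~\ref{th:term_tight} (the uniform characterization) combined with the definitions of \AST and \PAST given in Section~\ref{sec:ASTPAST}. Since Theorem~\ref{th:term_tight} has already established the two equalities
\[
\ProbTerm{M} = \sup\{\norm{\at} \mid \exists \Pi.\ \Pi \dem \vdash M:\at \text{ tight}\},\qquad
\ETime{M} = \sup\{w \mid \exists \Pi,\at.\ \Pi \dem \oder{w} M:\at \text{ tight}\},
\]
the corollary is essentially a change of viewpoint: we replace each side of the inequalities defining \AST and \PAST with the corresponding typing-based supremum.

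More concretely, for the first equivalence I would unfold the definition of \AST: by definition, $M$ is \AST iff $\ProbTerm{M}=1$. Using the first equation of Theorem~\ref{th:term_tight}, this condition is equivalent to $\sup\{\norm{\at} \mid \exists\Pi.\ \Pi \dem \vdash M:\at \text{ tight}\} = 1$, which is exactly the claimed statement. For the second equivalence, $M$ is \PAST iff $\ETime{M}<\infty$, and applying the second equation of Theorem~\ref{th:term_tight} gives precisely $\sup\{w \mid \exists\Pi,\at.\ \Pi \dem \oder{w} M:\at \text{ tight}\} < \infty$.

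There is essentially no obstacle here, since the corollary is a direct rephrasing. The only thing worth noting explicitly is that the suprema on the right-hand sides are well defined: in the first case, $\norm{\at} \in [0,1]$ for any tight type $\at$ (hence the supremum is bounded by $1$ and equals $1$ iff it is attained in the limit), and in the second case the weights $w$ range over $\Qnum_{\geq 0}$, so that $<\infty$ captures exactly the finiteness of the expected runtime. No additional lemmas beyond Theorem~\ref{th:term_tight} and the definitions of \AST, \PAST are needed, and the proof can be written in just a couple of lines.
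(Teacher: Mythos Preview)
Your proposal is correct and matches the paper's approach exactly: the paper derives the corollary in a single line from Theorem~\ref{th:term_tight} together with the definitions of \AST and \PAST, precisely as you do.
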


\newcommand{\ww}{\mathbf{w}} \newcommand{\oh}{\mathbf{v}}
\begin{example}

  In Section~\ref{sec:examples_CbV} we have discussed tight
  derivations for our running example $\mex\mex$. Each derivation
  $\Sigma_i$ for $\mex$ has constant weight
  $2$. Collecting and scaling the  $j+1$ derivations $\Sigma_{j+1},  \Sigma_j,\ldots,
  \Sigma_1$,  we obtained a derivation  $\Phi_{j+1}$ for $\mex\mex$, with  weight 
  $2(1+\two +... + \frac{1}{2^{j}})= 2(\sum_{n\leq j} \frac{1}{2^n}) <4$.  Let us 
  also sketch how type derivations can be built for the terms
  $\cex\cex$ and $\expl{\cex\cex}$ from Example \ref{ex:succ}.
    \begin{varitemize}
  \item
     We can build a tight type derivation $\Xi_i$ for $\cex=
     \Big(\lam x.  ( \plett z {xx} {SUCC\, z}) \oplus \nzero \Big)$ by
     following the blueprint of the derivation $\Sigma_i$ for $
     \mex$. The weight of each  $\Xi_i$ is  $2+ \two v_{i}$, where the weight 
      $ v_i$   is contributed by the  $\letr$ rule, and  increases as $i$ increases, because the $\letr$ rule has more than one premiss.
     By  collecting and appropriately scaling the  derivations
     $\Xi_{j},\dots, \Xi_1$, and putting this together with $\Xi_{j+1}$, we then obtain a type derivation $\Psi_{j+1}$
     for $\cex\cex$ (similarly to what we have done to obtain
     $\Phi_{j+1}$). The weight has now  a bound similar to that for
     $\Phi_{j+1}$, plus an overhead which is obtained by summing the  scaled $ \two v_i$'s, 
     giving an overall weight  $w< 4+ \const$.     
   \item
     An even more interesting term is $\expl{\cex\cex}$.  We can build
     tight derivations for it from appropriate derivations for $\cex\cex$.
     It is clear that the term $\EXPL~\overline n$
     can be given a tight derivation of weight (at least)
     $2^n$, in a standard way.
     From there, for every $j$, a tight type derivation having weight
     at least $\two \sum_{n\leq j} \frac{2^n}{2^n}=\frac{j+1}{2}$ can be built,
     so the set of tight weights is unbounded.
\end{varitemize}
\end{example}

\subsubsection{Focusing on Expected Runtimes.}
Remember that $\PAST\subset\AST$.  The previous characterisation may
give the impression that analysing the runtime of a term somehow
requires studying its probability of termination.  In fact, intersection
types allow us to establish \PAST \emph{independently} from \AST, by
looking only at the type $\zero$ rather than at \emph{all} tight
typings. The results we are going to prove tell us that if we are only interested in
the expected runtime, we can indeed \emph{limit the search space}
to the derivations of the null type $\zero$.  First of all,
a strengthening of Finitary Completeness can be given.
 \begin{prop}[Finitary Completeness of Null Typing]\label{th:zero_complete}
   Let  $M$ be a closed term. 
   Then, for each $k\in \Nat$ there exists a derivation $\Pi\dem ~
   \oder w M:\zero$, such that $w \geq \etime k M $.
 \end{prop}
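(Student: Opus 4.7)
The plan is to mimic the proof of Finitary Completeness (Theorem~\ref{th:complete}), but specialising the choice of type at every step to the null type $\zero$. The point is that we are no longer required to track the probability of termination precisely (that was the role of the condition $\norm \at = \eval k M$), but only to provide the lower bound $w \geq \etime k M$ on the weight; this frees us to pick the simplest possible tight type, namely $\zero$, uniformly at every inductive step. The key algebraic fact enabling this is that $\zero$ is absorbing for the scaling and disjoint-sum operations used by Weighted Subject Expansion: $\dsum_i p_i \cdot \zero = \zero$, so lifting a family of derivations of $\zero$ through one reduction step still yields a derivation of $\zero$.

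I would proceed by induction on $k \in \Nat$, mirroring the case analysis of Theorem~\ref{th:complete}. For $k = 0$ we have $\etime 0 M = 0$, and the rule $\TZero$ immediately furnishes $\oder 0 M : \zero$, which trivially satisfies the inequality. If $M$ is a value, then $\etime k M = 0$ for every $k$ (as values are normal forms in the closed case), and again the $\TZero$ axiom suffices. Otherwise, when $k > 0$ and $M$ is not a value, $M$ admits a unique reduction step $M \red \mdist{p_i M_i}_{i \in I}$; by the induction hypothesis applied to each $M_i$ at index $k - 1$, I obtain derivations $\Pi_i \dem \oder{w_i} M_i : \zero$ with $w_i \geq \etime{k-1}{M_i}$.

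I then invoke Weighted Subject Expansion (Lemma~\ref{lem:SubEx}) on this family of derivations: it produces a single derivation $\Pi \dem \oder w M : \at$ with $\at = \dsum_i p_i \at_i$ and $w \geq 1 + \sum_i p_i w_i$. Since each $\at_i = \zero$, the type $\at$ computed by the lemma collapses to $\zero$, and hence $\Pi$ is in fact a derivation of the null type for $M$. Combining the weight inequality returned by Subject Expansion with the inductive bounds and the recurrence $\etime k M = 1 + \sum_i p_i \etime{k-1}{M_i}$ from Property~\ref{fact:approximants}, I conclude $w \geq 1 + \sum_i p_i w_i \geq 1 + \sum_i p_i \etime{k-1}{M_i} = \etime k M$, as required.

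There is essentially no hard step here; the only thing to verify carefully is that applying Subject Expansion to a family of null-typed derivations yields a null-typed derivation, which boils down to checking that the type constructor $\dsum$ behaves as expected on the empty multidistribution. This is the only place where the argument differs from the one for Theorem~\ref{th:complete}, and it is purely a remark about the definition of $\dsum$ and scaling on multidistributions introduced in Section~\ref{sec:preliminariesM}.
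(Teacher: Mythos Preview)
Your proposal is correct and follows essentially the same approach as the paper: induction on $k$, using the $\TZero$ rule for the base case (and for values), and in the inductive step invoking Weighted Subject Expansion on null-typed derivations of the reducts, relying on the fact that $\dsum_i p_i\cdot\zero = \zero$ so that the expanded derivation still has type $\zero$. The paper's own proof is in fact just a terse sketch pointing to Theorem~\ref{th:complete} together with the observation that subject expansion preserves the null typing; your write-up is a faithful (and slightly more explicit) unfolding of that sketch.
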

 \begin{proof} 
   The proof is a simplification of Theorem \ref{th:complete}. We only
   need to observe  that the $\zero$ typing is preserved by subject
   expansion, and the weight strictly increases along it.
   
   \SLV{}{As
   	before, we reason by induction on $ k$.
   	\begin{varitemize}
   		\item $k=0$. We have   $\etime 0 M=0$, and the $\TZero$-rule satisfies the claim.
   		\item $k>0$. Assume  $M\red \mdist{p_i M_i}_{\iI} \full^{k-1} \m$.  		
   		For concreteness (but w.l.o.g.), let us discuss the   instance $I=\{1,2\}$,
   		$p_i=\two$.
   		From   $M\red \mdist{\two M_1, \two M_2} \full^{k-1} \m$ we have that  $\m=\m_1\uplus\m_2$, with $\mdist{M_i}\full^{k-1}\m_i$. 	By \ih, for each  $i\in \{1,2\}$, 
   		there exists a  derivation   $\Pi_i\dem ~ \oder w_i M_i: \zero$, such that  $w_i = \etime {k-1}  {M_i} $. By
   		Weighted Subject Expansion,  there exists a   derivation  $\Phi\dem  \oder {w} M: \zero$ such that 
   		$w= \two w_1+ \two w_2+1$. This proves the claim,  because  $\etime k M = 1+ \two \etime {k-1} {M_1}  +  \two \etime {k-1} {M_2} = 1+  \two w_1+ \two w_2$.
   	\end{varitemize}
   }
 \end{proof}
 Since Finitary Soundness holds \emph{at all types}, we can easily reach
 the following:
\begin{thm}[Null Typing, Expected Runtimes, and \PAST]\label{thm:char_null}
  Let $ M$ be a closed term.  Then:
  $$
  \ETime M  =   \sup \{w  \st \Pi\dem  \oder w M: \zero \}\qquad
  M\in\PAST\Leftrightarrow\sup \{w  \st \Pi\dem  \oder w M: \zero \} < \infty
  $$
\end{thm}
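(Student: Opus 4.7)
The plan is to split the equality $\ETime M = \sup\{w \mid \Pi\dem \oder w M:\zero\}$ into the two usual inequalities, and then obtain the PAST characterisation as an immediate corollary. Both inequalities reduce to previously proved results, so the proof is essentially a packaging argument: the technical content was concentrated in Corollary~\ref{cor:zero_correct} (Finitary Soundness of Null Typing) and Proposition~\ref{th:zero_complete} (Finitary Completeness of Null Typing).

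For the inequality $\sup\{w \mid \Pi\dem \oder w M:\zero\} \leq \ETime M$, I would fix an arbitrary derivation $\Pi\dem \oder w M:\zero$ and invoke Corollary~\ref{cor:zero_correct} to obtain a $k\in\Nat$ with $w \leq \etime k M$. Since $\etime k M \leq \sup_n \etime n M = \ETime M$ by the definition of expected runtime and Property~\ref{fact:approximants}, we get $w \leq \ETime M$. Taking the supremum over all such $w$ yields the desired inequality (which holds even when $\ETime M = \infty$, in which case the bound is vacuous).

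For the converse inequality $\ETime M \leq \sup\{w \mid \Pi\dem \oder w M:\zero\}$, I would use Proposition~\ref{th:zero_complete}: for every $k\in\Nat$ it produces a derivation $\Pi_k\dem \oder {w_k} M:\zero$ with $w_k \geq \etime k M$. Therefore $\sup\{w \mid \Pi\dem \oder w M:\zero\} \geq w_k \geq \etime k M$ for every $k$, and passing to the supremum in $k$ on the right gives $\sup\{w \mid \Pi\dem \oder w M:\zero\} \geq \sup_k \etime k M = \ETime M$. Combining the two inequalities yields the first equality.

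The second statement then follows by pure unfolding of definitions: $M\in\PAST$ means $\ETime M < \infty$, which by the first equality is exactly the condition $\sup\{w \mid \Pi\dem \oder w M:\zero\} < \infty$. There is no real obstacle here, since the heavy lifting (weighted subject reduction, weighted subject expansion, and the specialisation of completeness to the $\zero$ type) has already been done; the only subtlety worth flagging explicitly is that the equality must be read in the extended non-negative reals, so that the case $\ETime M = \infty$ (a term which is \AST but not \PAST, as illustrated by $\expl{CC}$ in Example~\ref{ex:succ}) is correctly handled by both sides simultaneously being infinite.
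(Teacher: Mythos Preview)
Your proposal is correct and takes essentially the same approach as the paper: the theorem is presented there as an immediate consequence of Finitary Soundness at the null type (Corollary~\ref{cor:zero_correct}) and Finitary Completeness of Null Typing (Proposition~\ref{th:zero_complete}), exactly as you spell out, mirroring the structure of the proof of Theorem~\ref{th:term_tight}.
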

\paragraph{The Running Example, Revisited}
Let us go back to our running example $\mex\mex$, and show that its runtime
can be analysed by way of null types. We can indeed build type
derivations of the form $\Pi \dem \oder w \mex\mex:\zero$ in such a
way that $w$ is bounded by $\ETime {\mex\mex}=4$, and for each
approximant $\etime k {\mex\mex}$ there is a derivation which has at
least that weight.  The structure of these type derivations are
identical to the ones we gave in Section \ref{sec:examples_CbV}.  The
only difference is in how the underlying \emph{types} are defined.  Let us define
the families of types $\{\Bt_n\}_{n\in\Nat}$ and
$\{\mb_n\}_{n\in\Nat}$ as follows:
\begin{align*}
\mb_0= []\qquad\qquad
\mb_n =  \two\for \mb_{n-1} 	\uplus 	\two\for  [\Bt_n]\qquad\qquad
\Bt_n=   \mb_{n-1}\arrow  \zero
\end{align*}
For example:
\begin{align*}
\Bt_1&=[] \arrow \zero, & \Bt_2&=\mb_1\arrow  \zero, & \Bt_3&=\mb_2\arrow   \zero, \\
\mb_1 &=\mset{\two \for\Bt_1}&\mb_2 &= \mset{\frac{1}{4} \for\Bt_1,  \frac{1}{2}\for\Bt_2}, &
\mb_3 &=\mset{\frac{1}{8}\for\Bt_1, \frac{1}{4}\for\Bt_2, \frac{1}{2}\for\Bt_3}
\end{align*}
The given types are structurally very similar to those from
Section~\ref{sec:examples_CbV}. We can thus mimic the constructions
given there, and get derivations $\Sigma_i$, each having weight $2$ and
typing $\mex$ with $\Bt_i$, but also derivations having weights
converging to $4$, this time typing $\mex\mex$ with $\zero$.
So for example, recalling that $\etime 6 {\mex\mex}= 3+\two$, here is
the corresponding type derivation:

{\footnotesize
\[
\infer{ \ovdash {3+\two} \mex\mex:  \zero }
	{\infer{\Sigma_3\dem\ovdash{2} \lam x.  xx\oplus I: [\two \for\Bt_2, \frac{1}{4}\for \Bt_1]\arrow 
			\zero}{
			\infer[\oplus]{x:[\two \for\Bt_2, \frac{1}{4}\for \Bt_1] \ovdash{1} xx\oplus I:  
				\zero }{
				\infer{x:[ \Bt_2, \two \for\Bt_1] \ovdash  0 xx:    \zero  }{
					x:[  [\two\for \Bt_1]\arrow    \zero  ] \ovdash 0 x:  [ [\two\for \Bt_1]\arrow    \zero   ]  
					& {x:[\two\for \Bt_1]\ovdash 0 x:[\two \for\Bt_1] }
				}
				&
				\infer{\ovdash  0 I: \zero }{}
			}
		} 
		&
		\infer[!]{\Sigma \dem \ovdash{1+\two} \mex:  [\frac{1}{2}\for \Bt_2, \frac{1}{4}\for\Bt_1]}
		{\Sigma_1 \dem  \ovdash {2} \mex:\Bt_1 & \Sigma_2 \ovdash{2} \mex:\Bt_2}    }
	\]}

\subsubsection{Focusing on the Probability of Termination}\label{sec:focusAST}
In this section, we have shown that our type system induces
characterisations of both \AST and \PAST by \emph{the same} family of
derivations, namely the tight derivations.  Moreover, we proved that we
can restrict the search space to the class of null typings whenever
interested in the expected number of steps, only.  But there is more:
if we are interested in the probability of termination \emph{only}, an orthogonal
simplification is possible---we could drop from the typing all the
information on the scaling factors, as that is only used in
deriving the weight.

\section{On Recursion-Theoretic Optimality}\label{sect:optimality}
The uniform characterisation of both forms of termination
we described in Section~\ref{sec:characterization} is remarkable, because one
single system is capable of providing precisely the kind of information
one needs in either case:
\begin{varitemize}
\item
  The (norm of the) underlying type is a lower (but
  tight) bound to the \emph{probability} of termination.
\item
  The weight of type derivations is a lower (but
  again tight) bound to the expected \emph{time} to termination.
\end{varitemize}
As usual in type systems, reasoning is compositional: the typings one
attributes to composite terms are derived from those one assigns to
the subterms. This being said, \AST and \PAST can only be verified
\emph{at the limit}, since all possible type derivations for the given
term and having conclusions of a certain form need, in general, to be taken
into account.

At this point, one may wonder whether one can do \emph{better} than
Theorem~\ref{th:term_tight} when characterising probabilistic termination.  Is it
that one can get away from approximations, and devise a (possibly more
complicated) type system in which \emph{one} type derivation is by
itself a certificate? In this section, we prove that under mild
assumptions in fact one cannot, i.e. that our characterisation is the
best possible, at least recursion-theoretically.

Our results are based on the well-known ones by Kaminski et
al.~\cite{KKM2019}, which establish that in the realm of probabilistic
Turing machines, almost-sure termination is a $\pPi{0}{2}$-complete
problem, while positive almost-sure termination is
$\pSigma{0}{2}$-complete problem. We give two results in this section:
\begin{varitemize}
\item
  On the one hand, we show that probabilistic Turing machines can be
  faithfully encoded into $\PLambda^\cbv$, witnessing the fact that
  the aforementioned recursion-theoretic limitations also hold for
  $\PLambda^\cbv$.
\item
  On the other hand, we prove \emph{by way of our type system} that
  the class of positively almost-surely terminating terms in
  $\PLambda^\cbv$ is $\pSigma{0}{2}$, which in view of the previous point
  means that our type system is \emph{as simple as possible}, recursion-
  theoretically. A similar result is given for
  almost-surely terminating terms and $\pPi{0}{2}$.
\end{varitemize}

\subsection{Probabilistic Turing Machines}

Probabilistic Turing machines~\cite{Santos69,Gill77} (PTMs in the following) can be
defined similarly to ordinary deterministic ones, the main difference
being the fact that the transition function $\delta$ returns not
\emph{one} pair in
$\Sigma\times\{\leftarrow,\downarrow,\rightarrow\}$, but a
\emph{distribution} of those. Various restrictions might be imposed on the
form of those distributions, without affecting the class of
representable (random) functions, but only inducing some
overhead. Here, we assume that the underlying distribution is a
Bernoulli one, assigning probability $\frac{1}{2}$ to one pair and probability
$\frac{1}{2}$ to another one. As usual, we can
also assume to work with $1$-tape Turing machines. Again, this is not
restrictive.  Both notions of termination we have introduced in
Section~\ref{sec:ASTPAST} in the realm of $\PLambdaCBV$ make perfect sense for
Turing machines too, e.g., given a probabilistic Turing machine
$\mathcal{M}$ and an input $x\in\Sigma^*$, we say that $\mathcal{M}$
\emph{is AST on} $x$ if $\mathcal{M}$ converges with probability
$1$. Like ordinary Turing machines, PTMs can be effectively enumerated
and the PTM corresponding to $\alpha$ is indicated as $\mathcal{M}_\alpha$.
This allows us to introduce the following classes of (pairs of) natural
numbers:
\begin{align*}
  \AST_{\mathit{TM}}&=\{(\alpha,x)\mid\mbox{the PTM $\mathcal{M}_\alpha$ is AST on input $x$}\}\\
  \PAST_{\mathit{TM}}&=\{(\alpha,x)\mid\mbox{the PTM $\mathcal{M}_\alpha$ is PAST on input $x$}\}
\end{align*}

\subsection{Encoding PTMs into $\PLambda$}\label{sec:encoding}

Let us now switch to the encoding of probabilistic Turing machines
into $\PLambdaCBV$. As a target language, we actually take a sub-class of terms in
$\PLambdaCBV$, namely the one defined by the following grammar:
\begin{align*}
V &::= x \mid \lambda x.M && \mbox{\textbf{Values}, }\Val^\cps\\
M &::= V \mid MV \mid  M \oplus M && \mbox{\textbf{Terms}, }\PLambda^\cps
\end{align*}
where $MV$ is nothing more than syntactic sugar for $\lett M{xV}$. In
doing so, we follow~\cite{EncodingTuring}, and take as a target calculus for our
encoding one in which only \emph{one} redex is active in any term.
This way, all our results will also be valid in Section~\ref{sec:CBN}, where
intersection types will be generalised to a calculus with call-by-\emph{name}
evaluation.

\newcommand{\enc}[1]{\overline{#1}}
The main ingredients of the encoding are the following ones:
\begin{varitemize}
\item
  States and strings can be encoded following the so-called \emph{Scott
    scheme} \cite{Wadsworth80}, e.g., given an alphabet $\Sigma=\{a_1,\ldots,a_m\}$ strings in $\Sigma^*$
  are encoded following the recursive definition below:
  $$
  \enc{\varepsilon}=\lambda x_1.\cdots.\lambda x_{m}.\lambda y.y
  \qquad\qquad
  \enc{a_i\cdot s}=\lambda x_1.\cdots.\lambda x_m.\lambda y.x_i\enc{s}
  $$
\item
  Similarly, one can encode any tuple of values $(V_1,\ldots,V_n)$
  as $\lambda x.xV_1\cdots V_n$. This encoding easily supports
  projections.
\item
  We can build a fixed-point combinator $Z$ as $MM$, where $M$ is the
  term $\lambda x.\lambda y.y(\lambda z.xxyz)$.  Observe that for
  every value $V$, it holds that $ZV$ deterministically rewrites (in a
  constant amount of steps) to $V(\lambda x.ZVx)$. Notice that the
  argument to $V$ is \emph{not} $ZV$, but is ``wrapped'' into a value
  by way of $\eta$-expansion: this is necessary, given the nature of
  our calculus.
\end{varitemize}
Given the above, and after a fair amount of intermediate technical
results (but closely following~\cite{EncodingTuring}, except in the
encoding of the transition function), one can reach the following:
\begin{theorem}\label{thm:encoding}
  For every probabilistic Turing Machine $\mathcal{M}$, there is
  lambda term $T_{\mathcal{M}}$ such that the evaluation of
  $T_{\mathcal{M}}\enc{s}$ and the computation of $\mathcal{M}$ on
  input $s$ produce the same distributions (up to encodings).
  Moreover, the number of steps taken by $T_{\mathcal{M}}$ is linearly
  related to $\mathcal{M}$. Finally, the term $T_{\mathcal{M}}$ can be
  effectively obtained from (the code of) $\mathcal{M}$.
\end{theorem}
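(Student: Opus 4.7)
The plan is to construct $T_{\mathcal{M}}$ in three layers---configurations, one-step transition, main loop---and then verify the simulation claim by induction, following the blueprint of~\cite{EncodingTuring} but with a new treatment of the probabilistic transition via $\oplus$.

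First I would fix the representation of configurations. Using the Scott-encoded strings and tuple encoding recalled just above the statement, a configuration $(q,\ell,a,r)$ of $\mathcal{M}$---with $q$ the current state, $\ell$ the reversed portion of the tape to the left of the head, $a$ the symbol under the head, and $r$ the portion to the right---is represented by the value $\lambda x.\,x\,\enc{q}\,\enc{\ell}\,\enc{a}\,\enc{r}$. Standard Scott-style projection values then extract each component in a bounded number of CbV steps. The initial configuration for input $s$ is a fixed value parametric in $s$; final configurations are those whose state component is final, which can be tested by applying that state, Scott-style, to a tuple of branches.

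Second, I would encode the transition. For each state--symbol pair $(q,a)$, the PTM specifies an unordered pair of outcomes $\{(q_0,b_0,d_0),(q_1,b_1,d_1)\}$, each fired with probability $\frac{1}{2}$. Writing $U_i$ for the value that transforms a configuration according to the $i$-th outcome (rewrite $a$ to $b_i$, move the head in direction $d_i$, install new state $q_i$), the local transition is realised by the value $\lambda k.\,(U_0\,k)\oplus(U_1\,k)$. Dispatching on the Scott-encoded state and symbol then yields a single value $T_\delta$ of size $O(|\mathcal{M}|)$ such that, for any encoded configuration $k$, the term $T_\delta\,k$ reduces in $O(|\mathcal{M}|)$ CbV steps to $C_0\oplus C_1$, where $C_0,C_1$ are the encodings of the two possible successor configurations. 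This is the only place where $\oplus$ is used, which is crucial for the synchronisation argument below.

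Third, I would assemble the loop using the fixed-point combinator $Z$. Set $F := \lambda x.\lambda k.\,\mathit{test}(k)\,(\mathit{out}(k))\,(x\,(T_\delta\,k))$, where $\mathit{test}(k)$ branches on whether the state component of $k$ is final, $\mathit{out}(k)$ extracts the tape output, and the recursive occurrence $x\,(T_\delta\,k)$ is $\eta$-wrapped as needed so that the whole body lies in $\PLambda^\cps$. The translation is then $T_{\mathcal{M}} := \lambda s.\,Z\,F\,(\mathit{init}\,s)$, obtainable effectively from (the code of) $\mathcal{M}$ and of size linear in $|\mathcal{M}|$.

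The simulation claim then follows by induction on the number $n$ of Turing steps. The invariant is that, after a constant (in $\mathcal{M}$) number $c_{\mathcal{M}}$ of $\Red$-steps per iteration, the multidistribution reached from $\mdist{T_{\mathcal{M}}\enc{s}}$ is the pointwise image, under the encoding, of the distribution of $\mathcal{M}$'s configurations after $n$ steps; each probabilistic choice of $\mathcal{M}$ corresponds to exactly one firing of $\oplus$ inside the simulation, and termination on the Turing side matches reaching a value on the $\lambda$-calculus side. The main obstacle is precisely this synchronisation: with more than one active redex, probabilistic choices inside $T_\delta$ could race with administrative reductions and blur the correspondence. Working inside the single-redex fragment $\PLambda^\cps$ rules this out, so the Markov chain induced by $\Red$ faithfully mirrors the Markov chain of $\mathcal{M}$ up to a linear deterministic overhead, yielding all three clauses of the statement.
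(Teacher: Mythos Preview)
Your proposal is correct and follows essentially the same approach as the paper. The paper itself does not give a detailed proof of this theorem: it only sketches the ingredients (Scott encodings, tuples, the fixed-point combinator $Z$ with $\eta$-wrapping) and then says the result is reached ``after a fair amount of intermediate technical results (but closely following~\cite{EncodingTuring}, except in the encoding of the transition function)''; your three-layer construction and the induction on Turing steps are exactly the fleshing-out of that sketch, with the $\oplus$-based transition being the one genuinely new piece, just as the paper indicates.
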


\subsection{Preliminaries from Recursion Theory}

In this subsection, we give some basic definitions about the
arithmetic hierarchy, for the sake of making this paper
self-contained. An excellent reference about these topics
is~\cite{Odifreddi}.

A set $X\subseteq\Nat$ is said to be $\pSigma{0}{n}$ iff
there is a primitive recursive relation $R\subseteq\Nat^{n+1}$
such that
$$
x\in X\Leftrightarrow \underbrace{\exists y_1.\forall y_2\exists y_3.\forall y_4\ldots }_{\mbox{$n$ times}}R(x,y_1,\ldots,y_n)
$$
Dually, $X$ is said to be $\pPi{0}{n}$ iff there is a primitive
recursive relation $R\subseteq\Nat^{n+1}$ such that
$$
x\in X\Leftrightarrow \underbrace{\forall y_1.\exists y_2\forall y_3.\exists y_4\ldots }_{\mbox{$n$ times}}R(x,y_1,\ldots,y_n)
$$
For both the classes $\pSigma{0}{n}$ and $\pPi{0}{n}$, there are related notions of
\emph{hardness}: a set $X\subseteq\Nat$ is $\pSigma{0}{n}$-difficult
(respectively, $\pPi{0}{n}$-difficult) iff it is \emph{at least as
  difficult as} any other $\pSigma{0}{n}$ (respectively, $\pPi{0}{n}$)
problem, i.e. if for every other $\pPi{0}{n}$ problem $Y$ there is a
(recursive) reduction from $Y$ to $X$. Both in $\pSigma{0}{n}$ and in $\pPi{0}{n}$, \emph{completeness} stands for 
containment \emph{and} hardness. These classes form an hierarchy
which is strict; moreover, $\pSigma{0}{n}$ and $\pPi{0}{n}$,
although having non-empty intersections, are incomparable as classes.

Where, in the arithmetical hierarchy, do $\AST_{\mathit{TM}}$
and $\AST_{\mathit{TM}}$ reside? A precise answer to this
question has been given by Kaminski et al.~\cite{KKM2019} in the realm of
while programs, but can easily be rephrased for PTMs:
\begin{theorem}[Kaminski et al.~\cite{KKM2019}]\label{thm:kaminski}
  $\AST_{\mathit{TM}}$ is $\pPi{0}{2}$-complete, while
  $\PAST_{\mathit{TM}}$ is $\pSigma{0}{2}$-complete.
\end{theorem}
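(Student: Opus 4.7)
The plan is to prove each claim by separate matching upper and lower bounds. For membership, given $\mathcal{M}_\alpha$, input $x$, and step bound $k$, the probability $p_{\alpha,x,k}$ of termination within $k$ steps is a rational effectively computable from $(\alpha,x,k)$, because the computation tree truncated at depth $k$ has at most $2^k$ branches, each of weight $2^{-k}$. Hence $(\alpha,x)\in\AST_{\mathit{TM}}$ iff $\forall n\,\exists k\ (p_{\alpha,x,k}>1-1/n)$, a $\pPi{0}{2}$ predicate with primitive-recursive matrix; dually, $(\alpha,x)\in\PAST_{\mathit{TM}}$ iff $\exists K\in\Nat\,\forall n\ \sum_{k<n}(1-p_{\alpha,x,k})\leq K$, a $\pSigma{0}{2}$ predicate. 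This settles both upper bounds.

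For $\pPi{0}{2}$-hardness of $\AST_{\mathit{TM}}$ I would start from an arbitrary $\pPi{0}{2}$ set $X=\{x:\forall n\,\exists m\,R(x,n,m)\}$ with $R$ primitive recursive and construct uniformly in $x$ a PTM that proceeds in rounds $n=0,1,2,\dots$: at round $n$ it searches unboundedly for the least $m$ with $R(x,n,m)$ (halting the search iff such an $m$ exists), then flips a fair coin, either halting with probability $\tfrac{1}{2}$ or moving to round $n+1$. If $x\in X$ every round completes and the number of rounds before halting is geometric, yielding AST. If $x\notin X$ and $n_0$ is the least failing round, the machine reaches round $n_0$ with probability $2^{-n_0}>0$ and then diverges inside the search, so the termination probability is $1-2^{-n_0}<1$. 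Theorem~\ref{thm:encoding} then transfers the reduction to $\PLambda^\cbv$ without affecting the arithmetical classification.

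For $\pSigma{0}{2}$-hardness of $\PAST_{\mathit{TM}}$, given $Y=\{x:\exists n\,\forall m\,R(x,n,m)\}$, the outer existential must be encoded into \emph{finiteness of the expected runtime}, not merely into almost-sure termination, which is where the construction becomes genuinely delicate. The plan is to run, alongside a dovetailed search over candidates $n$ at increasing depths $d$ (checking $R(x,n,m)$ for $m<d$), a symmetric random walk on $\Nat$ absorbing at $0$, i.e.\ the canonical AST-but-not-PAST process, whose drift is progressively tilted toward the absorbing barrier each time the current best candidate survives further checks. Calibrate the tilt schedule so that on a true witness the walk eventually becomes positively recurrent and contributes only finitely to the expected time, while on any false candidate an eventual counter-example resets the walk to its unbiased regime, forcing the expected runtime to diverge. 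The resulting PTM is then transported to $\PLambda^\cbv$ by Theorem~\ref{thm:encoding}.

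The main obstacle is precisely the $\pSigma{0}{2}$-hardness of $\PAST$. The upper bounds are rational bookkeeping and the $\AST$-hardness follows the textbook search-and-flip pattern, but PAST-hardness cannot be obtained by such a pattern: one must exploit the genuine gap between almost-sure termination and finite expectation, so that the \emph{shape} of the recurrence of an auxiliary random process is made to mirror the arithmetical witness. Engineering the bias schedule and the refutation-triggered resets so that PAST holds \emph{exactly} when $Y$ does—and so that the whole construction remains effective and uniform in $x$—is the technical core of the Kaminski et al.\ argument, and the step I would expect to consume most of the work.
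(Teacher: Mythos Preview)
The paper does not prove this theorem at all: it is stated as a result due to Kaminski et al.\ and cited from~\cite{KKM2019} (rephrased from \texttt{while}-programs to PTMs), so there is no ``paper's own proof'' to compare your attempt against. Your sketch is therefore not wrong in the sense of contradicting the paper, but it goes well beyond what the paper claims to do.

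Two remarks on the content of your sketch nonetheless. First, the appeals to Theorem~\ref{thm:encoding} are misplaced: the statement concerns $\AST_{\mathit{TM}}$ and $\PAST_{\mathit{TM}}$, i.e.\ Turing machines, so no transport to $\PLambda^\cbv$ is needed (that transfer is used elsewhere in the paper, for the hardness of $\AST_\lambda$ and $\PAST_\lambda$). Second, your upper bounds and the $\pPi{0}{2}$-hardness of $\AST$ are standard and look fine, but your $\pSigma{0}{2}$-hardness argument for $\PAST$ is, as you yourself acknowledge, only a plan: the biased-random-walk-with-resets idea is plausible in spirit, but the calibration you describe (``tilt schedule so that on a true witness the walk eventually becomes positively recurrent'') is not a proof, and getting the quantitative bookkeeping right so that finiteness of the expectation coincides \emph{exactly} with membership in $Y$ is the whole difficulty. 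If you want an actual argument you should consult~\cite{KKM2019} directly.
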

Theorem~\ref{thm:kaminski} is quite surprising, in particular
if seen through the lenses of ordinary, deterministic computation.
In universal deterministic computational models (like TMs or the $\lambda$-calculus)
terminating computations form a $\pSigma{0}{1}$-complete set: even
if undecidable, the set is recursively enumerable, and any terminating
computation can be endowed with a finite certificate, itself
(effectively) checkable for correctness. This, by the way, is
a recursive-theoretical justification of the possibility of
building complete systems of intersection types for the deterministic
$\lambda$-calculus in which type derivations play the role of
certificates, as the ones we describe in Section~\ref{sec:gentleintro}: this is possible \emph{only because}
termination is in $\pSigma{0}{1}$.

\subsection{The Optimality Result}

In the probabilistic $\lambda$-calculus, neither form of termination is 
$\pSigma{0}{1}$, and as a consequence type derivations cannot play the role
of certificates. In this section we will formally prove the statement
above, along the lines showing that the form of approximation we
employ is optimal.

First of all, we can give the $\lambda$-counterparts of $\AST_{\mathit{TM}}$
and $\PAST_{\mathit{TM}}$:
$$
\AST_{\lambda}=\{M\mid\mbox{$M$ is AST}\}\qquad\PAST_{\lambda}=\{M\mid\mbox{$M$ is PAST}\}
$$
Theorem~\ref{thm:encoding} and Theorem~\ref{thm:kaminski} together imply that
$\AST_{\lambda}$ is $\pPi{0}{2}$-hard and $\PAST_{\lambda}$
is $\pSigma{0}{2}$-hard. But how about containment?

Actually, our characterisation results , namely Theorem~\ref{thm:characterization} and Corollary~\ref{cor:char_tight} can be
seen as a way to prove that $\AST_{\lambda}$ is \emph{in} $\pPi{0}{2}$
and that $\PAST_{\lambda}$ is \emph{in} $\pSigma{0}{2}$. Indeed, consider
the following two sets
\begin{align*}
  \AST_{\lambda,\vdash}&=\{M\mid\forall r\in\Qnum_{[0,1)}.\exists \Pi.(\Pi\dem\vdash M:\at)\wedge (\norm \at>r)\};\\
  \PAST_{\lambda,\vdash}&=\{M\mid\exists r\in\Qnum.\forall \Pi.(\Pi\dem\ovdash{w}M:\at)\Rightarrow (w<r)\}.
\end{align*}
By Corollary~\ref{cor:char_tight}, $\AST_{\lambda,\vdash}=\AST_{\lambda}$ and
$\PAST_{\lambda,\vdash}=\PAST_{\lambda}$. But \emph{by definition},
$\AST_{\lambda,\vdash}$ is $\pPi{0}{2}$, because checking whether a natural
number is the encoding of a type derivation $\Pi$ having the property that
$(\Pi\dem\vdash M:\at)\wedge \norm \at>r$ for given $M$ and $r$ is
certainly a primitive recursive problem. Similarly for $\PAST_{\lambda,\vdash}$
and $\pSigma{0}{2}$.

This is why we claim that our intersection types are \emph{optimal}: there
cannot be simpler (in the sense of the arithmetical hierarchy)
characterisations of $\AST_\lambda$ and
$\PAST_{\lambda}$.

\renewcommand{\ss}{\ee}

\section{ Variations on the Theme}
This section is devoted to analysing two variations on the type system
we introduced in Section~\ref{sec:intersectiontypes}, itself proved to
satisfy some nice properties, but certainly not being \emph{the only} system
of intersection types one can define in a discrete probabilistic
setting.

\subsection{On Call-by-Name Evaluation}\label{sec:CBN}
Despite the fact that the call-by-value discipline is more natural in
presence of effects, it is legitimate to ask whether the
system of intersection types we have designed can be adapted to CbN
evaluation. This section is devoted to showing that this is
actually the case.

As a language we use here the standard probabilistic untyped 
$\lam$-calculus equipped with weak head reduction, itself already
studied in many papers from the
literature~\cite{DalLagoZorzi,DLSA14}. We first define the
language, called $\PLambda^\cbn$, and its operational semantics, then
the typing system.

\paragraph{The Language of Terms.}
\emph{Terms} and \emph{values} are defined by the grammar
\begin{align*}
  V&::=  x \mid \lambda x.M &&\mbox{\textbf{Values}}, \Val_\oplus^\cbn\\
  M&::=  V\mid MM \mid M \oplus M &&\mbox{\textbf{Terms}}, \PLambda^\cbn
\end{align*}
where $x$ ranges over a countable set of \emph{variables}.
Observe how values are defined as in $\PLambda^\cbv$, while
terms are slightly different, and more in line with the
usual $\lambda$-calculus. Another remark: the class $\PLambda^\cps$
is trivially a subclass of $\PLambda^\cbn$.

\paragraph{Operational Semantics and Probabilistic Termination}
As in CbV, we first define a one-step reduction relation $\red$ from terms to
multidistributions. The rules are given in
Figure~\ref{fig:reductions}.  We then lift $\red$ to a \emph{reduction
  of multidistributions}, and this can be done as for $\PLambda^\cbv$, so following
the rules in Figure \ref{fig:lifting}.
\begin{figure}\centering
    \begin{minipage}{0.97\textwidth}
      \[
      \infer[\beta]{(\lam x.M)V\red \mdist{M\subs x V}}{}  \qquad\qquad\qquad
      \infer[\mathtt{head}]{N M  \red \mdist{p_i (N_i M)}_{\iI}}{N\red \mdist{p_iN_i}_{\iI}}
      \]
  \end{minipage}
  \caption{Reduction Steps }\label{fig:reductions}
\end{figure}
Values are precisely the closed terms which cannot be further
reduced. The definitions of $ \nnorm {\m_k}$, $\eval k M$, $\ProbTerm
M$, $\etime k M$, and $\ETime M$ can be given exactly as in Section~\ref{sec:ASTPAST}  
Again, observe how the semantics of all terms of $\PLambda^\cps$ is
the same if defined through CbV, as we did originally, or through
CbN, as we are doing here. As a consequence, all results from
Section~\ref{sec:encoding} also hold for CbN.
\subsubsection{The Type System}
Non-Idempotent Intersection types for the Call-by-Name $\lam$-calculus
\cite{Gardner94, Kfoury00, MNM2004, Carvalho2018} are well-studied.
We adapt them to our probabilistic setting. The types reflect the
underlying dynamics, which is simpler than that of CbV, since a term
cannot be copied once evaluated. Like in the case of $\PLambda^\cbv$,
the type system is based on \emph{three}, rather than \emph{two}
layers, namely arrows, intersection types, and multidistribution
types.  Notice however that now a \emph{type distribution} is a
(multi)-distribution over arrows. An \emph{intersection type} is a
multiset of scaled types, \ie a multiset of pairs $q\for \at$ where
$\at$ is a type distribution, and $q\in (0,1]\cap \Qnum$ is as usual a
  \emph{scale factor}. Types are defined by means of the following
  grammar:
\begin{align*}
\At,\Bt & ::=  * \mid {\ma \arrow \at}  && {\mbox{ \textbf{Arrow Types}}}\\
\mA,\mB & ::=  \mul{q_1\for \at_{1},...,q_n \for\at_{n}} n \geq 0 && \mbox{  \textbf{Intersection Types}}\\
\at,\bt & ::=  \mdist{p_{1}\At_{1},...,p_{n}\At_{n}}, n \geq 0 && \mbox{ \textbf{Type Distributions}}
\end{align*}
Observe the presence of the special arrow type $*$, which here plays the
role of the empty multiset $\mul{}$ in CbV.
\paragraph{Typing Rules.}
The type assignment system in Figure \ref{fig:WWcbv} proves judgments
of the shape $\Gamma \oder w M: \type$, where $\Gamma$ is a type
context, $M$ a term, $w\in \Qnum $ is a  counter, and $\type$ is either $ \at$ or $\mA$. 
The notation  $q\for\Gamma$ is as in Section~\ref{sec:typesCBV}, taking into account that now 
if  $\mA=\mset{q_i\for\at_i}_{\iI}$,  $u\for \mA$  is   $ \mset{(u q_i)\for\at_i}_{\iI}$.
\begin{figure}
  \centering
		\begin{minipage}{0.95\textwidth}
		\vspace{8pt}
				 	\[   \infer[\TVar]{ x:\mset {1\for\at}  \ovdash 0 x:\at}{}  
				\quad\quad\quad
				\infer[\TVal]{ \ovdash 0  \lam x.M :\dist{*}}{}
				\quad \quad\quad \infer[\TZero]{  \oder{0 } M:\zero}{}
				\]

				\[	\infer[\lambda]{\Gamma \ovdash {w+1} \lam x. M: \dist{\ma \arrow \bt}}{\Gamma, x:\ma \ovdash {w} M:\bt} \quad\quad
				\infer[@]{\Gamma  \uplus_k  {p_k\for}\Delta_k \oder {w+ \sum_k p_k w_k} MN: {\dsum_k~ p_k \bt_k}}
				{\Gamma \oder {w} M: \dist {p_k (\ma_k \arrow \bt_k) }_{\kK}&  
					\big(\Pi_k\dem ~\Delta_k \oder {w_k} N:  \mA_k\big)_{k\in K}}	
				\]

				\[
				\infer[!]{\uplus_i (q_i\for\Gamma_i)\oder {\sum_i  q_i w_i} M:\mset{q_i\for\at_i}_{\iI}}
				{(\Gamma_i\oder {w_i} M: \at_i)_{\iI}\quad &  \quad(q_i)_{\iI} \mbox{ scale factors}
				}
			\quad
				\infer[\oplus]{\two\for\Gamma \uplus \two\for\Delta \oder{1+\two w_1 +\two w_2} M\oplus N: \two \at \dsum \two \bt}
				{\Gamma \oder {w_1} M: \at & \Delta \oder{w_2} N: \bt} 
				\]
\vspace{1pt}				
				
		\end{minipage}
       	\caption{Non-Idempotent Intersection Type Rules for $\PLambda^\cbn$}\label{fig:WWcbn}
                                
\end{figure}
The notion of a tight type needs to be appropriately adapted.
\begin{Def}[Tight Types and  Derivations]
  A \emph{type} $\at$ is said to be \emph{tight} if it is a
  multidistribution on the arrow type $*$. A derivation $\Pi\dem ~
  \oder w M: \at $ is tight whenever $\at$ is tight.
\end{Def}

\paragraph{Basic Properties.}
As in CbV, some basic properties of the type system are not only
useful, but reveal the nature of the type system. First of all, any
closed value $V$ can be tightly typed with probability $1$, by
$ \oder 0 V :\dist{*} $. Moreover, a degenerate form of the
rule $!$ allows us to derive the following  \emph{for any term} $M$:
\[\infer{\oder{0} M:[]}{}\]
Finally, a useful instance of the $@$ rule is the following:
\[\infer{\Gamma \oder {w} MN:\zero}{\Gamma \oder{w} M:\zero }\]
	
\subsubsection{Characterising CbN Probabilistic Termination}
The just introduced type system allows us to transfer all results from Section~\ref{sec:characterization}
to $\PLambda^\cbn$.  Finitary soundness and finitary completeness
both hold, exactly as in Theorem~\ref{th:correct} and
Theorem~\ref{th:complete}. The statement is the same, taking into
account that now $M$ is a closed term of $\PLambda^\cbn$. As a
consequence, we can:
\begin{varitemize}
\item
  on the one hand characterise \AST and \PAST in a uniform way, via
  \emph{tight typing}, exactly as in Theorem~\ref{th:term_tight} and
  Corollary~\ref{cor:char_tight};
\item
  on the other hand characterise \PAST via \emph{null typing},
  this time exactly like in Theorem~\ref{thm:char_null}.
\end{varitemize}
As mentioned in Section~\ref{sec:focusAST}, one can also obtain a (simpler) type
system for \AST by dropping from the typing all the information on the
scaling factors.
\subsection{Multidistributions vs. Distributions}\label{sec:multi}
In the design of any type system, several choices are possible. Some
are a matter of taste, some other are crucial. In this section, we
discuss a choice we have implicitly made throughout the paper, namely
the use of multidistributions in types. One may legitimately
wonder if we could use \emph{distributions} of types instead of
multidistributions.  Actually, it turns out that multidistributions
are necessary to obtain a perfect match between typing and
termination. This choice is in fact crucial for \emph{completeness} to
hold in the call-by-value typing system. Let us see why.

Consider a term in the form $\lett N M$. Since the argument $N$ is
typed with a multidistribution $\ct =\mdist{p_k\mA_k}_{\kK}$, the
continuation $M$ must be able to receive any $\mA_k$. Indeed, the
typing rule $\letr$ asks for type derivations having conclusion
$x:\mA_k\der M:\bt_k$ for each $k$. Each value of $k$ indeed
corresponds to one of the possible probabilistic evolutions of $N$,
due to the use of multidistributions, in which collapsing two elements
of $\mA_h$ and $\mA_l$ in $\ct$ is simply not possible. Going to
distributions, thus allowing for such a collapse, would not be a
problem for \emph{soundness}, but we would loose the properties of
weighted subject expansion (Lemma~\ref{lem:SubEx}) on which
\emph{completeness} relies. We now see why by way of a concrete
example.
\begin{example}[Weighted Subject Expansion relies on multidistributions]
  Assume $ P \red \mdist{\two P_1, \two P_2}$. The claim of weighted
  subject expansion is that, given derivations $\Pi_i \dem \oder {w_i}
  P_i:\bt_i $ for each $i$, we can obtain a derivation $\Pi\dem \oder
  {w_i} P:\bt $, where $w\geq 1+\sum \two w_i$ and $\bt= \two \bt_1 \dsum \two
  \bt_2$. Weighted subject expansion is
  proved by  induction on the structure of the reduction $\red$.
  The key point is the $\letr C$ rule. Let us focus on it.  Consider
  $P\deff (\lett {N_1\oplus N_2} M)$ and so $P_i\deff (\lett { N_i}
  M)$, and consider the following type derivations
  for $P_1$ and $P_2$.
  \[
  \infer[\texttt{let}]{
    \oder {w_1 } \lett  {N_1} M: \bt_1}
        {\oder {v_1} N_1:  \dist {  \mA } & \Pi_1\dem x:  \mA\oder {u_1} M: \bt_1}	
        \quad\quad
  \infer[\texttt{let}]{
	  \oder {w_2 } \lett  {N_2} M:\two \bt_2}
              {\oder {v_2} N_2:  \dist { \two \mA} & \Pi_2\dem x:  \mA\oder {u_2} M: \bt_2}	
    \]	
  By definition, $ P \red \mdist{\two P_1, \two P_2}$ is derived as
  follow:
  \[
  \infer[\letr C]{(\lett {N_1\oplus N_2} M)  \red \mdist{\two (\lett {N_1} M), \two (\lett {N_2} M)}   }
        {N\red \mdist{\two N_1, \two N_2}}
  \]
  and we would like to derive a type derivation for $P$ out of all this. By \ih, since
  $N\red \mdist{\two N_i}_{\iI}$, we can assume that there exists a
  derivation $\Phi\dem \oder{v} N:\ct $ such that $v\geq 1+\sum \two v_i$
  and $\ct= \two \mdist{\mA} \dsum \two \mdist{\two \mA} = \mdist{\two
    \mA, \four \mA}$.
  And indeed, by collecting $\Pi_1$ and $\Pi_2$,
  we have a derivation which satisfies the claim
  \[
  \infer[\texttt{let}]{\oder {w} \lett  {N} M:\two \bt_1 \dsum \four \bt_2 }
        {\oder {v} N:    \mdist{\two \mA, \four \mA} & x: (\mA\oder {u_i} M: \bt_i)_{i\in \{1,2\}}}	
  \]	
  This is possible precisely because---due to the adoption of
  \emph{multidistributions}---the two occurrences of $\mA$ are kept
  separated:
   notice that $\bt_1$ and $\bt_2$
  may be very different types.  If we worked with \emph{distributions},
  this information would be irremediably lost.
  By \ih, we would have \emph{just one} derivation $\Phi\dem \oder{v}
  N:\ct $ where $\ct= \two \mdist{\mA} + \two \mdist{\two \mA} =
  \mdist{\frac{3}{4} \mA} $. We would like to build
  the following derivation:
    \[
    \infer[\texttt{let}]{\oder {w} \lett  {N} M:\two \bt_1 \dsum \four \bt_2 }
    {  
    	\oder {v} N: \mdist{\frac{3}{4} \mA} \quad & \quad \Pi}	
    \]    
  How could we build $\Pi$, however? There is no way to \emph{merge}
  the two derivations $\Pi_1$ and $\Pi_2$, so the type system would
  need to be substantially reengineered.
\end{example}
This issue only affects call-by-value evaluation, which is more complex than
call-by-name, but also more expressive in a setting with effects.
In CbN, choosing distributions would not impact the
results, because evaluating a term \emph{before} copying it
(i.e. before using it in possibly many different ways)
is simply impossible.

\section{Related Work}
Systems of types for probabilistic programs exist in the
literature. In particular, sized types~\cite{HPS1996}, and linear
dependent types~\cite{DLG2011} have been generalised to probabilistic
programming languages, and have been proved to be sound methodologies
for checking almost-sure termination~\cite{DLG2019} and positive
almost-sure termination~\cite{ADLG2019} in an higher-order
setting. None of such systems is complete, however. Recently, Breuvart
and Dal Lago~\cite{BDL2018} introduced systems of intersection types
which are sound and complete as a way of deriving the probability of
convergence of terms in probabilistic lambda-calculi. However, the
number of reduction steps to normal form is not kept track of by
types, due to the nature of the intersection operator, which in Dal
Lago and Breuvart's system is idempotent. Moreover, relying on distributions
(instead of multidistributions) of types makes call-by-value evaluation
harder to deal with, and ultimately results in a rather convoluted set of typing
rules.

Intersection types have been pioneered by Coppo and
Dezani~\cite{CoppoDezani1978,CoppoDezani1980}, and developed in a
series of papers in which various notions of termination for the
$\lambda$-calculus were characterised, and the
relationship with denotational semantics was thoroughly
investigated~\cite{CDV1980, Pottinger1980, BCDC1983, CDCZ1987}.  They have also been
extended to calculi besides the $\lambda$-calculus, like
$\lambda\mu$-calculi~\cite{vBBL13} or object
calculi~\cite{deLiguoro2001}. Besides the already discussed work by
Breuvart and Dal Lago~\cite{BDL2018}, one should also mention the
work by de' Liguoro and
colleagues~\cite{DCLP1993,deLiguoroPiperno1995} about filter models
and intersection type assignment systems for extensions of
$\lambda$-calculi with nondeterministic choice operators, whose
semantics is however fundamentally different than that of the
probabilistic choice operator we consider here: in the former
one observes \emph{may} or \emph{must} convergence (or combinations
thereof), while here the notion of observation is genuinely
quantitative.

Non-idempotent intersection types have been known since the work by Gardner~\cite{Gardner94},
studied in connection with expansion variables by Carlier et al.~\cite{CPWK2004},
and further analysed in their relation to normalisation by Mairson and
M\"oller-Neergard~\cite{MNM2004}. The precise correspondence between non-idempotent
intersection type system derivations and the number of reduction steps
necessary to normalise the underlying term has been first noticed
by De Carvalho~\cite{Carvalho2018}, and further refined by Bernadet and
Lengrand~\cite{BL2013}, and later by Accattoli et al.~\cite{AGK2018}, and~\cite{AGL19}, the
latter being a source of inspiration for this work in its
reflecting weak notions of reduction inside intersection types.
All these contributions, however, deal with deterministic $\lambda$-calculi.

Formal verification techniques for probabilistic termination and
complexity analysis are plentiful, and ranges from model
checking~\cite{EY2009,KDLG2019} to abstract
interpretation~\cite{Monniaux2001}, to the ranking
supermartingales~\cite{ChakarovS13}, to amortised
analysis~\cite{NCH2018} to the interpretation method from term
rewriting~\cite{ADLY2020}. The only methodology among these that, at
least so far, has been employed for the analysis of higher-order
probabilistic programs is the one by Kobayashi et al.~\cite{KDLG2019},
which deals with probabilistic variations on higher-order recursion
schemes. Some of the ideas which we introduced in the paper are indeed
variations of similar ones from the imperative setting (e.g. the
handling of expectations by way of a quantity which decreases on the
average). The presence of higher-order functions, however, forced us
to develop new tools, since types must be more informative than just,
say, ranking supermartingales. Not only the \emph{value} or the
\emph{size} of the input matter, but also how the input \emph{behaves}
turns out to be crucial, given that it can potentially be used as a
function. Looking at all this from a different perspective, we can
safely say that higher-order probabilistic programs could of course be
verified by translating the input program into a first-order
equivalent one, then applying state-of-the art techniques designed for
such a setting (e.g.,~\cite{MM05,KKMO18}). The main advantage of
thinking in terms of types, however, is that the underlying
verification problem can be tackled \emph{compositionally}, so
allowing for a modular analysis. In presence of higher-order
functions, one has to prove something stronger than the mere
underlying termination property, namely that the program at hand
satisfies the property when seen in isolation, but also behaves well
when fed with functional inputs, provided those functions behave well
themselves. Verification techniques designed for first-order programs
are not designed with all this in mind, and encoded higher-order
programs would thus be harder to verify.

The operational and denotational semantics of probabilistic $\lambda$-calculi
have been studied thoroughly themselves, starting from the pioneering contributions by
Sahed-Djaromi~\cite{SahebDjahromi1978} and Jones and Plotkin~\cite{JonesPlotkin1989}.
Noticeably, Ehrhard et al.'s probabilistic coherent spaces~\cite{ETP14} can be presented
as a non-idempotent intersection type system which, being inherently
semantic, is fundamentally different from the one we have here: no result
is given about the expected time to termination of the interpreted terms,
and results like those we proved in Section~\ref{sect:optimality} would
be much harder to get.


\section{Conclusion}

This paper introduces and studies non-idempotent intersection type
assignment systems for probabilistic $\lambda$-calculi, showing they
can precisely characterise the expected runtime \emph{and} the
probability of termination within a single framework, despite them
having incomparable recursion-theoretic difficulties, and thus an
inherently different nature. The key ingredients are non-idempotency
and scaling. Noticeably, the same ideas work in the call-by-name and
call-by-value paradigms.

The system of intersection types we have introduced in this work
should be conceived as a tool for the theoretical analysis of a
phenomenon, rather than as a proper verification technique: type
inference is for obvious reasons highly undecidable. This does not
mean, however, that the same necessarily holds in restricted calculi,
as witnessed by the fruitful use of intersection types as a
verification tool in subrecursive deterministic
lambda-calculi~\cite{KobayashiOng,Kobayashi,KfouryWells1999}.  As a
consequence, it would be very interesting, e.g., to study which
fragments of $\PLambda^\cbv$ and $\PLambda^\cbn$ are expressive enough
to capture recursive Markov chains~\cite{EY2009}, in which almost-sure
termination is known to be decidable

The absence of idempotency---an essential ingredient indeed---can be
seen in two different forms, namely in intersection types, where union
is not an idempotent operation, and in distribution types, which are
taken as \emph{multi}distributions and which thus \emph{do not} form a
barycentric algebra, precisely due to the failure of idempotency. A
thorough study of this phenomenon, together with an analysis of the
relationship between this work and the denotational semantics of
probabilistic $\lambda$-calculi is outside the scope of this paper,
but it is certainly something the authors would like to pursue in the
foreseeable future.

\begin{acks}
  This work was partially supported by
  \grantsponsor{ANR}{ANR}{https://anr.fr/} PRC project PPS
  (\grantnum{ANR}{ANR-19-CE48-0014}), by
  \grantsponsor{ERC}{ERC}{http://erc.europa.eu} Consolidator Grant
  DIAPASoN (\grantnum{ERC}{818616}), and by
  \grantsponsor{MIUR}{MIUR}{http://miur.gov.it} PRIN ASPRA
  (\grantnum{MIUR}{201784YSZ5}).
\end{acks}

\bibliography{biblio}

	\newpage
	\appendix

\section*{APPENDIX}

\section{Proofs of Subject Reduction and Subject Expansion}

A type ( $ \type ::= \At\mid \mA \mid\at$ ) is as defined in Section~\ref{sec:typesCBV}.

\subsection{Proof of Weighted Subject Reduction  (Lemma~\ref{lem:SubRed})}
As usual, the proof of subject reduction relies on a substitution lemma.
\begin{lemma}[Substitution Lemma]\label{lem:sub}
	If there exist derivations  $\Pi \dem \Gamma, ~z:\mC \oder{w} M:\type$, and $\Phi \dem  \oder{v} V:\mC$, 
	then exists a derivation  $\Pi' \dem \Gamma   \oder{w'} M\subs z V : \type$. 
	Moreover 
	  \begin{enumerate}
		\item $w'=w+v$ and 
\item  $\sz{\Pi'} \leq  \sz{\Pi}+ \sz{\Phi}$. 
	\end{enumerate}
	
\end{lemma}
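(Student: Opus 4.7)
The plan is to proceed by structural induction on the derivation $\Pi$, with a case analysis on the last typing rule used. The two base cases to dispatch first are $\TVar$ and $\TZero$. If $\Pi$ ends in $\TVar$ with $M = z$, then necessarily $\Gamma$ is empty at $z$ and the type of $z$ in the leaf is exactly $\mC$, with $w = 0$, so I can simply take $\Pi' \deff \Phi$, yielding $w' = v = w + v$ and $\sz{\Pi'} = \sz{\Phi}$. If $\Pi$ ends in $\TVar$ with $M = x \neq z$, then the rule forces $\mC = [\,]$, so $\Phi$ must have weight $v = 0$ and the substitution is a no-op; I take $\Pi' = \Pi$. The $\TZero$ case is immediate, since $M\subs{z}{V}$ can be typed with $\zero$ at weight $0$ via the same rule.

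For the inductive cases, the main technical device I will rely on is the \emph{Partitioning Intersections} property together with the \emph{Scaling Lemma} (Lemma~\ref{lem:scale}), which together tell me how to split and rescale $\Phi$ according to how $z:\mC$ is distributed among subderivations of $\Pi$. Concretely, I handle the typing rules as follows. For the $\lambda$ rule, the bound variable of the abstraction can be $\alpha$-renamed to avoid $z$, and the IH applied directly to the body. For the $@$ rule, the context of the conclusion is $\Gamma_1 \uplus \Delta_1$, so $\mC$ splits as $\mC_1 \uplus \mC_2$ with each $\mC_i$ appearing in one premise; by Partitioning, $\Phi$ decomposes into $\Phi_1, \Phi_2$ of respective weights $v_1 + v_2 = v$ and sizes summing to at most $\sz{\Phi}$, I apply the IH to each premise, and reassemble with the $@$ rule. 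The $\letr$ rule is analogous but with $|K|$-way splits of both $\mC$ and the relevant subderivations.

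The delicate cases are $\oplus$ and $!$, because their conclusions already \emph{scale} the premise contexts. In the $\oplus$ case, the declared context is $\two\for\Gamma_1 \uplus \two\for\Gamma_2$, so the component $\mC$ at $z$ must be of the form $\two\for\mC_1 \uplus \two\for\mC_2$ where each $\mC_i$ is what the corresponding premise uses for $z$. I first apply Partitioning to $\Phi$ to obtain derivations $\Phi_i \dem \oder{v_i} V : \two\for\mC_i$, then invoke the Scaling Lemma to rescale each $\Phi_i$ to a derivation of $V : \mC_i$ with weight $2v_i$, apply the IH to each premise, and close with the $\oplus$ rule. The $!$ case is treated in the same spirit, with an $|I|$-way partition and rescaling. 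In every case, the arithmetic of weights works out because the $\oplus$, $!$ and $\letr$ rules already combine premise weights with exactly the scalars used to split $\mC$, so the contributions from $\Phi$ reassemble to $v$ and add cleanly to $w$; the size bound $\sz{\Pi'} \leq \sz{\Pi} + \sz{\Phi}$ follows since each subderivation of $\Phi$ is used in at most one substitution site.

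The main obstacle will be making the accounting in the $\oplus$, $!$ and $\letr$ cases precise: one must verify that Partitioning combined with Scaling indeed yields subderivations of $\Phi$ whose weights, once rescaled and recombined according to the outer rule's scalars, reproduce exactly $v$, and similarly for sizes. This is where the fact that $V$ is a \emph{value} (hence subject to Partitioning and Scaling) is essential, and where the restriction that the premise of $!$ and of the first argument of $\letr$ is a value would fail were we substituting an arbitrary term.
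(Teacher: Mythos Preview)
Your proposal is correct and follows essentially the same approach as the paper: induction on $\Pi$, with the Partitioning Intersections property and the Scaling Lemma as the key devices for splitting $\Phi$ across premises in the rules that manipulate contexts multiplicatively.

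One point of imprecision worth flagging: you describe the $\letr$ case as ``analogous to $@$ but with $|K|$-way splits,'' which understates it. In fact $\letr$ is the most delicate case (and the only one the paper spells out in full). The split of $\mC$ is $(|K|{+}1)$-way and \emph{mixed}: the $N$-premise contributes an unscaled piece $\mC_0$, while each $M$-premise contributes a piece $p_k\for\mC_k$ that must be descaled via Lemma~\ref{lem:scale} before the IH applies. The weight bookkeeping then reads $w' = (w_0 + v_0) + \sum_k p_k(w_k + v_k/p_k) + 1 = w + v$, where $v = v_0 + \sum_k v_k$ comes from the partition. Your later remark that ``$\oplus$, $!$ and $\letr$ already combine premise weights with exactly the scalars used to split $\mC$'' shows you have the right intuition, but the $\letr$ case deserves to be written out rather than waved at as a variant of $@$.
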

\begin{proof}
	The proof is by induction on the derivation  $\Pi$; we examine the last rule.
We write only  the  key  cases.	

\begin{itemize}

	\item Rule $\letr$. Assume $\Pi$ is as follows
\[
\infer[\texttt{let}]{z:\mC_0\uplus_k p_k\for \mC_k, \Gamma  \uplus_k  {p_k\for}\Delta_k \oder {w=w_0+ \sum_k {p_k w_k}
		+ {1}} \lett  N M: {\dsum_k~ p_k \bt_k}}
{ z:\mC_0,\Gamma \oder {w_0} N: \dist { p_k \mA_k }_{\kK}   &	(z:\mC_k, \Delta_k,  x:  \mA_k\oder {w_k} M: \bt_k)_{k\in K}	   	}	
\]	
and $\Phi \dem  \oder{v} V:\mC= \mC_0\uplus_k p_k\for \mC_k$.
By \emph{partition and Scaling (Lemma~\ref{lem:scale})}, we have $v=v_0 +\sum_k v_k$ and 
\[   \oder {v_0} V: \mC_0  \quad \Big( \oder{\frac {v_k}{p_k}} V:   \mC_k \Big)  \]

By \ih, we obtain the following derivation
\[
\infer[\texttt{let}]{ \Gamma  \uplus_k  {p_k\for}\Delta_k \oder {w'} (\lett  N M)\subs{z}{V}: {\dsum_k~ p_k \bt_k}}
{ \Gamma \oder {w_0+v_0} N\subs{z}{V}: \dist { p_k \mA_k }_{\kK}   &	(\Delta_k,  x:  \mA_k\oder {w_k +\frac{v_k}{p_k}} M\subs z V: \bt_k)_{k\in K}	   
		}	
\]	
where $ w'=(w_0+ v_0) + \sum_k p_k( w_k + \frac{v_k}{p_k})  +1 $.

We verify that claims 1. and 2. hold.
\begin{enumerate}
	\item $w'= w_0 +  \sum_k p_k( w_k ) +1+  v_0 + \sum_k p_k(\frac{v_k}{p_k}) = w+v$
	\item immediate.
\end{enumerate}

\item Rules $\bm{!}$ and   $\oplus$ also use partition and Scaling (Lemma~\ref{lem:scale}).

\item Rules $\lam$ and $@$ are as usual.
\end{itemize}
\end{proof}

\begin{lemma*}[\ref{lem:SubRed}. Weighted Subject Reduction]
	Suppose that $\Pi\dem \oder{w} P:\bt$, with $w >0$, and that $P \red
	\mdist{q_iP_i}_{\iI}$. Then for every $\iI$ there exists a
	derivation $\Pi_i$ such that $\Pi_{i}\dem\oder {w_i} P_i: \bt_{i}$,
	and $\size{\Pi} > \size{\Pi_i}$. Moreover:
	\begin{enumerate}
		\item
		$\bt= \dsum_{i\in I} q_i \bt_{i }$ 
		\item
		$w=1+ \sum_{i\in I} q_i w_i$		
	\end{enumerate}
\end{lemma*}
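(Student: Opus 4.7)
The plan is to proceed by induction on the derivation of $P \red \mdist{q_i P_i}_{\iI}$, producing one case per rule of Figure~\ref{fig:steps}: the three base cases ($\beta$, $\letr V$, $\oplus$) are handled by inspecting the last rule(s) of $\Pi$ forced by the syntactic shape of $P$, while the inductive case ($\letr C$) applies the inductive hypothesis to the reduction of the let-bound argument and rebuilds $\letr$ derivations around each reduct. In each case the assumption $w>0$ is used to rule out the $\TZero$ rule as the last rule of $\Pi$, after which the shape of $P$ pins down which typing rule must be at the root.

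In the $\beta$ case, $P = (\lam x.M)V$ and $I = \{\ast\}$ with $q_\ast = 1$. The derivation $\Pi$ must end in the $@$ rule with premises $\Gamma \oder{w_f} \lam x.M : [\mA \arrow \bt]$ and $\Delta \oder{w_v} V : \mA$; the left premise is necessarily a $!$ rule wrapped around a single $\lam$-derivation, exposing $\Gamma, x{:}\mA \oder{w_f - 1} M : \bt$. The Substitution Lemma (Lemma~\ref{lem:sub}) then provides a derivation $\oder{(w_f - 1)+ w_v} M\subs{x}{V} : \bt$, giving weight $w_\ast = w - 1$ and type $\bt_\ast = \bt$, which satisfies $\bt = q_\ast \bt_\ast$ and $w = 1 + q_\ast w_\ast$. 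The $\letr V$ case is analogous, the $+1$ now arising from the $\letr$ rule itself rather than from $\lam$. For $\oplus$, $P = M \oplus N \red \mdist{\two M, \two N}$; the derivation must end in the $\oplus$ rule, whose two premises are precisely the required sub-derivations, with weights and types matching the stated equalities by construction of the rule.

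The inductive case $\letr C$ has $P = \lett N M$ with an internal reduction $N \red \mdist{r_j N_j}_{j \in J}$. The derivation $\Pi$ ends in the $\letr$ rule, whose rightmost premise types $N$ with some $\dist{p_k \mA_k}_{k \in K}$ at weight $v$. Applying the IH to this rightmost sub-derivation produces, for each $j \in J$, a derivation $\oder{v_j} N_j : \ct^{(j)}$ together with $\dist{p_k \mA_k}_{k \in K} = \dsum_{j \in J} r_j \ct^{(j)}$ and $v = 1 + \sum_j r_j v_j$. This decomposition induces a partition of the multiset of leftmost premises of the original $\letr$ across $J$; combining each such sub-multiset with $\oder{v_j} N_j : \ct^{(j)}$ via a fresh $\letr$ rule yields the required derivation of $\lett{N_j}{M}$, and the arithmetic of the $\letr$ rule applied pointwise, together with the IH identities, gives the global equalities for $\bt$ and $w$.

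The main obstacle is the bookkeeping in this last case: the IH partitions the distribution $\dist{p_k \mA_k}_{k \in K}$ along the new index set $J$, and the leftmost premises of the original $\letr$---indexed by $K$---must be re-grouped accordingly, each piece reused exactly once inside the corresponding fresh $\letr$ derivation, so that the scale factors, weights, and types line up to give $\bt = \dsum_j r_j \bt_j$ and $w = 1 + \sum_j r_j w_j$. The strict size decrease $\sz{\Pi} > \sz{\Pi_i}$ holds in each case: the base cases consume at least one outermost term constructor (plus, where relevant, a $!$ rule absorbed above it), and the inductive case invokes the IH on a strictly smaller reduction derivation while adding only the unchanged $M$-premises.
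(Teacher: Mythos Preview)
Your proposal is correct and follows essentially the same route as the paper: induction on the derivation of the reduction step, using $w>0$ to exclude $\TZero$ at the root, handling $\beta$ and $\letr V$ via the Substitution Lemma, $\oplus$ by reading off the premises, and $\letr C$ by applying the IH to the typing of $N$ and redistributing the $M$-premises along the induced partition $K=\biguplus_j K_j$ with rescaled probabilities $p_k/r_j$. The only wrinkle in your write-up is the size argument in the inductive case: the strict decrease $\sz{\Pi}>\sz{\Pi_i}$ comes from $\sz{\Phi}>\sz{\Phi_j}$ (given by the IH on the \emph{typing} derivation of $N$) together with $K_j\subseteq K$, not from the reduction derivation being smaller---but this is exactly what the paper does as well.
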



\begin{proof}Observe that $w\not=0$ implies that the last rule of the derivation  $\Pi$ is not a $\TZero$-rule.
	The proof is by induction on the definition   of the reduction step  $P \red \mdist{q_iP_i}_{\iI}$.

	\begin{itemize}
	\item 	 $\bm{\beta}$.  Assume $P:= (\lam x.M)V $ and 
	\[
	\infer[\beta]{(\lam x.M)V\red \mdist{M\subs x V}}{}  
	\]
	We examine the derivation $\Pi$, and conclude by Substitution Lemma.

	\item 	$\bm{\letr V}$.  Assume $P\deff (\lett V M)$ and $P\red \mdist {1P_0}$ as follows
	\[\infer[\letr V]{\lett V M  \red   \mdist{M \subs x V}}{}\]
	By assumption,  there exists $\Pi\dem \oder {w}P:\at$, which must have the following shape 
	
	\[
	\infer[\texttt{let}]{\oder {w_1+w_2  + 1} \lett  V M: \at}
	{\oder {w_2} V:\mdist{\mC} &	  x:  \mC\oder {w_1} M: \at	    	}	
	\]
	
	 By Substitution Lemma, there exists $\Pi' \dem    \oder{w_1+w_2} M\subs x V : \at$. Observe  that $I=\{0\}$ is a singleton, and 
	 $q_0=1$. Conditions 1.,2.,3. are all satisfied.

	\item 	 $\bm{\oplus}$.  Assume $P:= P_1\oplus P_2$ and $P\red\mdist{\two P_1,\two P_2}$ as follows   
	\[\infer[\oplus]{P_1\oplus P_2 \red \mdist{\two P_1, \two P_2}}{}  \]
	By assumption,  there exists a type derivation  $\Pi\dem \oder {w}P_1\oplus P_2:\at$, which must have the following shape 
	
		\[
	\infer[\oplus]{\oder{1+\two w_1 +\two w_2} P_1\oplus P_2: \at=(\two \at_1 \dsum \two \at_2)}
	{ \oder {w_1} P_1: \at_1 &  \oder{w_2}  P_2: \at_2} 
	\]
	All the points in the claim hold.

	\item 	 $\bm{\letr C}$. 
	Assume $P \deff (\lett N M)$ and  $P\red \mdist{q_iP_i}$ , with $P_i \deff  (\lett {N_i} M)$, according to 
	\[ \infer[\letr C]{(\lett N M)  \red \mdist{q_i (\lett {N_i} M)}}{N\red \mdist{q_iN_i}_{\iI}}. \]
	By assumption,  there exists a type derivation $\Pi\dem \oder {w}P:\bt$, which must have the following shape

	\[
	\infer[\texttt{let}]{  \oder {w=v+ \sum_{k\in K} p_k v_k +  1  } \lett  N M: \bt= ({\dsum_{\kK}~ p_k \bt_k})}
	{\Phi\dem  \oder {v} N: \at=\dist { p_k \mA_k }_{\kK}    &	\big( x:  \mA_k\oder {v_k} M: \bt_k\big)_{k\in K}	   	}	
	\]		

By \ih, there exist type derivations  $\Phi_i \dem \oder {v_i} N_i:\at_i $  such that 	$\size{\Phi} > \size{\Phi_i}, ~\forall i\in I$ and moreover
\begin{enumerate}
 \item $\at= \dsum_{\iI} q_i \at_{i }$;
\item   $v=1+ \sum_{\iI} q_i v_i$;
\end{enumerate}
By  point 1. above,   $\dsum_{\iI}q_i \at_i = \dist { p_k \mA_k }_{\kK}$. As $K$ is the index set of $\at$, let $K_i$ be the index set of $\at_i (\iI)$  (notice that for some $i$, it may be  possible  $K_i=\emptyset$). 
We have  $K=\uplus_{\iI} K_i$ and  $q_i \at_i=\mdist{p_k\mA_k}_{k\in K_i}$; therefore $\at_i=\mdist{\frac{p_k}{q_i}\mA_k}_{k\in K_i}$. 
For each $\iI$ we obtain  the following type  derivation $\Pi_i$:

	\[
\infer[\texttt{let}]{  \oder {w_i=v_i+ \sum_{k\in K_i} \frac{p_k}{q_i}v_k   + 1
		   } \lett  {N_i} M: \bt_i= 
	({\dsum_{k\in K_i}~ \frac{p_k}{q_i} \bt_k})}
{\Phi_i\dem  \oder {v_i} N_i: \at_i=\dist { \frac{p_k}{q_i} \mA_k }_{k\in K_i}    &	\big( x:  \mA_k\oder {v_k} M: \bt_k\big)_{k\in K_i}	   	}	
\]		
We check that point 1. and 2. of the claim are verified.
\begin{enumerate}
	 \item $ \dsum_{\iI}  q_i \bt_{i } = \dsum_{\iI}  q_i  ({\dsum_{k\in K_i}~ \frac{p_k}{q_i} \bt_k}) = 
	 \dsum_{\iI}   ({\dsum_{k\in K_i}~ q_i \frac{p_k}{q_i} \bt_k}) =  \dsum_{k\in K}~ {p_k}\bt_k  = \bt$ 
	\item 
	
{\scriptsize 	\begin{align*}
	1+ \sum_{\iI}  q_i w_i \\
	=  1+  \sum_{\iI}  q_i  v_i +
\sum_{\iI}  q_i \big(   \sum_{k\in K_i} {\frac{p_k}{q_i} v_k}   \big)
+ 	 \sum_{\iI}  q_i\\
= (1+  \sum q_i  v_i )+ \sum_i  \big(   \sum_{k\in K_i} {q_i \frac{p_k}{q_i} v_k}   \big)  
+  \sum_{\iI}  q_i \\
= v + \sum_{k\in K}  p_k v_k  +  { \sum_{\iI}  q_i }\\
= v + \sum_{k\in K}  p_k v_k  +  {1}\\
 = w
	\end{align*}}
%
%
%
where 	$\sum_{\iI}  q_i =1$ is by definition of  $P \red
\mdist{q_iP_i}_{\iI}$.
\end{enumerate}
\end{itemize}
\end{proof}

\subsection{Proof of Weighted Subject Expansion (Lemma~\ref{lem:SubEx})}
The proof of Subject expansion relies on an anti-substitution lemma, whose proof is routine.

\begin{lemma}[Anti-substitution] Assume  $\Phi\dem ~ \Gamma \oder {w}  M\subs{x}{V}:\at$, where $V$ is  closed. Then there exist:
	\begin{enumerate}
		\item an intersection type $\mC$;
		\item a derivation  $\Phi_1 \dem \Gamma, x: \mC \oder{w_1} M:\at$
		\item a derivation $\Phi_2 \dem  \oder{w_2} V:\mC$
	\end{enumerate}
	such that 
	\begin{itemize}
		 \item $\at= \dsum q_i \at_{i }$ 
		\item $w= w_1+w_2$
		
	\end{itemize}
\end{lemma}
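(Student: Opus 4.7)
The plan is to proceed by structural induction on the derivation $\Phi$, inspecting the last typing rule used and correlating it with the syntactic shape of $M$. This is the standard dual of the Substitution Lemma (Lemma~\ref{lem:sub}): at every step, we peel off the outer constructor of $M\subs{x}{V}$ and show that it must come from the corresponding constructor of $M$ (unless the rule witnesses a use of $V$, in which case $M=x$).

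The base cases handle the axioms. If $\Phi$ ends with \textsf{TVar}, then $M\subs{x}{V}$ is a variable $y$; since $V$ is closed, the case $M = x$ would force $V = y$, a contradiction, so we must have $M = y \neq x$, and we take $\mC = []$ with $w_1 = w_2 = 0$, using the derived empty \textsf{!}-instance $\oder 0 V:[]$ for $\Phi_2$. If $\Phi$ ends with \textsf{TZero}, we take $\mC = []$ and set $w_1 = w_2 = 0$, again using the empty \textsf{!}-instance for $\Phi_2$ and \textsf{TZero} for $\Phi_1$.

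For the inductive step, I would treat each possible last rule. When it is $\lam$, $@$, $\oplus$, or $\letr$, the outer constructor of $M\subs{x}{V}$ is the same as that of $M$ (none of these conclusions is an abstraction-value that could coincide with $V$ when $M=x$, except the $\lam$-rule, which is handled by the symmetric analysis with the \textsf{!}/\textsf{TVal} cases below). In each such case, I would apply the induction hypothesis to the premises, obtaining intersection types $\mC_j$ together with pairs of derivations whose weights sum correctly, and then reassemble using the same typing rule. The new $\mC$ is the scaled union $\uplus_j u_j \for \mC_j$, where $u_j$ is the scaling factor that the outer rule assigns to the $j$-th premise, and the merging of the various $V$-derivations into a single $\Phi_2: \oder{w_2} V:\mC$ is performed via the \textsf{!}-rule together with the Scaling Lemma~\ref{lem:scale}. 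The additive weight identity $w = w_1 + w_2$ then drops out from the linearity of each rule's weight formula.

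The delicate case, and the expected main obstacle, is the \textsf{!}-rule together with the $\letr$-rule. In the \textsf{!} case, if the conclusion $V' = M\subs{x}{V}$ coincides with $V$ (\emph{i.e.}\ $M = x$), we must bundle all premise derivations of $V$ into a single intersection type $\mC$ and let $\Phi_1$ be an axiom $x:\mC \oder 0 x:\mC$; otherwise $M$ is a value other than $x$ and we combine the inductive hypotheses via \textsf{!} again. In the $\letr$ case, the argument $N$ is typed by a multidistribution $\mdist{p_k\mA_k}_{k\in K}$, and the various premises each carry their own occurrences of $V$; applying the induction hypothesis to each premise yields intersection types $\mC_k$ and a $\mC_0$ for $N$, which must be assembled as $\mC = \mC_0 \uplus_k p_k \for \mC_k$. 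Producing the single derivation $\Phi_2 \dem \oder{w_2} V:\mC$ requires combining many $V$-derivations through \textsf{!} and Scaling, and the careful bookkeeping that the weights collapse to $w_1 + w_2 = w$ mirrors, step by step, the forward calculation carried out in the proof of Lemma~\ref{lem:sub}.
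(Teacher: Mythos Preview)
Your proposal is correct and follows exactly the approach the paper indicates: the paper's own proof is simply ``By induction on $\Pi$. The proof is routine.'' Your elaboration---case analysis on the last rule, handling $M=x$ at the $!$/\textsf{TVal} level, and reassembling the $V$-derivations via partitioning and Lemma~\ref{lem:scale}---is precisely the standard dual of the Substitution Lemma that the paper has in mind, and you have correctly flagged the $\letr$ and $!$ cases as the ones requiring careful bookkeeping.
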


%
%
%

\begin{property}\label{lem:zero_app} If  $\Pi \dem \oder {w+1} \lett N M:\zero$, 
	then $\Pi$ has either of the following shapes:
	
{\footnotesize 	\[ \infer[\letr]{\Pi_i \dem \oder {w+1} \lett  N M:\zero}{\oder w N:\zero}
	\quad \quad\quad\quad
\infer[\texttt{let}]{ \Pi_i \dem 
	\oder {w_i=v_i+ \sum_{k\in K_i} p_k v_k + 1} \lett  {N_i} M: \zero}
{ \oder {v_i} N_i: \dist { p_k \mA_k }_{k\in K_i}    &	\big(   x:  \mA_k\oder {v_k} M: \zero \big)_{k\in K_i}	   	}	
\]	}
\end{property}

\begin{lemma*}[\ref{lem:SubEx}. Weighted Subject Expansion]
	Let $P$ be a closed term.  Assume that $P\red \mdist{q_i P_i}_{\iI}$ and
	that for each $\iI$, $\Pi_i \dem \oder {w_i} P_i:\at_i $. Then, there exists a
	single derivation $\Pi\dem \oder {w} P:\at $ such that $\at=\dsum_i
	(q_i\at_i)$ and {$w\geq 1+\sum q_i w_i$}.
\end{lemma*}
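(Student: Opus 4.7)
The plan is to proceed by induction on the derivation of $P \red \mdist{q_i P_i}_{\iI}$ from the four rules in Figure~\ref{fig:steps}, using an anti-substitution lemma dual to the substitution lemma that underlies subject reduction. Along the induction I also need the shape lemma for derivations concluding $\oder w \lett N M : \zero$ (Property~\ref{lem:zero_app}), so that the $\letr$-based cases can be handled uniformly regardless of whether the output type is null.

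For the $\beta$ case, $P = (\lam x.M)V$, the index set is a singleton with $q_0 = 1$, and the hypothesis is a single derivation $\Pi_0 \dem \oder{w_0} M\subs x V : \at_0$. The anti-substitution lemma splits $\Pi_0$ into a derivation $x:\mC \oder{w_1} M : \at_0$ and $\oder{w_2} V : \mC$ with $w_0 = w_1 + w_2$; I then reassemble by $\lam$ (giving weight $w_1 + 1$ and type $\mC \arrow \at_0$), $!$ used with a singleton so that the resulting value receives the intersection type $[\mC \arrow \at_0]$, and finally $@$ applied to $V$. This yields a derivation of weight $w_0 + 1 = 1 + q_0 w_0$ typing $P$ with $\at_0$. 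The $\letr V$ case is a simpler variant of the same pattern, and the $\oplus$ case is immediate: applying the $\oplus$ typing rule to $\Pi_1, \Pi_2$ directly produces a derivation of $M_1 \oplus M_2$ of weight $1 + \two w_1 + \two w_2$ and type $\two \at_1 \dsum \two \at_2$, matching the statement exactly.

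The genuinely delicate case is $\letr C$: $P = \lett N M$, $N \red \mdist{q_i N_i}_{\iI}$, and each $\Pi_i$ ends (by Property~\ref{lem:zero_app}) either with the $\TZero$-derived variant or with a $\letr$ whose leftmost premises type copies of $M$ under intersection types $\mA_{i,k}$ for $k \in K_i$ and whose rightmost premise types $N_i$ with $\mdist{p_{i,k} \mA_{i,k}}_{k \in K_i}$. The inductive hypothesis applied to $N \red \mdist{q_i N_i}$ combines the $N_i$-subderivations into a single derivation of $N$ typing it with the flattened multidistribution $\mdist{(q_i p_{i,k}) \mA_{i,k}}_{(i,k)}$ and of weight at least $1 + \sum_i q_i v_i$. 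Feeding this together with all of the $M$-premises from every branch into a single application of the $\letr$ rule produces a derivation of $\lett N M$ of type exactly $\dsum_i q_i \at_i$; after unfolding the arithmetic and using $\sum_i q_i = 1$, its weight is at least $1 + \sum_i q_i w_i$, as required.

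The main obstacle is exactly this $\letr C$ step: the reconstruction is legitimate precisely because the multidistribution framework keeps the per-branch intersection types $\mA_{i,k}$ \emph{separate} rather than collapsing them, so the $M$-premises from distinct branches can all be simultaneously reused under a single $\letr$. This is the phenomenon analysed in Section~\ref{sec:multi}, and without it the reconstruction (and hence completeness) would fail.
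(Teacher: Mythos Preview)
Your proposal is correct and follows the paper's approach: induction on the reduction step, anti-substitution for $\beta$ and $\letr V$, direct use of the $\oplus$ typing rule, and in the $\letr C$ case reassembly via a single $\letr$ after applying the inductive hypothesis to $N$. The one point the paper spells out that you leave implicit is what to do when some $\Pi_i$ is the bare $\TZero$ axiom (so $w_i=0$, $\at_i=\zero$, and there is no $N_i$-subderivation inside $\Pi_i$): the paper first replaces such $\Pi_i$ by the $\letr$-shaped derivation $\oder{1}\lett{N_i}M:\zero$ built from $\oder{0}N_i:\zero$ with $K_i=\emptyset$, so that every branch supplies an $N_i$-premise for the inductive step, and this replacement (overapproximating $w_i=0$ by $1$) is precisely why the conclusion is $w\geq 1+\sum_i q_i w_i$ rather than an equality.
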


\begin{proof}The proof is by induction on the definition  of the reduction step $P\red \mdist{q_i P_i}_{\iI}$.

	\begin{itemize}
\item 	 $\bm{\beta}$.  Let $P:= (\lam x.M)V $ and 
\[
\infer[\beta]{(\lam x.M)V\red \mdist{M\subs x V}}{}  
\]

By assumption,  there exists $\Pi'\dem \oder {w'} M \subs x V:\at$. By anti-substitution Lemma, there exist derivations $\Phi_1,\Phi_2$ which allow for the following inference, where $w'=w_1+w_2$ 

\[
\infer[@]{\oder {1+w_1+w_2}  (\lam x.M)V :\at}
{\infer[!]{\oder{1+w_1} \lam x.M:[\mC\arrow \at]}
	{\infer[\lam ]{ \oder{1+w_1} \lam x.M:\mC\arrow \at }  {\Phi_1 \dem x:  \mC\oder {w_1} M: \at	  } }
	&  \Phi_2 \dem \oder {w_2} V:\mC 	}	
\]	

\item 	 $\bm{\letr V}$.  Let  $P:= (\lett V M)$ and \[\infer[\letr V]{\lett V M  \red   \mdist{M \subs x V}}{}\]
{By assumption,  there exists $\Pi'\dem \oder {w'} M \subs x V:\at$.
	By anti-substitution Lemma, there exist derivations $\Phi_1,\Phi_2$ which allow for the following inference, where $w'=w_1+w_2$ 

\[
\infer[\texttt{let}]{\oder {{w_1+w_2 {+1}}} \lett  V M: \at}
{\infer {\oder {w_2} V:\mdist{\mC}}{\Phi_2\dem\oder {w_2} V:{\mC} }    &    \Phi_1\dem	  x:  \mC\oder {w_1} M: \at	   	}	
\]	}

\item 	 $\bm{\oplus}$. Let $P:= P_1\oplus P_2$ and 
\[\infer[\oplus]{P_1\oplus P_2 \red \mdist{\two P_1, \two P_2}}{}  \]
By assumption,  for each $\iI$, there exists $\Pi_i \dem \oder {w_i} P_i:\at_i $. Therefore, we obtain
	\[
\infer[\oplus]{\oder{1+\two w_1 +\two w_2} P_1\oplus P_2: \two \at_1 \dsum \two \at_2}
{ \oder {w_1} P_1: \at_1 &  \oder{w_2}  P_2: \at_2} 
\]

\item 	 $\bm{\letr C}$.
		Let  $P:= (\lett N M)$, $P_i := (\lett {N_i} M)$,  and 
	   \[ \infer[\letr C]{(\lett N M)  \red \mdist{q_i (\lett {N_i} M)}}{N\red \mdist{q_iN_i}_{\iI}} \].
	
	By assumption,  for each $\iI$, there exists a derivation  $\Pi_i \dem \oder {w_i} P_i:\at_i$.
Each	$\Pi_i$  has one of the following shapes:

\begin{enumerate}
	\item \[\infer[\TZero]{ \oder {0}  (\lett {N_i} M):\zero}{}\]
	\item 	
	\[
	\infer[\texttt{let}]{ \Pi_i \dem 
		 \oder {w_i=v_i+ \sum_{k\in K_i} p_k v_k + 1 } \lett  {N_i} M:(  {\dsum_{k\in K_i}~ p_k \bt_k})}
	{\Phi_i\dem \oder {v_i} N_i:   \dist { p_k \mA_k }_{k\in K_i} :=\ct_i   &	(\Theta_k  \dem x:  \mA_k\oder {v_k} M: \bt_k)_{k\in K_i}	   	}	
	\]	
	Note that  $K_i$ may  be $\emptyset$ for some $i$, and that  $\bt_k$ may be $\zero$ for some $k$.
\end{enumerate}

Let us examine the three sub-cases.
\begin{itemize}

\item[(i)]  Suppose that all  $\Pi_i$ are as  (1.). The following $\Pi$  satisfies the claim 
\[\infer{\oder 1  \lett N M :\zero}{\oder 0  N:\zero}\]

\item[(ii)] Suppose all $\Pi_i$ are as (2.)	
By \ih, since $N\red \mdist{q_iN_i}_{\iI}$,  there exists a derivation $\Phi\dem \oder{v} N:\ct $ such that $v \geq 1+\sum q_i v_i$ and
 	$\ct= 
 	\dsum_{\iI} \big(q_i 	\mdist{p_k \mA_k}_{k\in {K_i}}\big) $,  	
 	that is $\ct=  \dsum_{\iI} 	\mdist{q_i p_k \mA_k}_{k\in {K_i}} $ 

 Therefore, by collecting  all premisses $\Theta_k$ for $k\in \biguplus_{\iI} K_i$, we obtain

	\[
\infer[\texttt{let}]{ \Pi \dem  \oder {w} \lett  {N} M: {\dsum_{\iI}~ q_i (\dsum_{k\in K_i}p_k \bt_k)}}
{   \Phi\dem \oder {v} N:  \dsum_{\iI} 	\mdist{q_i p_k \mA_k}_{k\in {K_i}}   &	(  x:  \mA_k\oder {v_k} M: \bt_k)_{k\in \biguplus_{\iI} K_i}	   	}	
\]	
where $w~=~v+ \sum_i  (\sum_{k\in K_i}{q_ip_k}v_k    ) +  1$. Since $v \geq 1+\sum q_i v_i$
the claim holds because:
{\scriptsize \begin{align*}
1+ \sum_{\iI}  q_i w_i =\\
1+  \sum_{\iI}  q_i  v_i +
\sum_{\iI}  q_i \big(   \sum_{k\in K_i} {p_k v_k}   \big) + 	 \sum_{\iI}  q_i  \\
= (1+  \sum q_i  v_i )+ \sum_i  \big(   \sum_{k\in K_i} {q_i {p_k} v_k}   \big)  +   1 \\
\leq  v +   \sum_i  \big(   \sum_{k\in K_i} {q_i {p_k} v_k}   \big)      +  { 1}\\ 
=w
\end{align*}}
where 	  $\sum_{\iI}  q_i =1$ because $P\red \mdist{q_i P_i}_{\iI}$.

%
%
%

%
%

\item[(iii)]  Suppose  that  some, but not all,  $\Pi_i$ are as  (1.). We \emph{replace} those $\Pi_i$ with  the following other type  derivation of $N_i$
\[\infer{\oder {w_i'=1}  \lett {N_i }M :\zero}{\oder 0  {N_i}:\zero}\]
and use  the argument (ii).
In doing so, we are overapproximating the weight $w_i'=0$ of the original derivation $\Pi_i$ with  ${w_i}'=1$, which is fine, since ultimately we want to obtain a derivation $\Pi$ whose weight is $w\geq 1+ \sum_i q_iw_i $.

\end{itemize}
Case (ii) is exactely the reason why we have $w\geq 1+\sum q_i w_i$ instead of $w= 1+\sum q_i w_i$.
\end{itemize}

\end{proof}

\section{More Variations and  Design Choices}

\subsection{How much to increment the counter?}

As we already observed, in the typing system, the $\letr$-rule increase the counter of $1$ (to reflect a $\letr C$-step) even when such a step never happen. 
As a consequence, in  Weighted Subject Reduction, and therefore in the Completeness Theorem, there is a slight asymmetry:
for  $k\in Nat$, there is a proof $\Pi\dem \oder w M:\at $ such that (1) $ \norm \at = \eval k M $ and (2) $w \geq \etime k M$.

It is indeed possible to give a rule which accurately reflects all  steps, as follows.
\[
\infer[\letr]{\Gamma  \uplus_k  p_k\for\Delta_k \oder {w+ \sum_k p_k w_k+ \RED{ \sum_k p_k}} \lett  N M: {\dsum_k~ p_k \bt_k}}
{\Gamma \oder {w} N: \dist { p_k \mA_k }_{\kK}  	  &	(\Delta_k,  x:  \mA_k\oder {w_k} M: \bt_k)_{k\in K}	   }	
\]	
We have decided to privilege simplicity.

Choosing the  accurate counter $w+ \sum_k p_k w_k+ \RED{ \sum_k p_k}$ would lend, both in the Completeness Theorem and in Weighted Subject  Expansion, 
an equality for both $\at$ and $w$.

\begin{example}[Two ways of counting] This example pinpoints the technical difference between the two design choices for the counter in the  $\letr$-rule, and the consequence in Weighted Subject Reduction. The difference appears when we study the  $\letr C$ case in the proof of Subject Reduction, precisely the case (iii).
	 
Consider the case where $\lett NM\red \mdist{\two \lett {N_1} M, \two \lett {N_2} M}$. $\Pi_1$ and $\Pi_2$ are
	\[ \Pi_1\dem \oder {v_1=0} \lett {N_1}M:\zero    \quad\quad    \Pi_2 \dem  {\oder{v_2=u+1}  \lett {N_2}M:\zero} \]
	
	Therefore \blue{$\two v_1+\two v_2+1= \frac{u}{2}+\two +1$.}
	
	Desired claim: there exists a derivation  $\Pi\dem \oder {w} \lett {N}M:\zero $ where $w= \two v_1+\two v_2+1$
	\begin{enumerate}
		\item {\textbf{Counter +1}}. $\Pi_1$ and $\Pi_2$ are as follows:
		\[ \oder {0} \lett {N_1}M:\zero    \quad\quad    \infer {\oder{u+1}  \lett {N_1}M:\zero}{\oder {u} N_2:\zero} \]
		
		By \ih (using $\oder 0 N_1:\zero$) there is a derivation of $\oder {\frac{u}{2}+1} N:\zero$, from which we obtain 
		\begin{center}
			$ 	\infer {\oder{w=\frac{u}{2}+1+1}  \lett {N}M:\zero}{\oder {\frac{u}{2}+1}  N:\zero}  $
			
		\end{center}
		{	That is, $w>  \two v_1+\two v_2+1$ }
		
		\item {\textbf{Accurate counter}}.
		$\Pi_1$ and $\Pi_2$ are as follows:
		\[ \infer {\oder{0}  \lett {N_1}M:\zero}{\oder {0} N_1:\zero}    \quad\quad    \infer {\oder{u+1}  \lett {N_2}M:\zero}{\oder {u+1} N_2:\zero} \]
		
		By \ih (using $\oder 0 N_1:\zero$) there is a derivation of $\oder {\frac{u}{2}+\two +1} N:\zero$, from which we obtain 
		\[\infer {\oder{w=\frac{u}{2}+\two+1}  \lett {N}M:\zero}{\oder {\frac{u}{2}+\two +1}  N:\zero} \]
		
		That is, $=  \two v_1+\two v_2+1$.
		
	\end{enumerate}
	
\end{example}

\end{document}